\documentclass[10pt,letter,journal]{IEEEtran}
\usepackage{subcaption}
\usepackage{cite}
\usepackage{amsmath,amsthm,amssymb,amsfonts}
\usepackage{lipsum}
\usepackage{bbm}

\usepackage{mathtools}
\usepackage{bm}
\usepackage{algorithmic}
\usepackage{graphicx}
\usepackage[ruled,vlined]{algorithm2e}
\usepackage{textcomp}
\usepackage{tikz}
\usepackage{xcolor}
\usepackage{svg}
\usepackage{flushend}
\usepackage{balance}
\usepackage[spaces,hyphens]{url}
\renewcommand{\baselinestretch}{0.97}
\theoremstyle{plain}
\newtheorem{lemma}{Lemma}
\newtheorem{corollary}{Corollary}
\newtheorem{theorem}{Theorem}
\theoremstyle{definition}
\newtheorem{definition}{Definition}
\newtheorem{assum}{Assumption}
\newtheorem{remark}{Remark}
\newtheorem{proposition}{Proposition}
\newcommand{\remove}[1]{}

\usepackage{booktabs}

\begin{document}
\title{Whittle Index Policy for Three State Collapsing Bandits}
\author{Rahul Meshram,  Anu Krishna and Kesav Ram Kaza
\thanks{The work of Rahul Meshram is supported through IITM NFIG Grant, IITM NFSG Grant, and ANRF grant Project No EEQ/2021/000812. 
}
\thanks{Rahul Meshram and Anu Krishna is with the Department of Electrical Engineering , IIT Madras, Chennai, India. (e-mail: rahulmeshram@ee.iitm.ac.in, krishna.anu88@gmail.com) }
\thanks{Kesav Ram Kaza is 
now with the Department of Information Technology, IIIT Allahabad, Prayagraj, India. (e-mail: krkaza@iiita.ac.in).}}

\maketitle

\begin{abstract}
Partially observable restless multi-armed bandit models have been studied for dynamic resource allocation problems under uncertainty. This paper analyzes three-state restless multi-armed bandits with partial observability. These are called collapsing bandit problems if the state of an arm (bandit) is perfectly observable when it is scheduled for allocation. The state is not observable when the arm is not scheduled. The planner has a budget constraint on the scheduling decisions in each time slot. The planner's goal is to maximize expected cumulative long term discounted reward. We study the Whittle index policy. To analyze index policy, first  the $N$-armed collapsing bandit problem is decoupled into $N$ single arm problems, and convexity and monotonicity of the optimal value function in belief vector are derived using monotone likelihood ratio (MLR) ordering on state transition dynamics and monotone reward structure assumption.

We next show that the  optimal policy is of threshold type in belief under sufficiency conditions on the model. These sufficiency conditions depend on Lipschitz bound on the value function in the belief. 
We further prove the indexability for an arm using Lipschitz bound on subsidy sensitivity with minimal restrictive assumption on the transition probability model and discount factor $\beta.$ Our results generalize to three state and strengthen existing results on threshold policy and indexability with weaker model assumptions.
We provide index computation algorithm and we present comparison of the Whittle-index, myopic policy and round-robin policies in the simulation study.


\end{abstract}

\begin{IEEEkeywords}
 Dynamic programming, Partially observable Markov decision processes, Restless multi-armed bandits,  Stochastic approximation. 

\end{IEEEkeywords}

\section{Introduction}

Resource allocation under uncertainty arises frequently in applications operating within dynamic environments. Restless multi-armed bandits offer a sequential decision-making framework that has been studied and applied to resource allocation problems across several domains including opportunistic wireless communication systems \cite{Ahmad09, LiuZhao10}, uncrewed aerial vehicle scheduling \cite{jleny:2008:phdthesis}, machine maintenance planning \cite{dayanik2008index, glazebrook2014stochastic}, oil exploration \cite{brown2013optimal}, recommendation systems \cite{Meshram17a}, public health intervention planning \cite{Mate2021Bandit,nino2026adherence}, wildfire management \cite{kaza2024constrained}, among others. 


RMABs are a class of sequential decision-making problems in which the planner schedules a collection of ``arms'' over time. Each arm's state evolves according to action-dependent Markovian dynamics, and the planner must operate under a budget constraint, typically integer-valued in nature. This constraint renders the problem inherently challenging to solve. RMABs were first introduced by Whittle in \cite{Whittle88}, who proposed an index-based policy that has since come to be known as the Whittle index policy. A key requirement underlying this approach is indexability, a condition that must be satisfied by each individual arm. RMABs involve arms with only two available actions—play or not play—with the state of each arm fully observable to the planner. More generally, RMABs are known to be PSPACE-hard \cite{Papadimitriou99}. Despite being a heuristic, the Whittle index policy has nevertheless been shown to achieve optimality under asymptotic conditions across a variety of settings.

RMABs have found application in public health intervention planning \cite{Mate2020Bandit} and have been validated in field trials \cite{Mate2021banditField}. This line of work is motivated by the observation that intelligently schedule healthcare interventions can improve patient adherence to medication regimens for conditions such as diabetes, hypertension, tuberculosis, cancer, etc. The underlying goal is to preserve patients' health through prevention and early diagnosis, while sustaining adherence to the prescribed treatment protocol. In  \cite{Mate2021banditField}, the authors applied a two-state, two-action model, where the two states denote adherence and non-adherence of an individual, while the two actions denote whether an individual is contacted for counseling/information collection or not. Field trials showed the effectiveness of the RMAB model for scheduling call to members across the target population.

A finite state representation of an individual health with more than two states might allow us to capture different levels of severity of  health/ill health, which is not possible using a two-state model. In view of furthering the applicability of RMABs to healthcare planning, in this paper we analyze a three-state two-action RMAB model.

\subsection{Related Work}
%
We shall briefly review the work on partially observable RMABs (PORMAB). The literature on PORMABs has mostly focused on the analysis of two-state models owing to their tractability relative to multi-state models. In \cite{LiuZhao10}, authors studied two state PORMABs where the state of an arm (bandit) is observable when it is scheduled and not observed when it is not scheduled. They characterised the  structure of the optimal value function and showed the indexability under conditions on state transition dynamics. In \cite{Wang13,Wang14},  the optimality of myopic policy is studied under some strict model structure valid for some opportunistic communication applications. In \cite{Meshram18},  a model studied where uncertainty about an arm's state persists even when the arm is scheduled, although it is lower than when the arm is not scheduled. In \cite{kaza2019whittle}, authors generalized this model by allowing multiple state transitions in each decision interval (time slot). There is an imperfect observation signal, a \textit{cumulative feedback} about the effect of these state transitions at the end of the decision interval. In \cite{dayanik2008index} and  \cite{kaza2024constrained}, these work considered another generalization of the model where arms are constrained with stochastic availability. In these models \cite{Meshram18,kaza2019whittle}, the planner has perennial uncertainty (fluctuates based on the decision sequence but always exists) about the state. In  \cite{Mate2020Bandit} studied a two-state model in the context of a health care application and renamed these bandits as \textit{collapsing bandits} as the uncertainty about state collapsed when an arm is scheduled. Following \cite{Mate2020Bandit}, we adopt the term \textit{collapsing bandits}, motivated in part by their applications in healthcare.


In the work of \cite{LiuZhao10}, model is collapsing bandit with state perfectly observable for active action, otherwise no state observation, they studied for two state model. Here, transition probabilities for both active and passive action are same and this assumption simplifies the analysis of threshold type policy, indexability and explicit index formula is derived. 
Later in \cite{Mate2020Bandit},   collapsing bandit model with two state is studied with different transition probability matrices for active and passive action. It increases analytical difficulty, the threshold type policy and indexability is studied with  assumption on transition probabilities and discount parameter $\beta.$ Their results do not hold true for $\beta \rightarrow 1.$ They  need stronger conditions on transition probabilities and discount factor $\beta.$
In the work of \cite{Meshram18}, authors studied two state model with state never observable for active action, and passive action. The stronger conditions on transition probabilities are introduced for threshold type policy and further indexability requires a discount factor $\beta <1/3.$ This model is a generalization of both models in  \cite{LiuZhao10} and \cite{Mate2020Bandit}. These work uses value function based Whittle index and indexability approach.
In \cite{nino2026adherence}, authors  considered two-state collapsing bandits for the treatment adherence outreach problem under capacity constraints. They used partial conservation laws and studied PCL indexability and provide PCL conditions for indexability and threshold structure.

Recently, work of \cite{Liu2025relaxed} studied multi state collapsing bandit model, it is an extension of \cite{LiuZhao10} to finite state (more than two state) model. The indexability is proved with restriction on discount factor $\beta < 0.5.$ The index formula is derived and relaxed indexability is introduced. 
In \cite{Akbarzadeh2022conditions} studied a PO-RMAB (finite state and two actions) model with a \textit{controlled restart} assumption,  In these models, they require stronger condition on  transition probabilities and discount factor $\beta$ for indexability. 
Based on this, the  indexability is proved and index computation algorithm is developed. 

All these work \cite{Akbarzadeh2022conditions, LiuZhao10, Liu2025relaxed} suggest that there is trade-off between model assumptions on transition matrices and the discount factor for indexability results in partially observable RMAB. Further, for more than two state models, we require stronger conditions on model or $\beta$ for indexability.

In this paper, we study indexability for collapsing bandit with three state having TP2 transition matrices, we provide indexability result and demonstrate trade-off between transition probability model assumption and discount factor $\beta.$ We further allow $\beta \rightarrow 1.$ Our results therefore broaden the range of models and discount factors for which indexability can be established.


%
\subsubsection{Our Contributions}
The main contributions of this paper are as follows. 
\begin{enumerate}
    \item We formulate the problem of collapsing restless multi-armed bandits for three state model and it is partially observable RMAB. In the collapsing bandit model,  active action reveals the state while while passive action do not reveal state. 
    
    \remove{\item We decouple the joint RMAB problem into $N$ single armed restless bandit using Lagrangian relaxation method.}
    
    \item  We provide the structure results for a single-armed restless bandit-convexity of value functions for passive action, active action, an optimal value function. Later, monotonicity of optimal value function is proved under assumption on monotonicity and TP2 on transition matrices and Lipschtiz bound is derived for value function. 
    \item Under MLR ordering conditions on the
    state-transition dynamics, a suitable monotonicity condition
    on the reward structure, and a Lipschtiz bound, we establish sufficient conditions under
    which an optimal policy has a threshold-type structure over the
    belief space. These results characterize the structure of the
    optimal action regions in the three-state setting. 
    \item We provide a Lipschitz bound on subsidy sensitivity and using it, we show sufficiency condition for indexability of a single-armed restless bandit.  The sufficient
    condition is expressed in terms of the discount factor and the
    discrepancy between the beliefs induced by activation and those
    predicted under passive evolution. This provides a verifiable
    criterion for indexability in the three-state collapsing-bandit
    model. We also establish indexability for the birth-dealth model and the restart model.
    
    \item The Lipschtiz bound in subsidy needs the differentiability of value function in subsidy and we provide results with differentiability and separately for non-differentiability points. In derivation of these bounds, we extensively use property of convexity of value function in subsidy, use Hausdorff distance  for comparison at the point of non-differentiability of value function, left and right derivative. This is one of the important result in our paper. We further provide insights on  discount factor approaches to $1,$ and the condition on the model.    
    
    \item We present an index computation scheme based on two-timescale stochastic approximation algorithm and we also discuss bisection method for index computation. Both algorithms require value function to run on faster timescale. Index computation is faster in bisection as it uses successive interval halving.
    \item 
    We evaluate the performance of the Whittle-index policy against myopic (greedy) and round-robin policies via numerical simulations with homogeneous and heterogeneous arms.  We observe that the Whittle index policy performs better than myopic and round-robin policies. 
\end{enumerate}

Our novel contribution in paper is that we established threshold type policy in belief and proving indexability using Lipschitz bound on value function in belief and Lipsichtz bound on subsidy sensitivity with minimal restrictive assumption on the transition probability model and discount factor $\beta.$ 

The paper is organized as follows. In section \ref{sec:model}, we formulate collapsing bandit model and obtain the optimal dynamic program. Section \ref{sec:decoupled-RMAB} studies decoupled RMAB problem. The value function properties and optimal policy of threshold type for a single-armed restless bandit is developed in section~\ref{sec:SAB}. Indexability and Whittle index policy is studied in section \ref{sec:Whittle-index}. Numerical examples are presented in section~\ref{sec:sim-example} and concluding remark in section~\ref{sec:conclusion}.


\section{Mathematical Model}
\label{sec:model}

Consider a \textit{partially observable $N$-armed  restless bandits}, where arms are assumed to be independent. Each arm, say $n$, $1 \leq n \leq N$, is a partially observable Markov decision process (POMDP) and it is denoted by  
$\mathcal{M}_n = \{\mathcal{S}_n, \mathcal{A}_n, \mathcal{P}_n, \mathcal{R}_n, \mathcal{O}_n, \mathcal{Z}_n, \beta\}.$
We assume that each arm has three states and there are two actions, i.e., active $1$ and passive $0.$ Hence, the state space is $\mathcal{S}_n = \{0,1,2\},$
and action space 
$\mathcal{A}_n = \{0,1\}.$
For arm $n$, $\mathcal{P}_n^{a} = \big[p^a_{n,ij}\big],$ represents the transition probability matrix  under action $a,$  where $p^a_{n,ij}$ denotes the probability of transitioning from state $i$ to state $j$ upon taking action $a \in \mathcal{A}_n$.

Time proceeds in discrete steps, it is  denoted as $t$. An arm $n$ occupies state $s_n(t) \in \mathcal{S}_n$, and the planner chooses an action $a_n(t) \in \mathcal{A}_n$ at time $t$. This yields a reward $r(s_n(t),a_n(t)),$  where $\mathcal{R}_n$ is the associated reward matrix. $\beta$ is the discount factor and $0 <\beta <1.$

The planner does not observe the state $s_n(t)$ for arm $n$ if the action is passive, i.e., $a_n(t) = 0.$ While the planner can observe the state $s_n(t)$ if the action is active, i.e., $a_n(t) = 1.$
Hence the observations from arm $n$ is given by
$\mathcal{O}_n = \{0,1, 2\}$ after taking an action $a_n(t)=1.$ 


When the arm is passive, the current state is not observed. Hence the planner keeps track of a \textit{belief} vector which is probability distribution over the state space. This belief is updated based on the observation, according to Bayes rule. Arm $n$'s belief at time $t$ is defined as
$\omega_n^s(t)=\Pr(s_n(t)=s \mid H(t),\bm{\omega}_n(0)),$
starting from the initial belief $\bm{\omega}_n(0)=[\omega_n^0(0),\omega_n^1(0),\omega_n^2(0)]^T$. Since it is a probability mass function, belief $\omega_n:= [\omega_n^s]_{s\in\mathcal{S}_n}$ obeys $\lVert \omega_n(t) \rVert_1 = \sum_{s} \bm{\omega}_n^s(t) =1.$
For ease of reference, the main notation used throughout the paper is summarized in Table~\ref{tab:notation}.
\begin{table*}[t]
\caption{Notation used in the paper}
\label{tab:notation}
\centering
\renewcommand{\arraystretch}{1.15}
\begin{tabular}{ll}
\hline
\textbf{Symbol} & \textbf{Description} \\
\hline
$N$ & Number of arms in the restless multi-armed bandit system. \\

$\mathcal S_n$ & State space of arm $n$, with
$\mathcal S_n=\{0,1,2\}$. \\

$\mathcal A$ & Admissible set of action space \\

$\mathcal A_n$ & Set of action from $n$\\

$\phi$ & Policy\\
$\Phi$ & Set of policies\\
$P_n^a$ & State-transition matrix of arm $n$ under action $a$. \\

$\beta$ & Discount factor, with $0<\beta<1$. \\

$B$ & Maximum number of arms that can be activated in each time slot. \\

$o_n(t)$ & Observation obtained from arm $n$ at time $t$. \\

$\mathcal O_n$ & Set of observations available from arm $n$. \\

$\bm{\omega}_n(t)$ & Belief vector over the states of arm $n$ at time $t$. \\

$\Pi$ & Belief simplex for a three-state arm,
$\Pi=\{\bm{\omega}\in\mathbb R^3:
\omega_i\geq 0,\ \sum_{i=0}^{2}\omega_i=1\}$. \\

$\Omega$ & Initial joint belief state of all arms,
$\Omega=(\bm{\omega}_1,\ldots,\bm{\omega}_N)$. \\

$\tau_n(\bm{\omega},o,a)$ & Updated belief of arm $n$ after action $a$
and observation $o$. \\

$\Gamma_{n,i}$ & Posterior belief following activation and observation
of state $i$ for arm $n$. \\

$\lambda$ & Lagrange multiplier, interpreted as the subsidy for
keeping an arm passive. \\

$\mathcal B(\lambda)$ & Passive set at subsidy $\lambda$,
$\mathcal B(\lambda)=\{\omega\in\Pi:
\Delta(\omega,\lambda)\geq 0\}$. \\

$\delta(P^0)$ & Dobrushin contraction coefficient of the passive
transition matrix $P^0$. \\

$c$ & Contraction bound used in the analysis of the
belief-state value function. \\

$\Gamma_s$ & Posterior belief after activating the arm and observing state $s$\\

$D_{\Gamma}$ & denote the diameter of the post-activation belief\\

$D_{\Gamma,\tau}$ & Maximum discrepancy between the
activation-conditioned posterior beliefs and the passively
predicted belief, as defined in the indexability analysis. \\

$G_\lambda(\omega)$ & Subsidy sensitivity of the optimal value
function at belief $\omega$, where the derivative exists. \\

$L_R$ & Half the maximum reward span across the two actions\\

\hline
\end{tabular}
\end{table*}

\subsection{Belief Update}

For arm $n$, suppose  $a_n(t)=1$, observation $o_n(t)=k$ (state $s_n(t) = k$ is observed) is received, and the belief prior to the update is $\bm{\omega}_n(t),$ then the updated belief is $\bm{\omega}_n(t+1) = \Gamma_{n,k},$ for $k \in \{0,1,2\}$. Recall that $p_{n,ki}^{1}$ denotes the probability of transitioning from state $k$ to state $i$ under action $1$.
 So $\forall k \in \{0,1,2\},$  $\Gamma_{n,k} = [p_{n,k0}^{1}, p_{n,k1}^{1}, p_{n,k2}^{1}]^\top$. 
When $a_n(t) = 0$ for  arm $n$, the arm is not being interacted with, so no observation becomes available. In this case, we take the belief to evolve naturally on its own. Thus, $\bm{\omega}_n(t+1)  = (P_n^0)^{T} \bm{\omega}_n(t). $
\subsection{Problem Formulation}
A policy $\phi$ maps histories to actions:
$\phi: H(t)\to \{a_1,a_2,\dots,a_N\},$
where the history accumulated through time $t$ is $H(t)=\{\mathbf{a}(0),\mathbf{o}(0),\dots,\mathbf{a}(t-1),\mathbf{o}(t-1)\},$
using $\mathbf{a}(t)=\{a_1(t),\dots,a_N(t)\}$ actions of arms and $\mathbf{o}(t)=\{o_1(t),\dots,o_N(t)\}$ observation of arms.
$o_n(t)\in\mathcal O_n\cup\{\varnothing\},$ where,

$o_n(t)=
\begin{cases}
s_n(t), & a_n(t)=1,\\
\varnothing, & a_n(t)=0.
\end{cases}$

Under policy $\phi$, the total discounted reward over an infinite horizon with initial state $\mathbf{s} = (s_1,s_2, \cdots, s_N)$ is
\[
V_{\phi}(\mathbf{s}) = \mathbb{E}_{\phi}\!\left[\sum_{t=0}^{\infty} \sum_{n=1}^{N} \beta^{t}\, r_n(s_n(t),a_n(t))~\bigg \vert~ \mathbf{s}\right],
\]
where the budget constraint must hold at every time step
\[
\sum_{n=1}^{N} a_n(t) \leq B, \quad \forall\, t \geq 0,
\]
where $B$ represents  the available budget.
Arm $n$'s expected reward is given by
\[
R_n(\bm{\omega}_n(t),a_n(t)) 
= \sum_{s \in \mathcal{S}_n} \omega_n^s(t)\, r_n(s,a_n(t)).
\]
The set of admissible joint actions is
\begin{eqnarray*}
\mathcal{A} = \Big\{ \bm{a}(t)=(a_n(t))_{n=1:N} : a_n(t)\in\{0,1\},\; \\
\sum_{n=1}^N a_n(t)\leq B \Big\}.
\end{eqnarray*}

Given the belief state $\Omega=(\bm{\omega}_1,\dots,\bm{\omega}_N)$, the cumulative discounted value achieved under policy $\phi$ equals
\[
V_\phi(\Omega) = \mathbb{E}_\phi\!\left[\sum_{t=0}^{\infty}\sum_{n=1}^N \beta^t R_n(\bm{\omega}_n(t),a_n(t))~\bigg \vert~\Omega\right].
\]
The goal is to maximize the value function for given belief state and hence we have 
\[
V(\Omega)=\max_\phi V_\phi(\Omega).
\]
Recall that $\mathcal{O} \in \{0,1,2,\varnothing\}^N$.
The  dynamic programming formulation is as follows.
\begin{eqnarray*}
V(\Omega)=\max_{\bm{a}\in\mathcal{A}}
\Bigg[\sum_{n=1}^N R_n(\bm{\omega}_n,a_n)+\beta
\sum_{\bm{o}\in\mathcal{O}} V(\tau(\Omega,\bm{o},\bm{a}))\, \times \\
\Pr(\bm{o}\mid \Omega,\bm{a})\Bigg],
\end{eqnarray*}
with $\Omega'=\tau(\Omega,\bm{o},\bm{a})$.
Since the arms operate independently,
\[
\Pr(\bm{o}\mid\Omega,\bm{a})=\prod_{n=1}^N \Pr(o_n\mid \bm{\omega}_n,a_n),
\]
where $\Omega'=(\bm{\omega}_1',\dots,\bm{\omega}_N')$ and each component updates as $\bm{\omega}_n'=\tau(\bm{\omega}_n,o_n,a_n)$. Also, $\Pr(o_n=k\mid\bm\omega_n,1)=\omega_n^k$ and $\Pr(o_n=\varnothing\mid\bm\omega_n,0)=1.
$

This problem can be cast as a constrained POMDP (belief state MDP) problem by considering all the arms 
together with a resulting high dimensional state space and action space.  The computation of the optimal value function is hard \cite{papadimitriou1994computational}. 

\section{Decoupling of restless multi-armed bandit}
\label{sec:decoupled-RMAB}
We introduce a relaxed constraint version of the problem and then apply Lagrange relaxation. The relaxed constraint version of the problem is as follows.
\begin{align}\label{equation: Relaxed Problem}
    V(\Omega) = \max_{\phi}\mathbb{E}_\phi\!\left[\sum_{t=0}^{\infty}\sum_{n=1}^N \beta^t R_n(\bm{\omega}_n(t),a_n(t))\right], \nonumber \\
    \text{s.t.  } \mathbb{E}_{\phi}\left[\sum_{t=0}^{\infty}\beta^{t} \left(\sum_{n=1}^{N} a_n(t) \right)\right]\leq \frac{B}{1-\beta}.
\end{align}

Now, the Lagrangian relaxation of \eqref{equation: Relaxed Problem} is as follows.
\begin{align}\label{equation: L Relaxed Problem}
     V^{\lambda}(\Omega) = \max_{\phi}\mathbb{E}_\phi &\!\left[\sum_{t=0}^{\infty}\left(\sum_{n=1}^N \beta^t R_n(\bm{\omega}_n(t),a_n(t)) \right)\right] + \nonumber \\ &\lambda  \mathbb{E}_{\phi}\left[\sum_{t=0}^{\infty}\beta^{t} \left(B -  \sum_{n=1}^{N} a_n(t)  \right)\right],\nonumber \\
     & \text{s.t.  }\lambda \geq 0.  
\end{align}
From \eqref{equation: L Relaxed Problem}, the dynamic program formulation can be written as
{\small{
\begin{eqnarray}
    V^{\lambda}(\Omega) = \max_{\bm{a} \in \{0,1\}^N} \left\{ \sum_{n=1}^{N} R_n(\bm{\omega}_n,a_n) + \lambda\left(B - \sum_{n=1}^{N}a_n\right)
    \right. \nonumber \\ \left. 
    + \beta \sum_{\bm{o} \in \mathcal{O}} V^{\lambda}(\tau(\bm{\omega},\bm{o},\bm{a})) \prod_{n=1}^{N} \Pr(o_n ~|~\bm{\omega}_n,a_n)
    \right\}.
    \label{eqn:DP-1}
\end{eqnarray}
}}

Here, $\bm{a} = (a_1,\cdots,a_n,\cdots,a_N),$ $a_n \in \{0,1\}$ and $\lambda\geq 0.$ Note that $V^{\lambda}(\Omega)$ is the  optimal value function  for joint belief state $\Omega.$

Define the probability simplex  $\Pi = \{ \bm{\omega}_n: 0 \leq \bm{\omega}_n^s \leq 1,   \sum_{s} \bm{\omega}_n^s =  1 \}$  $1\leq  n \leq N.$

 From~\eqref{eqn:DP-1}, we obtain the following lemma.

\begin{lemma}[Decoupled RMAB]
The optimal value function $V^{\lambda}(\Omega)$ has the form:
\begin{eqnarray}
   V^{\lambda}(\Omega) = 
   \sum_{n=1}^{N} V_n^{\lambda}(\bm{\omega}_n) + \frac{(B-N) \lambda}{1-\beta},
   \label{eqn:DP-2}
\end{eqnarray}   
where, 
\begin{eqnarray}
    V_n^{\lambda}(\bm{\omega_n}) = 
    \max_{a_n \in \{0,1\}} 
    \left\lbrace
    R_n(\bm{\omega}_n,a_n) + \lambda(1- a_n) +
      \right. \nonumber \\ 
      \left. \beta \sum_{o_n \in \mathcal{O}_n}
    V_{n}^{\lambda}(\tau(\bm{\omega}_n,o_n,a_n)) \Pr(o_n~| \bm{\omega}_n,a_n)
    \right\},
    \label{eqn:val-func-lagrangge-SARB}
\end{eqnarray}
  $V_{n}^{\lambda}(\bm{\omega_n})$ is  the optimal value function for arm $n,$ $\bm{\omega}_n \in \Pi.$
\label{lemma:val-fun-decouple}
\end{lemma}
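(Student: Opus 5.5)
The plan is a guess-and-verify argument: propose the separable form, check that it solves the joint dynamic program \eqref{eqn:DP-1}, and invoke uniqueness of the fixed point. First I rewrite the subsidy term as $\lambda(B-\sum_n a_n) = \lambda(B-N) + \sum_n \lambda(1-a_n)$. The constant $\lambda(B-N)$ is earned every slot regardless of the action, so it contributes $\lambda(B-N)/(1-\beta)$ to the discounted value. The remaining per-stage reward $\sum_n [R_n(\bm{\omega}_n,a_n)+\lambda(1-a_n)]$ is additive across arms. Note that the relaxed maximization is over $\bm{a}\in\{0,1\}^N$ with no coupling constraint.

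Next I take the candidate $W(\Omega) := \sum_n V_n^{\lambda}(\bm{\omega}_n) + \frac{(B-N)\lambda}{1-\beta}$ and substitute it into the right-hand side of \eqref{eqn:DP-1}. Two facts make the expectation factor. The observation kernel is a product, $\Pr(\bm{o}\mid\Omega,\bm{a})=\prod_n \Pr(o_n\mid\bm{\omega}_n,a_n)$, and the belief update is componentwise, $\bm{\omega}_n'=\tau(\bm{\omega}_n,o_n,a_n)$. Hence, for each $n$,
\begin{equation*}
\sum_{\bm{o}} V_n^{\lambda}(\tau(\bm{\omega}_n,o_n,a_n))\prod_m \Pr(o_m\mid\bm{\omega}_m,a_m) = \sum_{o_n} V_n^{\lambda}(\tau(\bm{\omega}_n,o_n,a_n))\Pr(o_n\mid\bm{\omega}_n,a_n),
\end{equation*}
since the other factors marginalize to one. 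The constant term passes through the expectation, and $\beta\cdot\frac{(B-N)\lambda}{1-\beta} + \lambda(B-N) = \frac{(B-N)\lambda}{1-\beta}$. The bracket is then a sum over $n$ of terms, each depending only on $a_n$. Because the feasible set $\{0,1\}^N$ is a product set, the maximum of the sum equals the sum of the per-arm maxima. By \eqref{eqn:val-func-lagrangge-SARB}, each per-arm maximum is exactly $V_n^{\lambda}(\bm{\omega}_n)$, so the right-hand side returns $W(\Omega)$, and $W$ is a fixed point of the joint Bellman operator.

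The final step uses uniqueness. Rewards are bounded on the finite state space, $\lambda$ is fixed, and $0<\beta<1$. Under these conditions the joint Bellman operator in \eqref{eqn:DP-1} and each single-arm operator in \eqref{eqn:val-func-lagrangge-SARB} are $\beta$-contractions in the sup norm over bounded functions on $\Pi^N$ and $\Pi$, respectively. So each $V_n^{\lambda}$ exists and is unique, and since $W$ is a bounded fixed point of the joint operator, $W=V^{\lambda}$. The main technical point is this well-posedness: one must confirm that the value of the Lagrangian problem \eqref{equation: L Relaxed Problem} equals the fixed point of \eqref{eqn:DP-1}. This is standard for discounted belief-state MDPs with bounded reward and finitely many actions, since optimal stationary policies exist on the belief space. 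If more care is needed, I would instead argue by value iteration: start from $0$ and show by induction that the $k$-step joint iterate equals the sum of the per-arm $k$-step iterates plus $\lambda(B-N)\sum_{t<k}\beta^t$, then pass to the limit.
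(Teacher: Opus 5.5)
Your proposal is correct and follows essentially the same route as the paper. Both arguments substitute the separable candidate into the joint Bellman operator of \eqref{eqn:DP-1}, marginalize the product observation kernel so that each arm's continuation term depends only on $a_n$, pull the constant through to obtain $\frac{(B-N)\lambda}{1-\beta}$, and split the maximum over the product set $\{0,1\}^N$ into the per-arm maxima of \eqref{eqn:val-func-lagrangge-SARB}. The one difference is that the paper stops after verifying the fixed-point identity $\mathcal{E}_1(\mathcal{E}_2)=\mathcal{E}_2$, whereas you add the $\beta$-contraction (or value-iteration induction) argument to conclude $W=V^{\lambda}$, which makes explicit a uniqueness step the paper leaves implicit.
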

Note that this lemma decouples the $N$ armed bandit problem into $N$ single-armed restless bandits. The proof is in Appendix \ref{Proof-lemma:val-fun-decouple-appendix}.


\section{Single-armed restless bandit}
\label{sec:SAB}
We now study single-armed restless bandit and properties of value functions, and threshold type policy. For notational simplicity, we omit subscript $n$ in $\bm{\omega}_n$ and we represent belief state for a single armed using $\bm{\omega} \in \Pi.$ We also omit $n$ subscript for value functions. 
A belief state action value function for action $0$ and belief $\bm{\omega}$ is  
\begin{eqnarray}
    Q^{\lambda}(\bm{\omega},0 ) =  R(\bm{\omega},0) + \lambda + \beta V^{\lambda}(\tau(\bm{\omega})),
\end{eqnarray}
where $\tau(\bm{\omega}) = (P^0)^T \bm{\omega}.$
A belief state action value function for action $1$ and belief $\bm{\omega}$ is 
\begin{align}
    Q^{\lambda}(\bm{\omega},1 ) =  R(\bm{\omega},1) +  \beta  \sum_{s=0}^{2} \bm{\omega}^{s} V^{\lambda}(\Gamma_s) 
\end{align}
and the optimal value function is 
\begin{eqnarray}
    V^{\lambda}(\bm{\omega}) = \max \{ Q^{\lambda}(\bm{\omega},0 ),  Q^{\lambda}(\bm{\omega},1 ) \}.
\end{eqnarray}

For single-armed restless bandit problem, the Lagrangian parameter $\lambda$ is called as subsidy for not scheduling an arm.

\begin{lemma}[Convexity of value function]
	The  value function, $V^{\lambda}(\omega),$ is convex in $\omega$ for fixed $\lambda.$  
	\label{lemma:convexity}
\end{lemma}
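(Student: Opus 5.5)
The plan is to prove convexity through the finite-horizon value iteration and then pass to the limit. Define $V_0^{\lambda}(\bm{\omega}) = 0$ for all $\bm{\omega}\in\Pi$. For $k\geq 0$ set
\begin{align*}
V_{k+1}^{\lambda}(\bm{\omega}) = \max\Big\{ R(\bm{\omega},0)+\lambda+\beta V_k^{\lambda}\big((P^0)^T\bm{\omega}\big),\ R(\bm{\omega},1)+\beta\sum_{s=0}^{2}\omega^s V_k^{\lambda}(\Gamma_s)\Big\}.
\end{align*}
The Bellman operator is a $\beta$-contraction in the sup norm on bounded functions on $\Pi$, since rewards are bounded and $\lambda$ is fixed. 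Hence $V_k^{\lambda}\to V^{\lambda}$ uniformly. A pointwise limit of convex functions is convex, so it suffices to show by induction that each $V_k^{\lambda}$ is convex.

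For the induction step, assume $V_k^{\lambda}$ is convex on $\Pi$. The passive term $Q_k(\bm{\omega},0)$ is a sum of three pieces. The first, $R(\bm{\omega},0)=\sum_s \omega^s r(s,0)$, is linear in $\bm{\omega}$. The second, $\lambda$, is constant. The third is $\beta V_k^{\lambda}$ composed with the linear map $\bm{\omega}\mapsto (P^0)^T\bm{\omega}$, which sends $\Pi$ into $\Pi$. A convex function composed with an affine map is convex, so $Q_k(\bm{\omega},0)$ is convex.

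The active term is where the collapsing structure matters. The posteriors $\Gamma_s$ are the rows of $P^1$ and do not depend on $\bm{\omega}$; only the mixing weights $\omega^s$ do. So $\sum_s \omega^s V_k^{\lambda}(\Gamma_s)$ is linear in $\bm{\omega}$, and $Q_k(\bm{\omega},1)$ is in fact affine. The maximum of two convex functions is convex, so $V_{k+1}^{\lambda}$ is convex, which closes the induction.

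There is no real obstacle. The only points needing care are the uniform convergence of value iteration, so that convexity transfers to $V^{\lambda}$, and the observation that $(P^0)^T$ preserves the simplex. As a by-product, the argument shows $Q^{\lambda}(\bm{\omega},1)$ is affine and $Q^{\lambda}(\bm{\omega},0)$ is convex in $\bm{\omega}$; this is useful for the later threshold results. A general POMDP would require the standard argument that the observation-weighted sum of $V_k^{\lambda}$ at the Bayesian posteriors is convex, but that is unnecessary here.
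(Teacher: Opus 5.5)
Your proof is correct and follows the paper's route: induction on the value-iteration iterates, then passing to the limit, where the paper cites \cite[Lemma~2]{Astrom69} (following \cite[Lemma~2]{Meshram18}) for convexity of the observation-weighted continuation term. Your observation that $\sum_{s}\omega^s V_k^{\lambda}(\Gamma_s)$ is affine in $\bm{\omega}$, because the post-activation beliefs $\Gamma_s$ do not depend on $\bm{\omega}$, shows that in the collapsing model the term needing that general POMDP lemma is simply linear, so your argument is a self-contained specialization of the paper's proof rather than a different one.
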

The proof is through induction; it uses \cite[Lemma $2$]{Astrom69} to prove convexity of the value function. The proof of this is along lines of \cite{Meshram18}[Lemma~2].

\textbf{\textit{Monotone likelihood ratio}} (MLR) order is used for comparison of beliefs $\bm{\omega}$. For beliefs $
\bm{\omega},\bm{\omega}^{\prime}\in \Pi,
$ $\bm{\omega}$ dominates $\bm{\omega}'$ in the monotone likelihood-ratio order $\bm{\omega}\succeq_{\mathrm{MLR}}\bm{\omega}^{\prime},$ if $\omega^j\omega^{i,'} \geq \omega^i\omega^{j,'}$ for $ 0\leq i<j\leq 2.$

\textbf{\textit{Totally positive order of $\mathbf{2}$}} \textbf{($\mathrm{TP}2$)} is used for comparison of transition probability matrices for actions $0$ and $1.$ A nonnegative matrix $P=[p_{ij}]$ is said to be totally positive of order two,
or TP2, if every $2\times2$ minor is nonnegative: $p_{ij}p_{k\ell} -
p_{i\ell}p_{kj} \geq0 $  for all $
i<k,$ $  j<\ell.$ 
Suppose $P$ is TP2 and 
$
\bm{\omega}\succeq_{\mathrm{MLR}}\bm{\omega}'$
then
$
P^{\top}\bm{\omega}
\succeq_{\mathrm{MLR}}
P^{\top}\bm{\omega}'.
.$ 
If $P^{0}$ is TP2, then $\bm{\omega}\succeq_{\mathrm{MLR}}\bm{\omega}'$ implies $\tau(\bm{\omega})
\succeq_{\mathrm{MLR}}
\tau(\bm{\omega}').$
If $P^{1}$ is TP2, then a stronger ordering holds: $\Gamma_0
\preceq_{\mathrm{MLR}}
\Gamma_1
\preceq_{\mathrm{MLR}}
\Gamma_2.$

We next show that the optimal value function is monotone in belief. 
\begin{assum}
\begin{enumerate}
\item  
For every state-action pair,
\[
|r(i,a)|\leq r_{\max}<\infty,
\qquad
i\in\{0,1,2\},
\quad
a\in\{0,1\}.
\]
\item For each action $a\in\{0,1\}$, the state-dependent reward is non-decreasing. That is, $r(0,a)\leq r(1,a)\leq r(2,a).$

\item Define $d_i:=r(i,0)-r(i,1).$ 
Assume
$d_0\leq d_1\leq d_2.$

\item  $P^{0}$ and  $P^{1}$  are TP2.

\end{enumerate}
\label{assumption1}
\end{assum}

%
\begin{lemma}[Monotonicity of value function] 
	Under Assumption~\ref{assumption1}, the optimal value function $V^{\lambda}(\bm{\omega})$ is monotone in belief $\bm{\omega}$ that is, $V^{\lambda}(\bm\omega) \geq V^{\lambda}(\bm{\omega}')$ whenever $\bm{\omega}\succeq_{\mathrm{MLR}}\bm{\omega}^{\prime},$ for $\bm\omega, \bm{\omega}^{\prime} \in \Omega.$ 
	\label{lemma:monotonicity} 
\end{lemma}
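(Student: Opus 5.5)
We prove the monotonicity statement of Lemma~\ref{lemma:monotonicity} by value iteration, showing by induction on $k$ that every iterate is MLR-monotone. Define $V_0^{\lambda}\equiv 0$ and
\[
V_{k+1}^{\lambda}(\bm\omega)=\max\Big\{R(\bm\omega,0)+\lambda+\beta V_k^{\lambda}(\tau(\bm\omega)),\; R(\bm\omega,1)+\beta\sum_{s=0}^{2}\omega^s V_k^{\lambda}(\Gamma_s)\Big\}.
\]
The Bellman operator is a $\beta$-contraction in sup norm because rewards are bounded (Assumption~\ref{assumption1}(1)). Hence $V_k^{\lambda}\to V^{\lambda}$ uniformly, and a pointwise inequality $V_k^{\lambda}(\bm\omega)\ge V_k^{\lambda}(\bm\omega')$ that holds for every $k$ passes to the limit. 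The inductive hypothesis is that $V_k^{\lambda}$ is non-decreasing in the MLR order on $\Pi$. The base case $V_0^{\lambda}\equiv 0$ is trivial. Since a maximum of two MLR-monotone functions is MLR-monotone, it suffices to show that each $Q$-function is monotone.

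\textbf{Immediate rewards.} We use the standard fact that MLR dominance implies first-order stochastic dominance: if $\bm\omega\succeq_{\mathrm{MLR}}\bm\omega'$, then $\sum_s\omega^s f(s)\ge\sum_s\omega'^s f(s)$ for every non-decreasing $f$ on $\{0,1,2\}$. By Assumption~\ref{assumption1}(2), $r(\cdot,a)$ is non-decreasing for each $a$. Therefore $R(\bm\omega,a)\ge R(\bm\omega',a)$ for $a=0,1$.

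\textbf{Passive term.} Since $P^0$ is TP2, $\bm\omega\succeq_{\mathrm{MLR}}\bm\omega'$ gives $\tau(\bm\omega)\succeq_{\mathrm{MLR}}\tau(\bm\omega')$. The inductive hypothesis then yields $V_k^{\lambda}(\tau(\bm\omega))\ge V_k^{\lambda}(\tau(\bm\omega'))$. So $Q_k(\cdot,0)$ is monotone.

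\textbf{Active term.} Since $P^1$ is TP2, the rows satisfy $\Gamma_0\preceq_{\mathrm{MLR}}\Gamma_1\preceq_{\mathrm{MLR}}\Gamma_2$. By the inductive hypothesis, $g(s):=V_k^{\lambda}(\Gamma_s)$ is therefore non-decreasing in $s$. Applying first-order dominance again gives $\sum_s\omega^s g(s)\ge\sum_s\omega'^s g(s)$, so $Q_k(\cdot,1)$ is monotone.

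Combining the three steps, $V_{k+1}^{\lambda}$ is monotone, which closes the induction, and passing to the limit gives the lemma.

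The main point needing care is the implication from MLR dominance to first-order dominance for three-point distributions, which the argument invokes twice. Over a finite totally ordered support it is standard. One writes $\sum_s\omega^s f(s)=f(0)+\sum_{j\ge1}(f(j)-f(j-1))\Pr_{\bm\omega}(s\ge j)$ and checks that the tail probabilities are ordered, $\omega^2\ge\omega'^2$ and $\omega^1+\omega^2\ge\omega'^1+\omega'^2$. Each tail inequality follows by summing the cross-product inequalities $\omega^j\omega'^i\ge\omega^i\omega'^j$ over $i<j$.

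Beliefs with zero components also need care, since there the ratio form of MLR degenerates. Stating MLR in the cross-product form given in the paper avoids any division, so no case analysis is needed.

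Assumption~\ref{assumption1}(3) is not used in this argument. It is presumably needed later, for the threshold structure.
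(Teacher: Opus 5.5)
Your proof is correct and follows essentially the same route as the paper's sketch. Both argue by induction over the value iterates, use TP2 of $P^0$ to carry the MLR order through $\tau$, use TP2 of $P^1$ to get $\Gamma_0\preceq_{\mathrm{MLR}}\Gamma_1\preceq_{\mathrm{MLR}}\Gamma_2$, apply MLR$\Rightarrow$first-order dominance to the reward and active-continuation terms, and pass to the limit by uniform convergence (the paper defers the details to Lovejoy's Proposition~1). Your remark that Assumption~\ref{assumption1}(3) is not needed is also accurate: although the paper's sketch lists ``monotone difference in reward'' among the ingredients, only per-action monotone rewards and TP2 kernels are actually used.
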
 
Proof sketch:  
Proof is using mathematical induction. First we have to show that $V_1^{\lambda}(\bm{\omega}) \geq V_1^{\lambda}(\bm{\omega}')$ whenever $\bm{\omega} \geq_{MLR}\bm{\omega}'.$ This follows from immediate reward assumption.  We assume that $V_t^{\lambda}(\bm{\omega}) \geq V_t^{\lambda}(\bm{\omega}')$ for some $t.$ Then we can show that $V_{t+1}^{\lambda}(\bm{\omega}) \geq V_{t+1}^{\lambda}(\bm{\omega}').$ For this we require TP2, MLR ordering and monotone difference in reward. Next one can show that $V_t^{\lambda} $ converges to $V^{\lambda}$ uniformly, and hence monotinicity property is preserved.
For detailed proof, see \cite[Proposition $1$]{Lovejoy87}. 

%

\begin{remark}
    Since $P^1$ is TP2 and the value function is monotone in belief. Thus we have $V^{\lambda}(\Gamma_0)
\leq V^{\lambda}(\Gamma_1)
\leq V^{\lambda}(\Gamma_2).$ $\mathcal{C}_{1}(\bm{\omega})
:=
\sum_{i=0}^{2}
\bm{\omega}^iV^{\lambda}(\Gamma_i).$ Then $\bm{\omega}
\succeq_{\mathrm{MLR}}
\bm{\omega}'$ implies $\mathcal{C}_{1}(\bm{\omega})
\geq
\mathcal{C}_{1}(\bm{\omega}').$ 
Define $\mathcal{C}_{0}(\bm{\omega})
:= V^{\lambda}\bigl(\tau(\bm{\omega})\bigr).$
Order preservation of $\tau$ gives $\tau(\bm{\omega})
\succeq_{\mathrm{MLR}} \tau(\bm{\omega}').$
Since $V^{\lambda}$ is increasing, $V^{\lambda}\bigl(\tau(\bm{\omega})\bigr)
\geq
V^{\lambda}\bigl(\tau(\bm{\omega}')\bigr).$
Hence, $\mathcal{C}_{0}(\bm{\omega})
\geq \mathcal{C}_{0}(\bm{\omega}').$
\end{remark}

We plan to show that the optimal policy is  threshold type.  

Define $\Delta(\bm{\omega},\lambda) := Q^{\lambda}(\bm{\omega},0) - Q^{\lambda}(\bm{\omega},1 ).$ We want to show that whenever 
$\bm{\omega}\succeq_{\mathrm{MLR}}\bm{\omega}^{\prime},$
we have $\Delta(\bm{\omega}\,\lambda) \geq \Delta(\bm{\omega}^{\prime},\lambda).$
A threshold type policy provides a partition of the belief state space $\Pi$ into three disjoint regions, $\Lambda_1, \Lambda_2, \Lambda_3 \subseteq \Pi,$  where $\Lambda_1 = \{\bm\omega \in \Pi: a^*_t(\bm\omega) = 1  \},$ $\Lambda_2 = \{\bm\omega \in \Pi: a^*_t(\bm\omega) =  0 \}$ and  $\Lambda_3 = \{\bm\omega \in \Omega(S): a^*_t(\bm\omega) = 1~\text{and }~ 0 \}.$ $a^*_t(\bm\omega) \in \{0,1\}$ is the optimal action for belief $\bm\omega$ at time step $t.$

\begin{definition}[Threshold type policy]	
	The optimal policy is called a threshold type if one of the following holds true. 
	\begin{enumerate}
		\item The optimal action $a_t^*(\bm\omega) =1$ for all $t$ and all $\bm\omega \in \Pi,$ that is $\Lambda_1 = \Pi$ and $\Lambda_2 = \Lambda_3 = \emptyset.$
		\item The optimal action $a_t^*(\bm\omega) =0$ for all $t$ and all $\bm\omega \in \Pi,$ that is $\Lambda_2 = \Pi$ and $\Lambda_1 = \Lambda_3 = \emptyset.$
		\item The optimal action $a_t^*(\bm\omega) =1$ for  all $\bm\omega \in \Lambda_1,$ $a_t^*(\bm\omega) =0$ for  all $\bm\omega \in \Lambda_2,$ and $a_t^*(\bm\omega) =1$ and $0$ for  all $\bm\omega \in \Lambda_3,$ that is, $\Lambda_1, \Lambda_2 ,\Lambda_3 \neq  \emptyset.$ Also, $\Lambda_i \cap \Lambda_j =\emptyset$, $i \ne j, i,j \in \{0,1,2\}$ and $\Lambda_1 \cap \Lambda_2 \cap \Lambda_3 = \emptyset.$ 
	\end{enumerate}
\end{definition}

Under Assumption~\ref{assumption1}, we want to  show that the following difference is either positive or negative. 
\begin{eqnarray}
\label{eq:single-crossing-condition}
&\Delta(\bm{\omega},\lambda) -  \Delta(\bm{\omega}^{\prime},\lambda) 
 = d(\bm{\omega})-d(\bm{\omega}') +  \\
&\beta
\left[
V^{\lambda}(\tau(\bm{\omega})) 
-
V^{\lambda}(\tau(\bm{\omega}^{\prime})) \right] 
- 
\beta
\sum_{i=0}^{2}
(\omega^{i}-\omega^{i,\prime})
V^{\lambda}(\Gamma_i). \nonumber
\end{eqnarray}
Then, there is a threshold type policy. 

\begin{remark}[Challenges in Threshold Policy]
 In general, even after Assumption~\ref{assumption1}, it is difficult to establish the threshold type policy result. It is due to negative sign of term $3$ in Eqn.~\eqref{eq:single-crossing-condition}. It is addressed by introducing parameter dependent Lipschitz constant and  we subsequently use it to provide sufficient conditions for Threshold-type policy.
\end{remark}



Let \(D_{\Gamma}\) denote the diameter of the post-activation belief set, $D_{\Gamma}
:=\max_{i,j\in\{0,1,2\}} \left\| \Gamma_i
-\Gamma_j \right\|_1.$  Let 
 $\operatorname{span}(r^a)
:= \max_{i\in\{0,1,2\}}r(i,a) -\min_{i\in\{0,1,2\}}r(i,a).$ This measures the maximum difference between the reward for different actions.  $L_R := \frac{1}{2}
\max_{a\in\{0,1\}} \operatorname{span}(r^a).$
\begin{proposition} $c
=
\max
\left\{
\delta(P^0),
\frac12
\max_{i,j}
\|
\Gamma_i-\Gamma_j
\|_1
\right\}$, where $\delta(P^0)$ is Dobrushin coefficient 
and $0 \leq  c \leq 1.$
\end{proposition}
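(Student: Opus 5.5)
The content of the proposition is that the quantity
\[
c=\max\bigl\{\delta(P^0),\,\tfrac12\max_{i,j}\|\Gamma_i-\Gamma_j\|_1\bigr\}
\]
lies in $[0,1]$. Since $c$ is the maximum of two quantities, it suffices to show that each lies in $[0,1]$. I would use the standard definition of the Dobrushin coefficient,
\[
\delta(P^0)=\tfrac12\max_{i,j}\sum_{k}|p^0_{ik}-p^0_{jk}|=\tfrac12\max_{i,j}\|P^0_{i\cdot}-P^0_{j\cdot}\|_1 .
\]
With this definition, both terms have the same form: half the maximal $\ell_1$ distance between two rows of a stochastic matrix. For the second term the matrix is $P^1$, because $\Gamma_i$ is exactly the $i$-th row of $P^1$.

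Nonnegativity is immediate, since both quantities are maxima of norms.

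For the upper bound, fix two probability vectors $p,q$ on $\{0,1,2\}$. Then
\[
\|p-q\|_1 \le \|p\|_1+\|q\|_1 = 2,
\]
so $\tfrac12\|p-q\|_1\le 1$. A sharper identity is also available:
\[
\tfrac12\|p-q\|_1 = 1-\sum_k \min(p_k,q_k) \in [0,1].
\]
It follows from writing $|p_k-q_k| = p_k+q_k-2\min(p_k,q_k)$ and summing over $k$, using $\sum_k p_k=\sum_k q_k=1$. Applying this with $p,q$ equal to rows of $P^0$ gives $\delta(P^0)\in[0,1]$. Applying it with $p=\Gamma_i$ and $q=\Gamma_j$ gives $\tfrac12 D_\Gamma\in[0,1]$. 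Taking the maximum then yields $0\le c\le 1$.

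I expect no real obstacle here. The only delicate point is fixing the normalization of $\delta(P^0)$. If the paper instead uses the alternative form $\delta(P)=1-\min_{i,j}\sum_k\min(p_{ik},p_{jk})$, the identity above shows the two forms coincide, so the same bound applies.

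It may also be worth recording why $c$ will be used as a contraction bound in what follows. For any $\bm{\omega},\bm{\omega}'\in\Pi$ one has
\[
\|\tau(\bm{\omega})-\tau(\bm{\omega}')\|_1\le \delta(P^0)\|\bm{\omega}-\bm{\omega}'\|_1,
\]
and the post-activation term $\sum_s(\omega^s-\omega'^s)V^\lambda(\Gamma_s)$ is controlled by $\tfrac12 D_\Gamma$ times $\|\bm{\omega}-\bm{\omega}'\|_1$ times a Lipschitz constant. This uses the fact that $\sum_s(\omega^s-\omega'^s)=0$, so only oscillations of $V^\lambda(\Gamma_s)$ matter. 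Nothing beyond $c\in[0,1]$ is needed for the proposition itself.
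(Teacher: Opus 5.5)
Your proposal is correct and follows the paper's own argument. The paper likewise defines $\delta(P^0)=\frac12\max_{i,j}\sum_k|p^0_{ik}-p^0_{jk}|$, identifies $\frac12 D_\Gamma$ as the contraction modulus of the active continuation $\sum_i\omega_iV^\lambda(\Gamma_i)$, and concludes $0\le c\le1$ because both terms are at most one. Your bound $\|p-q\|_1\le2$ supplies the justification the paper only asserts, and your identity $\tfrac12\|p-q\|_1=1-\sum_k\min(p_k,q_k)$ usefully reconciles this definition with the form $\delta(P^a)=1-\min_{i,j}\sum_k\min\{P^a_{ik},P^a_{jk}\}$ that the paper later uses in the proof of Theorem~\ref{lem:three-state-bd}.
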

\begin{proof}
    See Appendix \ref{proof of Proposition 1}.
\end{proof}
\begin{lemma}[Lipschitz bound on value function]
 \begin{equation}
\label{eq:three-state-lipschitz-main}
\left|
V^\lambda(\bm{\omega})
-
V^\lambda(\bm{\omega}')
\right|
\le
\frac{L_R}{1-\beta c}
\|
\bm{\omega}
-
\bm{\omega}'
\|_1, 
\end{equation}

\end{lemma}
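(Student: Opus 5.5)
The plan is to prove the bound by induction along value iteration and then pass to the limit. Define $V_0^\lambda\equiv 0$ and
\[
V_{t+1}^\lambda(\bm{\omega})=\max\Big\{R(\bm{\omega},0)+\lambda+\beta V_t^\lambda(\tau(\bm{\omega})),\; R(\bm{\omega},1)+\beta\sum_{s=0}^{2}\omega^sV_t^\lambda(\Gamma_s)\Big\}.
\]
Since the Bellman operator is a $\beta$-contraction in the sup norm, $V_t^\lambda\to V^\lambda$ uniformly on $\Pi$. I will show that $V_t^\lambda$ is Lipschitz in $\|\cdot\|_1$ with constant $L_t$, where
\[
L_0=0,\qquad L_{t+1}=L_R+\beta c\,L_t .
\]
Because $\beta c<1$ (as $0\le c\le 1$ by the Proposition), this recursion gives $L_t\uparrow L_R/(1-\beta c)$. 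A pointwise limit of functions that are Lipschitz with constants at most $L$ is again $L$-Lipschitz, which yields \eqref{eq:three-state-lipschitz-main}.

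The basic tool is an elementary fact: for any $x\in\mathbb{R}^3$ with $\sum_i x_i=0$ and any vector $f$,
\[
\Big|\sum_i x_i f_i\Big|\le \tfrac12\|x\|_1\big(\max_i f_i-\min_i f_i\big).
\]
To prove it, subtract the midpoint $(\max f+\min f)/2$ from every $f_i$; this is allowed because $\sum_i x_i=0$, and it leaves each $|f_i-\text{mid}|\le \tfrac12(\max f-\min f)$. Set $x=\bm{\omega}-\bm{\omega}'$. Applying the fact with $f=r(\cdot,a)$ gives $|R(\bm{\omega},a)-R(\bm{\omega}',a)|\le \tfrac12\operatorname{span}(r^a)\|x\|_1\le L_R\|x\|_1$ for each action $a$. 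The subsidy $\lambda$ is the same on both sides and cancels.

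The induction step bounds each $Q$-function separately.
\begin{itemize}
\item \emph{Passive action.} By the induction hypothesis, $|V_t^\lambda(\tau(\bm{\omega}))-V_t^\lambda(\tau(\bm{\omega}'))|\le L_t\|(P^0)^\top x\|_1$. For zero-sum vectors, the Dobrushin coefficient satisfies $\|(P^0)^\top x\|_1\le\delta(P^0)\|x\|_1$, so this term is at most $L_t\,\delta(P^0)\|x\|_1$.
\item \emph{Active action.} The continuation term is $\sum_s x_s V_t^\lambda(\Gamma_s)$. Applying the elementary fact with $f_s=V_t^\lambda(\Gamma_s)$ gives the bound $\tfrac12\|x\|_1\max_{i,j}|V_t^\lambda(\Gamma_i)-V_t^\lambda(\Gamma_j)|$. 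By the induction hypothesis this is at most $\tfrac12\|x\|_1 L_t D_\Gamma$.
\end{itemize}
By the definition of $c$, both $\delta(P^0)\le c$ and $D_\Gamma/2\le c$. Hence each $Q_{t+1}(\cdot,a)$ is Lipschitz with constant $L_R+\beta cL_t$. Finally, $|\max\{a,b\}-\max\{a',b'\}|\le\max\{|a-a'|,|b-b'|\}$, so $V_{t+1}^\lambda$ inherits the constant $L_{t+1}$.

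The main obstacle is the active-action term. The weights $x_s$ there are signed, and the values $V_t^\lambda(\Gamma_s)$ need not be small, so a naive bound would carry $\|V_t^\lambda\|_\infty$ instead of a factor contracting in $L_t$. The zero-sum centering trick is what converts this term into a quantity of order $L_t D_\Gamma$, and it is exactly where the definition of $c$ enters. The Dobrushin contraction for zero-sum vectors is standard and can be invoked directly. No monotonicity or TP2 assumption is needed for this lemma.
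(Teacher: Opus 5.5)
Your proposal is correct and follows essentially the same route as the paper's proof. Both run an induction along value iteration from $V_0^\lambda\equiv 0$, use the Dobrushin contraction for the passive continuation, use the centered span (total-variation) bound to get $\tfrac12 L_t D_\Gamma$ for the active continuation, use nonexpansiveness of the maximum to obtain $L_{t+1}\le L_R+\beta c L_t$, and then pass to the uniform limit; your explicit midpoint-centering proof of the span inequality, and your remark that no TP2 or monotonicity assumption is needed, only make it slightly more self-contained.
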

The proof is given in the Appendix \ref{Appendix: Lemma 4}. 


\begin{remark}
 The Lipschtz bound, $L_{V} := \frac{L_R}{1-\beta c}$ depends on  span of rewards and transition probabilities through $D_{\Gamma}$ and $\delta(P^0).$ Under the appropriate model assumptions on transition probabilities, $c<1$ and further we can allow $\beta \rightarrow 1.$
\end{remark}

\begin{lemma}[MLR monotonicity of the reward difference]
\label{lem:d-mlr}

$\bm{\omega}=(\omega^0,\omega^1,\omega^2)^\top,$
define 
\begin{eqnarray*}
d(\bm{\omega})
:=
\sum_{i=0}^{2}d_i\omega^i
=
R(\bm{\omega},0)-R(\bm{\omega},1).
\end{eqnarray*}
Then $d(\bm{\omega})$ is nondecreasing with respect to $\bm{\omega}$ in MLR order. In particular 
$\bm{\omega}\succeq_{\mathrm{MLR}}\bm{\omega}'
\quad\Longrightarrow\quad
d(\bm{\omega})\geq d(\bm{\omega}').$
\end{lemma}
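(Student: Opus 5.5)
The plan is to use the standard fact that MLR dominance implies first-order stochastic dominance, and then apply Assumption~\ref{assumption1}(3), which states that $d_0\le d_1\le d_2$.

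\textbf{Step 1: MLR implies FOSD.} Suppose $\bm{\omega}\succeq_{\mathrm{MLR}}\bm{\omega}'$. We show that the upper tails satisfy
\[
\omega^2\ge\omega^{2,\prime},\qquad \omega^1+\omega^2\ge \omega^{1,\prime}+\omega^{2,\prime}.
\]
For the first inequality, sum the defining relations $\omega^2\omega^{i,\prime}\ge \omega^i\omega^{2,\prime}$ over $i=0,1,2$. The case $i=2$ is an equality, so it may be included. Because both vectors sum to one, this gives $\omega^2\ge\omega^{2,\prime}$.

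For the second inequality, argue symmetrically with state $0$. Summing $\omega^{j}\omega^{0,\prime}\ge \omega^0\omega^{j,\prime}$ over $j$ yields $\omega^{0,\prime}\ge \omega^0$. This is equivalent to $\omega^1+\omega^2\ge \omega^{1,\prime}+\omega^{2,\prime}$.

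Zero entries cause no difficulty. The MLR definition is written in cross-multiplied form, so the summation argument goes through without dividing by any coordinate.

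\textbf{Step 2: Abel summation.} Write $d(\bm{\omega})=d_0+(d_1-d_0)(\omega^1+\omega^2)+(d_2-d_1)\omega^2$, using $\sum_i\omega^i=1$. Then
\[
d(\bm{\omega})-d(\bm{\omega}')=(d_1-d_0)\big[(\omega^1+\omega^2)-(\omega^{1,\prime}+\omega^{2,\prime})\big]+(d_2-d_1)(\omega^2-\omega^{2,\prime}).
\]
By Assumption~\ref{assumption1}(3), each coefficient $d_1-d_0$ and $d_2-d_1$ is nonnegative. By Step 1, each bracketed tail difference is nonnegative. Hence $d(\bm{\omega})\ge d(\bm{\omega}')$.

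The identity $d(\bm{\omega})=R(\bm{\omega},0)-R(\bm{\omega},1)$ follows immediately from linearity of $R$ in the belief.

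The only step that needs any care is Step 1, specifically checking the summation argument when some coordinates of the beliefs are zero; the rest is routine.
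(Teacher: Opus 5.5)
Your proof is correct and follows the paper's route: MLR dominance implies first-order stochastic dominance, and then the expectation of the nondecreasing sequence $d_0\le d_1\le d_2$ (Assumption~\ref{assumption1}(3)) is larger under $\bm{\omega}$ than under $\bm{\omega}'$. The paper cites that implication without proof. You instead derive the FOSD tail inequalities explicitly from the cross-multiplied MLR relations and write out the Abel-summation decomposition, which is the same decomposition the paper uses in its appendix proposition on the reward-difference bound.
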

For the Proof, see the Appendix  \ref{Appendix:Lemma 5}.

We next provide sufficiency conditions for threshold type policy. We require condition on $d(\bm{\omega}) - d(\bm{\omega}').$ This is stated in next theorem.




\begin{theorem}[Sufficient condition for a threshold policy]
\label{thm:threshold-condition}
Suppose Assumption $1$ holds.
Let
\[
m_d:=
\inf_{\substack{\bm{\omega}\geq_r\bm{\omega}'\\
\bm{\omega}\neq\bm{\omega}'}}
\frac{
2\left[d(\bm{\omega})-d(\bm{\omega}')\right]
}{
\|\bm{\omega}-\bm{\omega}'\|_1
},
\]
and  $c=
\max\left\{
\delta(P^0),\frac{D_\Gamma}{2}
\right\}.$
If $m_d
\geq
\frac{\beta L_R D_\Gamma}
{1-\beta c},$
then, for every
$\bm{\omega}\succeq_{MLR}\bm{\omega}'$,
$\Delta(\bm{\omega},\lambda)
\geq
\Delta(\bm{\omega}',\lambda).$
Consequently, the action advantage is MLR nondecreasing,
which yields a threshold-type optimal policy.
\end{theorem}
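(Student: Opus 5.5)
We start from the decomposition \eqref{eq:single-crossing-condition} of $\Delta(\bm{\omega},\lambda)-\Delta(\bm{\omega}',\lambda)$ for a fixed pair $\bm{\omega}\succeq_{\mathrm{MLR}}\bm{\omega}'$ with $\bm{\omega}\neq\bm{\omega}'$; the case $\bm{\omega}=\bm{\omega}'$ is trivial. We treat its three terms separately and show that the first dominates the possibly negative third, while the second is nonnegative. Since MLR dominance implies first-order stochastic dominance ($\geq_r$), the pair is admissible in the infimum defining $m_d$. Hence
\[
d(\bm{\omega})-d(\bm{\omega}')\geq \tfrac{m_d}{2}\|\bm{\omega}-\bm{\omega}'\|_1 .
\]
This agrees with Lemma~\ref{lem:d-mlr}, which gives $m_d\geq 0$. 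For the second term, $P^0$ is TP2, so $\tau(\bm{\omega})\succeq_{\mathrm{MLR}}\tau(\bm{\omega}')$. Lemma~\ref{lemma:monotonicity} (equivalently, the Remark on $\mathcal{C}_0$) then gives
\[
\beta\bigl[V^\lambda(\tau(\bm{\omega}))-V^\lambda(\tau(\bm{\omega}'))\bigr]\geq 0 .
\]
We drop this term as a lower bound.

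The main step is an upper bound on the third term, $\beta\sum_i(\omega^i-\omega^{i,\prime})V^\lambda(\Gamma_i)$. The key observation is that $\sum_i(\omega^i-\omega^{i,\prime})=0$, so for any constant $\kappa$,
\[
\sum_i(\omega^i-\omega^{i,\prime})V^\lambda(\Gamma_i)=\sum_i(\omega^i-\omega^{i,\prime})\bigl(V^\lambda(\Gamma_i)-\kappa\bigr).
\]
We choose $\kappa$ as the midpoint of $\max_i V^\lambda(\Gamma_i)$ and $\min_i V^\lambda(\Gamma_i)$. Then each $|V^\lambda(\Gamma_i)-\kappa|$ is at most $\tfrac12\max_{i,j}|V^\lambda(\Gamma_i)-V^\lambda(\Gamma_j)|$. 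By the Lipschitz bound \eqref{eq:three-state-lipschitz-main}, this is at most $\tfrac12 L_V D_\Gamma$ with $L_V=L_R/(1-\beta c)$. Hölder's inequality then gives
\[
\Bigl|\beta\sum_i(\omega^i-\omega^{i,\prime})V^\lambda(\Gamma_i)\Bigr|\leq \frac{\beta L_R D_\Gamma}{2(1-\beta c)}\|\bm{\omega}-\bm{\omega}'\|_1 .
\]
Alternatively, one can use the TP2 ordering $V^\lambda(\Gamma_0)\le V^\lambda(\Gamma_1)\le V^\lambda(\Gamma_2)$ with Abel summation. This only sharpens the constant, so the centering argument suffices.

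Combining the three bounds yields
\[
\Delta(\bm{\omega},\lambda)-\Delta(\bm{\omega}',\lambda)\geq \frac{1}{2}\Bigl(m_d-\frac{\beta L_R D_\Gamma}{1-\beta c}\Bigr)\|\bm{\omega}-\bm{\omega}'\|_1\geq 0
\]
under the hypothesis of the theorem. For the threshold conclusion, note that the passive set $\{\bm{\omega}:\Delta(\bm{\omega},\lambda)\geq 0\}$ is then an upper set in the MLR order. On any MLR-ordered chain of beliefs, $\Delta$ changes sign at most once, from active to passive. This gives one of the three cases of the threshold-type definition: all active, all passive, or a split with the switching boundary $\Lambda_3$.

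The expected obstacle is the third term. A naive bound $|V^\lambda(\Gamma_i)|\le\|V^\lambda\|_\infty$ would introduce a factor $1/(1-\beta)$ and ruin the $\beta\to1$ regime. The centering trick, which exploits $\sum_i(\omega^i-\omega^{i,\prime})=0$, is what converts the bound into a span of $V^\lambda$ over $\{\Gamma_i\}$. That span is controlled by $L_V D_\Gamma$ through the earlier Lipschitz lemma. I would also check that the factor-$2$ normalization in $m_d$ matches exactly the $\tfrac12$ produced by centering.
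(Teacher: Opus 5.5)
Your proposal is correct and follows the paper's proof essentially step for step. The passive continuation term is discarded using TP2 of $P^0$ and MLR-monotonicity of $V^\lambda$. The active term $\beta\sum_i(\omega^i-\omega^{i,\prime})V^\lambda(\Gamma_i)$ is bounded by $\frac{\beta L_V D_\Gamma}{2}\|\bm{\omega}-\bm{\omega}'\|_1$ using the span (centering) inequality and the Lipschitz lemma, and this is compared against $\frac{m_d}{2}\|\bm{\omega}-\bm{\omega}'\|_1$.

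The only difference is in the reward-difference bound. The paper obtains it from its separate proposition with $m_d=\min\{d_1-d_0,d_2-d_1\}$, which in fact equals the infimum in the statement, as the point-mass pairs show. You instead read it directly off the infimum definition, which is equally valid.
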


The Proof is in the Appendix \ref{Appendix: Theorem 1}.


\begin{remark}[As $\beta\to1^{-}$]
If $c<1$, then
$\frac{\beta L_RD_\Gamma}{1-\beta c }
\longrightarrow
\frac{L_RD_\Gamma}{1-c}<\infty,
\quad\text{as }\beta\to1^-.$
Hence, the sufficient condition remains bounded in the nearly
undiscounted regime. In contrast, if $c=1$ and
$L_RD_\Gamma>0$, the bound diverges as $\beta\to1^-$.
Thus, strict contraction $c<1$ ensures a nondegenerate
sufficient condition as the discount factor approaches one.
\end{remark}

\begin{remark}[Strict contraction $c< 1$]
\begin{enumerate}
\item If $\delta(P^0)<1$ and $D_\Gamma<2,$
then $c<1, $ where $c=
\max\left\{
\delta(P^0),\frac{D_\Gamma}{2}
\right\}.$
\item Sufficient condition for $D_\Gamma<2:$  
If every pair
$\Gamma_i,\Gamma_j$ has overlapping support, i.e.,
$\sum_k\min\{\Gamma_i^k,\Gamma_j^k\}>0,
\quad \forall\, i,j,$ then $D_\Gamma<2.$
\item Sufficient condition for strict Dobrushin contraction: If $P^0_{ij}>0,$  $\forall i,j\in\{0,1,2\},$ i.e., $P^0$ is strictly positive, then $\delta(P^0)<1.$

\end{enumerate}
\end{remark}

\begin{corollary}
After activation the state is perfectly observed, $\bm{\Gamma}_i=(p_{i,\cdot}^{1})^{T}=\begin{bmatrix}
p_{i0}^{1}\\
p_{i1}^{1}\\
p_{i2}^{1}
\end{bmatrix}.$
$ D_\Gamma = \max_{i,j\in\{0,1,2\}} \sum_{k=0}^{2}
\left| p_{ik}^{1}-p_{jk}^{1} \right|.$ Then,
second sufficiency condition  is 
\begin{equation}
\label{eq:fully-explicit-increasing-condition}
\begin{aligned}
&
m_d
\quad\geq
\frac{
\displaystyle
\beta
\max_{i,j}
\sum_{k=0}^{2}
\left|
p_{ik}^{1}-p_{jk}^{1}
\right|
}{
\displaystyle
2\left(
1-
\frac{\beta}{2}
\max_{\substack{a\in\{0,1\}\\i,j}}
\sum_{k=0}^{2}
\left|
p_{ik}^{a}-p_{jk}^{a}
\right|
\right)
}
\\[1mm]
&\qquad\qquad\times
\max_{a\in\{0,1\}}
\left[
\max_i r(i,a)-\min_i r(i,a)
\right].
\end{aligned}
\end{equation}
\end{corollary}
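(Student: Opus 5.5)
The corollary is Theorem~\ref{thm:threshold-condition} with every constant written in terms of the model primitives, so the plan is to substitute explicit expressions for $L_R$, $D_\Gamma$ and $c$ into $m_d \geq \beta L_R D_\Gamma/(1-\beta c)$. No new structural argument is needed; the work is bookkeeping plus one comparison.

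First, because activation reveals the state, the post-activation beliefs are the rows of $P^1$: $\Gamma_i = (p^1_{i\cdot})^T$. Hence
\[
D_\Gamma = \max_{i,j}\|\Gamma_i-\Gamma_j\|_1 = \max_{i,j}\sum_{k}|p^1_{ik}-p^1_{jk}| .
\]
Second, by definition
\[
L_R = \tfrac12\max_a\Big[\max_i r(i,a)-\min_i r(i,a)\Big],
\]
so $\beta L_R D_\Gamma$ equals the numerator of \eqref{eq:fully-explicit-increasing-condition} times the reward-span factor, divided by $2$. That accounts for the factor $2$ in the displayed denominator.

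Third, I would write the Dobrushin coefficient in its row form, $\delta(P^0) = \tfrac12 \max_{i,j}\sum_k |p^0_{ik}-p^0_{jk}|$. Then
\[
c=\max\{\delta(P^0), D_\Gamma/2\}=\tfrac12\max_{a\in\{0,1\},\,i,j}\sum_k|p^a_{ik}-p^a_{jk}|,
\]
where the $a=0$ term gives $\delta(P^0)$ and the $a=1$ term gives $D_\Gamma/2$. Substituting, $1-\beta c$ is exactly the bracketed factor in the denominator. The right-hand side of \eqref{eq:fully-explicit-increasing-condition} therefore equals $\beta L_R D_\Gamma/(1-\beta c)$, and the conclusion follows from Theorem~\ref{thm:threshold-condition}.

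The main obstacle is this identification of $c$. If the paper's Dobrushin coefficient is defined as $\tfrac12\max_{i,j}\|P^0_{i\cdot}-P^0_{j\cdot}\|_1$, the substitution is an exact identity. If it is instead defined by a contraction property over beliefs, I would invoke the standard equivalence of the two definitions. Failing that, I would note that replacing $c$ by a larger quantity in $1-\beta c$ only enlarges the right-hand side. The explicit condition would then imply the theorem's condition, which is all the corollary requires. A minor caveat: we need $\beta c<1$, which holds since $c\le 1$ and $\beta<1$, so the denominator is positive.
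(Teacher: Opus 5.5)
Your proposal is correct and is exactly the derivation the paper intends. The paper gives no separate proof of this corollary: it is simply Theorem~\ref{thm:threshold-condition} with $L_R=\tfrac12\max_a\operatorname{span}(r^a)$, $D_\Gamma=\max_{i,j}\|p^1_{i\cdot}-p^1_{j\cdot}\|_1$ and $c=\tfrac12\max_{a,i,j}\|p^a_{i\cdot}-p^a_{j\cdot}\|_1$ substituted, so the right-hand side reduces to $\beta L_R D_\Gamma/(1-\beta c)$. Your fallback for other definitions of the Dobrushin coefficient is not needed. The paper itself uses the row form $\delta(P^0)=\tfrac12\max_{i,j}\sum_k|p^0_{ik}-p^0_{jk}|$ (proof of Proposition~1), so your identification of $c$ is an exact identity.
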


\subsection{Birth-death model}
For $M=3$, the birth-death transition matrix under action
$a\in\{0,1\}$ is
\[
P^a=
\begin{bmatrix}
1-p_a & p_a & 0\\
q_a & 1-p_a-q_a & p_a\\
0 & q_a & 1-q_a
\end{bmatrix},
\]
where $p_a,q_a\geq0,
\quad
p_a+q_a\leq1.$

\begin{theorem}[Three-state birth-death model]
\label{lem:three-state-bd}
Consider a three-state birth-death model with TP 2 transition
probabilities $p_a$ and $q_a$ under action $a\in\{0,1\}$.
Let $m_P:=\min\{p_0,q_0,p_1,q_1\}>0.$ If $m_d
\geq
\frac{2 \beta\bigl(1-\min\{p_1,q_1\}\bigr)}
     {1-\beta(1-m_P)}\, L_R,$
then under Assumption 1, the optimal policy has a threshold-type
structure with respect to the MLR order.
\end{theorem}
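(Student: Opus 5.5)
The plan is to obtain this theorem as a specialization of Theorem~\ref{thm:threshold-condition}. That theorem gives a threshold-type optimal policy whenever $m_d \geq \beta L_R D_\Gamma/(1-\beta c)$, with $c=\max\{\delta(P^0),D_\Gamma/2\}$. So it suffices to compute, or upper bound, $D_\Gamma$ and $c$ for the birth-death matrices. We then check that the stated hypothesis implies the hypothesis of Theorem~\ref{thm:threshold-condition}. Assumption~\ref{assumption1} is assumed, and TP2 of $P^0,P^1$ is part of the statement, so monotonicity and the Lipschitz bound on $V^\lambda$ are available as before.

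\textbf{Step 1: bound $D_\Gamma$.} Here $\Gamma_i$ is the $i$-th row of $P^1$, and we compute the three pairwise $\ell_1$ distances.
\begin{itemize}
\item Rows $0$ and $1$: $\|\Gamma_0-\Gamma_1\|_1=|1-p_1-q_1|+|q_1|+|p_1|$, which equals $2(1-q_1)$ when $p_1\le 1-p_1-q_1$ (the usual TP2 regime). In general it equals $2-2\sum_k\min\{\Gamma_0^k,\Gamma_1^k\}$, and the overlap is at least $\min\{q_1,p_1\}$ because of the middle column.
\item Rows $1$ and $2$: by the symmetric computation the distance is at most $2(1-\min\{p_1,q_1\})$.
\item Rows $0$ and $2$: the overlap is $\min\{p_1,q_1\}$ (column $1$), giving distance $2(1-\min\{p_1,q_1\})$.
\end{itemize}
Hence $D_\Gamma\le 2(1-\min\{p_1,q_1\})$. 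This matches the factor $2\beta(1-\min\{p_1,q_1\})$ in the numerator. It is convenient to write $D_\Gamma=2-2\min_{i,j}\sum_k\min\{\Gamma_i^k,\Gamma_j^k\}$ and lower bound each overlap.

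\textbf{Step 2: bound $c$.} Using $\delta(P)=1-\min_{i,j}\sum_k\min\{P_{ik},P_{jk}\}$, the same overlap computation for $P^0$ gives $\delta(P^0)\le 1-\min\{p_0,q_0\}\le 1-m_P$. Also $D_\Gamma/2\le 1-\min\{p_1,q_1\}\le 1-m_P$. Therefore $c\le 1-m_P$, which is strictly less than $1$ because $m_P>0$.

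\textbf{Step 3: conclude.} The map $(D,c)\mapsto \beta L_R D/(1-\beta c)$ is nondecreasing in both arguments on $[0,2]\times[0,1)$. Combining the two bounds,
\[
\frac{\beta L_R D_\Gamma}{1-\beta c}\le \frac{2\beta(1-\min\{p_1,q_1\})}{1-\beta(1-m_P)}\,L_R\le m_d .
\]
Theorem~\ref{thm:threshold-condition} then yields $\Delta(\bm\omega,\lambda)\ge\Delta(\bm\omega',\lambda)$ whenever $\bm\omega\succeq_{\mathrm{MLR}}\bm\omega'$, and hence a threshold-type policy.

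\textbf{Main obstacle.} The delicate part is the overlap computation in Steps 1--2. Which entry attains each $\min\{\cdot,\cdot\}$ depends on the relative sizes of $p_a$, $q_a$ and $1-p_a-q_a$. I would handle this by lower bounding each overlap by a single guaranteed common entry rather than computing it exactly:
\begin{itemize}
\item for rows $0,1$, column $1$ gives the common entry $\min\{p_a,1-p_a-q_a\}$, or alternatively column $0$ gives $\min\{1-p_a,q_a\}$;
\item for rows $1,2$, column $2$ gives $\min\{p_a,1-q_a\}$, or column $1$ gives $\min\{1-p_a-q_a,q_a\}$;
\item for rows $0,2$, column $1$ gives $\min\{p_a,q_a\}$.
\end{itemize}
I would then check that in every case the overlap is at least $\min\{p_a,q_a\}$, using $p_a+q_a\le1$ and, where needed, the TP2 inequality on the birth-death matrix.
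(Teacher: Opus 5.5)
Your proposal is correct and is essentially the paper's proof: bound $\delta(P^0)\le 1-\min\{p_0,q_0\}$ and $D_\Gamma$ via row overlaps, deduce $c\le 1-m_P$, and substitute into Theorem~\ref{thm:threshold-condition}. The paper also notes that $D_\Gamma=2(1-\min\{p_1,q_1\})$ holds with equality, attained by rows $0$ and $2$. One slip to fix: $\|\Gamma_0-\Gamma_1\|_1=|1-p_1-q_1|+|2p_1+q_1-1|+p_1$, which equals $2(1-p_1-q_1)$ rather than $2(1-q_1)$ when $2p_1+q_1\le 1$, and the guaranteed overlap $q_1$ comes from column $0$ via $q_1\le 1-p_1$, not from the middle column, exactly as your ``Main obstacle'' list already indicates, so only $p_a+q_a\le 1$ (not TP2) is needed for the overlap bounds.
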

The proof is in the Appendix \ref{Proof of Theorem 2}.

\subsection{Restart Model}

\begin{theorem}[Threshold policy for a three-state restart model]
\label{lem:restart-threshold}
Suppose at action $0$ belief evolves according to $P^0$, while action $1$
restarts the process according to a fixed distribution $\gamma$, i.e., $P^1_{i,\cdot}=\gamma,\quad i \in\{0,1,2\}.$ Then, under Assumption~1, the optimal policy is threshold-type with respect to the MLR order.
\end{theorem}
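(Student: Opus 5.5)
The plan is to verify the single-crossing inequality \eqref{eq:single-crossing-condition} directly, exploiting the fact that under restart all post-activation beliefs coincide. The problematic third term, flagged in the remark on challenges in threshold policy, should then vanish identically. Alternatively, the result can be read off as a special case of Theorem~\ref{thm:threshold-condition} with $D_\Gamma=0$. I will do the direct argument and note the reduction at the end.

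\textbf{Step 1: Assumption~\ref{assumption1} is compatible with the restart structure.} Since $P^1_{i,\cdot}=\gamma$ for every $i$, every $2\times 2$ minor of $P^1$ equals $\gamma_j\gamma_\ell-\gamma_\ell\gamma_j=0$. Hence $P^1$ is TP2 automatically. Only the TP2 property of $P^0$ and the reward conditions of Assumption~\ref{assumption1} are genuinely used. Moreover, $\Gamma_s=\gamma$ for all $s\in\{0,1,2\}$, so $Q^\lambda(\bm\omega,1)=R(\bm\omega,1)+\beta V^\lambda(\gamma)$, using $\sum_s\omega^s=1$.

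\textbf{Step 2: Evaluate the three terms of \eqref{eq:single-crossing-condition}.} Take $\bm\omega\succeq_{\mathrm{MLR}}\bm\omega'$.
\begin{itemize}
\item The third term is $\beta\sum_i(\omega^i-\omega^{i,\prime})V^\lambda(\gamma)=\beta V^\lambda(\gamma)\cdot 0=0$, because both beliefs sum to one.
\item The first term $d(\bm\omega)-d(\bm\omega')$ is nonnegative by Lemma~\ref{lem:d-mlr}, which uses item 3 of Assumption~\ref{assumption1}.
\item For the second term, $P^0$ is TP2, so $\tau(\bm\omega)\succeq_{\mathrm{MLR}}\tau(\bm\omega')$. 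Lemma~\ref{lemma:monotonicity} then gives $V^\lambda(\tau(\bm\omega))\ge V^\lambda(\tau(\bm\omega'))$; this is exactly the statement $\mathcal C_0(\bm\omega)\ge\mathcal C_0(\bm\omega')$ in the remark following Lemma~\ref{lemma:monotonicity}.
\end{itemize}
Combining the three terms yields $\Delta(\bm\omega,\lambda)\ge\Delta(\bm\omega',\lambda)$ for every $\lambda$.

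\textbf{Step 3: Conclude the threshold structure.} Monotonicity of $\Delta(\cdot,\lambda)$ in the MLR order means the set where the passive action is optimal, $\{\Delta\ge 0\}$, is an MLR upper set. Its complement, where the active action is optimal, is an MLR lower set. This gives the partition $\Lambda_1,\Lambda_2,\Lambda_3$ required by the definition of a threshold-type policy, exactly as in the conclusion of Theorem~\ref{thm:threshold-condition}.

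As a consistency check, $D_\Gamma=\max_{i,j}\|\gamma-\gamma\|_1=0$. The right-hand side of the condition in Theorem~\ref{thm:threshold-condition} is therefore $0$, and the condition reduces to $m_d\ge 0$, which again follows from Lemma~\ref{lem:d-mlr}.

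The only point needing care is whether Lemma~\ref{lemma:monotonicity} applies when $P^1$ is TP2 only in the degenerate sense of Step 1. It does, because its inductive step only needs $\mathcal C_1$ to be MLR-nondecreasing. Here $\mathcal C_1(\bm\omega)=V^\lambda(\gamma)$ is constant in $\bm\omega$, so that requirement holds trivially. Notably, no Lipschitz bound and no restriction on $\beta$ is required, which is why the statement holds for all $0<\beta<1$.
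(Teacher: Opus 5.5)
Your proposal is correct and follows essentially the same route as the paper, which simply applies Theorem~\ref{thm:threshold-condition} with $D_\Gamma=0$, so that the sufficient condition reduces to $m_d\ge 0$. Your direct check of \eqref{eq:single-crossing-condition} is that theorem's proof specialized to the restart case: the active-continuation term vanishes exactly because every $\Gamma_s=\gamma$, so no Lipschitz bound on $V^\lambda$ is needed.
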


\begin{proof}
    From theorem \ref{thm:threshold-condition}, $m_d
\geq
\frac{\beta L_R D_\Gamma}
{1-\beta c}$ implies a threshold type policy. Clearly, in this case, $D_\Gamma=0$ since $P^1_{i,\cdot}=\gamma,\quad i\in\{0,1,2\}.$ Therefore, the sufficient condition reduces to $
m_d\geq 0$, which holds under Assumption~1. 
\end{proof}

\begin{remark}[Comparison with work of  \cite{Lovejoy87}]
In \cite{Lovejoy87}, author established MLR-monotonicity of optimal policies under
general structural conditions on rewards and transition kernels.
For the three-state restart model, our result exploits the
identical-row structure of the restart action and gives the explicit
sufficient condition $m_d\geq \frac{2\beta c_0}{1-\beta c_0}L_R$, where $c_0=\delta(P^0).$ Thus, our condition provides a directly verifiable quantitative
criterion for an MLR-monotone threshold policy in the three-state
restart setting.
\end{remark}

\section{Whittle Index Policy and Index Computation Scheme}
\label{sec:Whittle-index}

We now study Whittle index policy, \cite{Whittle88}, where arms are scheduled in each time slot based on heuristic index assigned to an arm, that maps belief state to an real number called index.  This policy is shown to  be near optimal \cite{LiuZhao10} for two-state model.  The key criteria for index policy is the indexability, and one require to show that each arm is indexable. The notion  of indexability for each arm is defined using the passive set $\mathcal{B}(\lambda).$ It is set of all belief states for which not scheduling an arm is optimal action, i.e., it is the set of belief states for which passive action is optimal. Using this we define indexability.
\begin{eqnarray}
    \mathcal{B}(\lambda) :=  
    \{ \bm{\omega} \in \Omega:    Q_n^{\lambda}(\bm{\omega}_n,0 ) \geq     Q_n^{\lambda}(\bm{\omega}_n,1 )   \}
\end{eqnarray}

\begin{definition}[Indexability]
    A single-armed restless bandit is indexable if $ \mathcal{B}(\lambda) $ is monotonically increases from $\emptyset$ to $\Pi$ (full belief simplex) as $\lambda$ increases from $-\infty$ to $+\infty.$ i.e., if $\lambda_1 \leq \lambda_2$ then $\mathcal{B}(\lambda_1) \subseteq \mathcal{B}(\lambda_2).$
    An RMAB is indexable, if each arm is indexable.
\end{definition}

\begin{remark}
Indexability requires that the passive set $\mathcal{B}(\lambda)$
expands monotonically with the subsidy $\lambda$. Thus, once it is
optimal to choose the passive action at a belief $\bm{\omega}$ for
some subsidy $\lambda$, it remains optimal to be passive at
$\bm{\omega}$ for every larger subsidy. Thus, increasing
$\lambda$ cannot cause a belief to switch from passive back to
active.
\end{remark}
Using indexable definition, we define Whittle index, 

\begin{definition}(Whittle's index)
If an arm is indexable with  belief state $\bm{\omega} \in \Pi,$ then Whittle's index $W(\bm{\omega})$ is the minimum subsidy for which, not scheduling the arm will be the optimal action., i.e., 
 \begin{equation*}
  W(\bm{\omega}) = \inf\{\lambda \in\mathbb{R}:   Q^{\lambda}(\bm{\omega},0 ) \geq     Q^{\lambda}(\bm{\omega},1 ) \}
 \end{equation*}
\end{definition}

\begin{remark}
\begin{enumerate}
    \item For an indexable arm, $W(\bm{\omega})$ represents the critical
subsidy for passivity at belief $\bm{\omega}$. At
$\lambda=W(\bm{\omega})$, the two actions are typically indifferent,
$Q^{W(\bm{\omega})}(\bm{\omega},0)
=
Q^{W(\bm{\omega})}(\bm{\omega},1).$
If $\lambda<W(\bm{\omega}),$ then $Q^\lambda(\bm{\omega},1)>Q^\lambda(\bm{\omega},0),$ and scheduling the arm is optimal. If $\lambda>W(\bm{\omega})$ then $\lambda>W(\bm{\omega})
\quad\Longrightarrow\quad
Q^\lambda(\bm{\omega},0)\geq Q^\lambda(\bm{\omega},1),$  keeping the arm passive is optimal.

\item The main challenges are to establish indexability and to compute the
Whittle index. Indexability requires proving
\[
\mathcal{B}(\lambda_1)\subseteq\mathcal{B}(\lambda_2),
\qquad \lambda_1\leq\lambda_2,
\]
over the entire belief space. Moreover,
\[
W(\bm{\omega})
=\inf\{\lambda:
Q^\lambda(\bm{\omega},0)\geq Q^\lambda(\bm{\omega},1)\}
\]
is generally implicit, since $Q^\lambda$ itself depends on $\lambda$
through the optimal value function. Thus, even for an indexable arm,
a closed-form Whittle index is generally difficult to obtain.
\item To establish indexability, it is sufficient to show that
$\Delta(\bm{\omega},\lambda)$ is nondecreasing in $\lambda$. For
$\lambda_2>\lambda_1$, define
\[
D_{\lambda_1,\lambda_2}(\bm{\omega})
:=
V^{\lambda_2}(\bm{\omega})
-
V^{\lambda_1}(\bm{\omega}).
\]
Then
\begin{align}
\Delta(\bm{\omega},\lambda_2)
-\Delta(\bm{\omega},\lambda_1)
&=
(\lambda_2-\lambda_1)
+\beta D_{\lambda_1,\lambda_2}(\tau(\bm{\omega})) \nonumber\\
&\quad
-\beta\sum_{i=0}^{2}\omega_i
D_{\lambda_1,\lambda_2}(\Gamma_i).
\label{eq:Delta-lambda-difference}
\end{align}
Although $\lambda_2-\lambda_1>0$, the continuation terms need not have
a definite sign. Hence, monotonicity of
$\Delta(\bm{\omega},\lambda)$ in $\lambda$ does not follow directly
and requires additional structural conditions.
\end{enumerate}
\end{remark}


Suppose that $\Delta(\bm{\omega},\lambda)$ is differentiable almost everywhere. At every differentiability point,
\[
\frac{\partial}{\partial\lambda}
\Delta(\bm{\omega},\lambda)
=
1+\beta G_\lambda(\tau(\bm{\omega}))
-\beta\sum_{i=0}^{2}\omega_iG_\lambda(\Gamma_i),
\]
where $G_\lambda(\bm{\omega})
:=
\frac{\partial V^\lambda(\bm{\omega})}{\partial\lambda}.$
Since $V^\lambda(\bm{\omega})$ is nondecreasing and convex in
$\lambda$, it is differentiable almost everywhere and  subgradients exists at every
$\lambda$. Under the envelope interpretation \cite{milgrom2002envelope}, whenever the
derivative is uniquely defined,
\[
G_\lambda(\bm{\omega})
=
\mathbb{E}_{\bm{\omega}}^{\phi_\lambda^\star}
\left[
\sum_{t=0}^{\infty}\beta^t
\mathbf{1}\{A_t=0\}
\right],
\]
and hence $0\leq G_\lambda(\bm{\omega})\leq\frac{1}{1-\beta}.$ 
\begin{lemma}[Uniform Lipschitz Continuity of Subsidy Sensitivity]
We have 
\[
\left|
G_\lambda(\bm{\omega})
-G_\lambda(\bm{\omega}')
\right|
\leq
L_G\|\bm{\omega}-\bm{\omega}'\|,
\quad
\forall\,\bm{\omega},\bm{\omega}'\in\Pi,
\]
where $L_G= \frac{\beta}{2(1-\beta)}.$
\label{Lemma:Uniform-Lip-sensitivity}
\end{lemma}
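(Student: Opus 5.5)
The plan is to work with the finite-horizon approximations of the subsidy sensitivity and pass to the limit. Let $V_t^\lambda$ denote the $t$-step value iterates, with $V_t^\lambda\to V^\lambda$ uniformly as in the proof sketch of Lemma~\ref{lemma:monotonicity}, and let $G_{t,\lambda}:=\partial V_t^\lambda/\partial\lambda$ wherever it is uniquely defined. By the envelope interpretation, at such points $G_{t,\lambda}$ satisfies the recursion
\[
G_{t+1,\lambda}(\bm{\omega})=
\begin{cases}
1+\beta\, G_{t,\lambda}(\tau(\bm{\omega})), & \bm{\omega} \text{ strictly passive},\\
\beta\sum_{i=0}^{2}\omega^i G_{t,\lambda}(\Gamma_i), & \bm{\omega} \text{ strictly active},
\end{cases}
\]
with $0\le G_{t,\lambda}\le \frac{1}{1-\beta}$ throughout. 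Throughout, $\|\cdot\|$ is the $\ell_1$ norm. I would prove by induction on $t$ that $G_{t,\lambda}$ is $L_G$-Lipschitz on $\Pi$, and then let $t\to\infty$. I will also use $\operatorname{span}(G_{t,\lambda})\le \frac{1}{1-\beta}$, which holds because $G_{t,\lambda}$ takes values in $[0,\frac{1}{1-\beta}]$.

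\textbf{Same-region step.} I first treat pairs $\bm{\omega},\bm{\omega}'$ that lie in the same action region.
\begin{itemize}
\item In the \emph{active} region, the difference is $\beta\sum_i(\omega^i-\omega^{i,\prime})G_{t,\lambda}(\Gamma_i)$. Since $\sum_i(\omega^i-\omega^{i,\prime})=0$, the elementary inequality $|\sum_i x_i f_i|\le \tfrac12\operatorname{span}(f)\|x\|_1$ for zero-sum $x$ applies. It gives $|G_{t+1,\lambda}(\bm{\omega})-G_{t+1,\lambda}(\bm{\omega}')|\le \frac{\beta}{2(1-\beta)}\|\bm{\omega}-\bm{\omega}'\|_1=L_G\|\bm{\omega}-\bm{\omega}'\|_1$. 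This is exactly where the constant $L_G$ comes from, and it requires no induction hypothesis.
\item In the \emph{passive} region, the difference is $\beta\,[G_{t,\lambda}(\tau\bm{\omega})-G_{t,\lambda}(\tau\bm{\omega}')]$. The induction hypothesis together with the Dobrushin contraction $\|\tau\bm{\omega}-\tau\bm{\omega}'\|_1\le\delta(P^0)\|\bm{\omega}-\bm{\omega}'\|_1$ bounds this by $\beta\,\delta(P^0)L_G\|\bm{\omega}-\bm{\omega}'\|_1\le L_G\|\bm{\omega}-\bm{\omega}'\|_1$.
\end{itemize}

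\textbf{Mixed-region step (the main obstacle).} When $\bm{\omega}$ is passive and $\bm{\omega}'$ is active, the two recursions differ, and a priori the gap can be of order $1$. I would handle this with a path argument. Parametrize the segment $\bm{\omega}_s=(1-s)\bm{\omega}'+s\bm{\omega}$. Under the threshold structure of Theorem~\ref{thm:threshold-condition}, $\Delta(\bm{\omega}_s,\lambda)$ changes sign at most finitely often along this segment, so the segment splits into finitely many pieces. On each piece the same-region bound holds, and the $\ell_1$ lengths of the pieces add up to $\|\bm{\omega}-\bm{\omega}'\|_1$.

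At a switching point $\bm{\omega}^\ast$ we have $Q^\lambda(\bm{\omega}^\ast,0)=Q^\lambda(\bm{\omega}^\ast,1)$. If $V^\lambda(\bm{\omega}^\ast)$ is differentiable in $\lambda$ there, then, because $V^\lambda$ is the maximum of two functions convex in $\lambda$, the two one-sided derivatives $\partial_\lambda Q^\lambda(\bm{\omega}^\ast,0)$ and $\partial_\lambda Q^\lambda(\bm{\omega}^\ast,1)$ must coincide. Hence the passive-branch and active-branch expressions for $G$ agree at $\bm{\omega}^\ast$, and the piecewise bounds telescope across the switching point.

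If differentiability fails at $\bm{\omega}^\ast$, I would instead bound the left and right derivatives separately. Using convexity in $\lambda$, the subdifferential is an interval, and I would compare the one-sided branches through the Hausdorff distance between the subdifferentials. The aim is to show this excess vanishes for almost every $\lambda$, which is all the lemma needs, since it is stated at differentiability points.

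Finally, uniform convergence of $V_t^\lambda$ together with convexity in $\lambda$ gives convergence of the derivatives at points where $G_\lambda$ exists. The Lipschitz bound passes to the limit, yielding the stated constant $L_G=\frac{\beta}{2(1-\beta)}$. I expect the mixed-region, non-differentiable case to be the delicate step.
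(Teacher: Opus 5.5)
The mixed-region step is where your proposal breaks. It cannot be repaired, because the jump it tries to rule out actually occurs. You are right that if $V^\lambda(\bm{\omega}^\ast)$ is differentiable in $\lambda$ at a switching point, the two branch slopes coincide. But that is exactly the statement $\partial_\lambda\Delta(\bm{\omega}^\ast,\lambda)=0$, which is the degenerate case. Generically the slopes differ, and in particular whenever the indexability argument works strictly: Theorem~\ref{lem:lipschitz-indexability} bounds $\partial_\lambda\Delta$ below by $1-\beta L_GD_{\Gamma,\tau}$, which is positive when its condition holds strictly. The difference $1+\beta G_\lambda(\tau(\bm{\omega}^\ast))-\beta\sum_s(\omega^\ast)^sG_\lambda(\Gamma_s)$ is then precisely the jump of $G_\lambda$ between nearby strictly passive and strictly active beliefs, where $G_\lambda$ is typically well defined.

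Neither fallback helps. Comparing subdifferentials in Hausdorff distance just reproduces the jump: at $\bm{\omega}^\ast$ the subdifferential is the whole interval between the two slopes, and its Hausdorff distance to the singleton on either side equals the jump. ``Almost every $\lambda$'' is the wrong quantifier. Nondifferentiability is a null set of $\lambda$ for each \emph{fixed} belief, but the crossing point on your segment moves with $\lambda$ (it satisfies $W(\bm{\omega}^\ast)=\lambda$). For every $\lambda$ at which both regions are nonempty, such points exist and separate the two regions.

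Here is a concrete counterexample. Take $P^0=I$, $\Gamma_s\equiv\gamma$ uniform (both TP2), and $r(i,0)=r(i,1)=i$. For $\lambda\in(-\beta,0)$ the passive region is $\{\omega^1+2\omega^2\geq 1-\lambda/\beta\}$. A strictly passive belief never moves, so $G_\lambda\equiv\frac{1}{1-\beta}$ there. Meanwhile $\gamma$ is strictly active, so $G_\lambda\equiv\beta G_\lambda(\gamma)=0$ on the active region. No Lipschitz constant survives across that line. Separately, your appeal to Theorem~\ref{thm:threshold-condition} for finitely many sign changes imports the $m_d$ condition, which the lemma does not assume, and it only controls MLR-ordered segments.

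The same defect affects your passive-region step. Bounding $\beta|G_{t,\lambda}(\tau(\bm{\omega}))-G_{t,\lambda}(\tau(\bm{\omega}'))|$ by the induction hypothesis needs the \emph{global} bound at stage $t$. That is because $\tau(\bm{\omega})$ and $\tau(\bm{\omega}')$ can straddle a boundary even when $\bm{\omega},\bm{\omega}'$ do not.

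The paper never crosses a boundary. Its active-region step is your total-variation step. For passive beliefs it restricts to pairs with the same first activation time $n$. It writes $G_\lambda$ as the accumulated passive occupation $\frac{1-\beta^n}{1-\beta}$ plus a discounted average of $G_\lambda(\Gamma_s)$ under $\tau^n(\bm{\omega})$. Combining the a priori range $[0,\frac{1}{1-\beta}]$ with Dobrushin contraction gives $\frac{(\beta c_0)^n}{2(1-\beta)}\|\bm{\omega}-\bm{\omega}'\|_1$, with no induction hypothesis.

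That argument only covers pairs in a common regime; the paper's later appendix concedes that boundary crossings need a separate compatibility condition. The counterexample shows that the same-regime statement is all that is true. The salvageable part of your plan is the same-regime bound. A global statement would need an extra hypothesis, such as agreement of the two occupation measures at switching points, rather than a better proof.
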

The proof of Lemma is in the Appendix \ref{Proof:Lemma-Uniform-Lip-sensitivity}.

At a nondifferentiability point, $G_\lambda$ may be replaced by an
appropriate subgradient, corresponding to
an optimal policy at that subsidy. We can establish same Lipschitz bound and same $L_G.$  We skip the details and it is provided in the Appendix.  Define $D_{\Gamma,\tau}
:=
\sup_{\bm{\omega}\in\Pi}
\sum_{s=0}^{2}\omega^s
\left\|
\Gamma_s-\tau(\bm{\omega})
\right\|.$

\begin{theorem}[Sufficient condition for indexability]
\label{lem:lipschitz-indexability}
Suppose that 
\[
\left|
G_\lambda(\bm{\omega})
-G_\lambda(\bm{\omega}')
\right|
\leq
L_G\|\bm{\omega}-\bm{\omega}'\|,
\quad
\forall\,\bm{\omega},\bm{\omega}'\in\Pi
\]
with $L_G = \frac{\beta}{2(1-\beta)}.$
If $\beta L_GD_{\Gamma,\tau}\leq1,$ 
then $\Delta(\bm{\omega},\lambda)$ is nondecreasing in $\lambda$
for every $\bm{\omega}\in\Pi$. Consequently, the passive sets
$\mathcal{B}(\lambda)$ are nondecreasing in $\lambda$, and the arm
is indexable.
\end{theorem}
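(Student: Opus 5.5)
The plan is to work with the derivative formula for $\Delta$ in $\lambda$ stated just before Lemma~\ref{Lemma:Uniform-Lip-sensitivity}, and to show that it is nonnegative wherever it exists. Then I pass from almost-everywhere differentiability to monotonicity through absolute continuity. Fix $\bm{\omega}\in\Pi$. At a differentiability point $\lambda$,
\[
\frac{\partial}{\partial\lambda}\Delta(\bm{\omega},\lambda)
=1-\beta\sum_{s=0}^{2}\omega^s\bigl[G_\lambda(\Gamma_s)-G_\lambda(\tau(\bm{\omega}))\bigr],
\]
where I used $\sum_s\omega^s=1$ to move $G_\lambda(\tau(\bm{\omega}))$ inside the sum. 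This rewriting is the key algebraic step: the dangerous negative continuation term in \eqref{eq:Delta-lambda-difference} becomes a belief-weighted average of differences of $G_\lambda$ between each activation posterior $\Gamma_s$ and the passively predicted belief $\tau(\bm{\omega})$.

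Next I apply the Lipschitz hypothesis (Lemma~\ref{Lemma:Uniform-Lip-sensitivity}) term by term: $|G_\lambda(\Gamma_s)-G_\lambda(\tau(\bm{\omega}))|\le L_G\|\Gamma_s-\tau(\bm{\omega})\|$. Summing with weights $\omega^s\ge 0$ and taking the supremum over $\bm{\omega}$ gives
\[
\frac{\partial}{\partial\lambda}\Delta(\bm{\omega},\lambda)\ge 1-\beta L_G\sum_s\omega^s\|\Gamma_s-\tau(\bm{\omega})\|\ge 1-\beta L_G D_{\Gamma,\tau}\ge 0
\]
by assumption. To integrate, I note that $V^\lambda(\bm{\omega})$ is convex in $\lambda$ for each fixed belief, as a supremum over policies of functions affine in $\lambda$. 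It is also Lipschitz in $\lambda$ with constant $1/(1-\beta)$, since $0\le G_\lambda\le 1/(1-\beta)$. Hence $\lambda\mapsto\Delta(\bm{\omega},\lambda)$ is a finite combination of Lipschitz functions evaluated at the four beliefs $\tau(\bm{\omega}),\Gamma_0,\Gamma_1,\Gamma_2$. It is therefore absolutely continuous, and differentiable except on a countable union of countable sets. For $\lambda_1<\lambda_2$ I then get $\Delta(\bm{\omega},\lambda_2)-\Delta(\bm{\omega},\lambda_1)=\int_{\lambda_1}^{\lambda_2}\partial_\lambda\Delta\,d\lambda\ge 0$.

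The step I expect to be the main obstacle is justifying the derivative formula and the Lipschitz bound simultaneously at a common full-measure set of $\lambda$. The formula needs $G_\lambda$ to be defined at $\tau(\bm{\omega})$ and at each $\Gamma_s$ for the same $\lambda$, and the Lipschitz bound is stated for $G_\lambda$ as a function of belief. I would handle this by using the fact that only four beliefs are involved, so their nondifferentiability sets are countable and their union is null. If needed, at the remaining points I would replace $G_\lambda$ by one-sided derivatives, that is, the subgradients coming from a single optimal policy at that subsidy, for which the paper asserts the same $L_G$. As an alternative that avoids derivatives altogether, I would use convexity in $\lambda$ to write $D_{\lambda_1,\lambda_2}(\bm{\omega})=\int_{\lambda_1}^{\lambda_2}G^{+}_\lambda(\bm{\omega})\,d\lambda$ with right derivatives, and apply the same bound inside the integral in \eqref{eq:Delta-lambda-difference}.

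Finally, monotonicity of $\Delta(\bm{\omega},\cdot)$ gives $\mathcal{B}(\lambda_1)\subseteq\mathcal{B}(\lambda_2)$ directly: if $\Delta(\bm{\omega},\lambda_1)\ge 0$ then $\Delta(\bm{\omega},\lambda_2)\ge 0$. For the endpoint behavior, $\Delta(\bm{\omega},\lambda)$ has derivative at least $1-\beta L_G D_{\Gamma,\tau}$ and is bounded in terms of $r_{\max}$ uniformly over $\bm{\omega}$ for fixed $\lambda$. I would then use $\mathcal{B}(\lambda)=\Pi$ for $\lambda$ large and $\mathcal{B}(\lambda)=\emptyset$ for $\lambda$ very negative. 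This can be checked directly: for $\lambda>2r_{\max}/(1-\beta)$ passivity dominates, and for $\lambda<-2r_{\max}/(1-\beta)$ activity dominates. These follow from the crude bounds $|Q^\lambda(\bm{\omega},0)-Q^\lambda(\bm{\omega},1)-\lambda|\le 2r_{\max}+\beta\,\mathrm{osc}$, and together they give indexability.
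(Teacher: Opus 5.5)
Your proposal is correct and follows the paper's proof. You rewrite $\partial_\lambda\Delta(\bm{\omega},\lambda)$ as $1-\beta\sum_s\omega^s\bigl[G_\lambda(\Gamma_s)-G_\lambda(\tau(\bm{\omega}))\bigr]$, bound each term with the Lipschitz hypothesis to get $\partial_\lambda\Delta\geq 1-\beta L_G D_{\Gamma,\tau}\geq 0$, and conclude that the passive sets are nested. Your extra steps are sound and fill in details the paper's main-text proof only asserts: integrating the almost-everywhere derivative via Lipschitz continuity of $\Delta$ in $\lambda$ (only four beliefs enter, so the exceptional set is countable), and giving explicit thresholds $\pm 2r_{\max}/(1-\beta)$ for the limiting behaviour of $\mathcal{B}(\lambda)$.
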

\textbf{Proof:}
By the Lipschitz assumption,
\[
G_\lambda(\Gamma_s)
-G_\lambda(\tau(\bm{\omega}))
\leq
L_G\|\Gamma_s-\tau(\bm{\omega})\|.
\]
Therefore,
\begin{align*}
\frac{\partial}{\partial \lambda}\Delta(\bm{\omega},\lambda)
&\geq
1-\beta L_G
\sum_{s=0}^{2}\omega^s
\|\Gamma_s-\tau(\bm{\omega})\|\\
&\geq
1-\beta L_GD_{\Gamma,\tau}
\geq0.
\end{align*}
Hence, for each fixed $\bm{\omega}$,
$\Delta(\bm{\omega},\lambda)$ is nondecreasing in $\lambda$.


Since
$\mathcal{B}(\lambda)
=
\{\bm{\omega}\in\Pi:
\Delta(\bm{\omega},\lambda)\geq0\},$
we obtain
$\lambda_1\leq\lambda_2
\quad\Longrightarrow\quad
\mathcal{B}(\lambda_1)
\subseteq
\mathcal{B}(\lambda_2).$
Together with the usual limiting conditions $\mathcal{B}(\lambda)\to\emptyset
\quad\text{as }\lambda\to-\infty,$
$\mathcal{B}(\lambda)\to\Pi
\quad\text{as }\lambda\to+\infty.$
the arm is indexable.


Using the bound $L_G
\leq
\frac{\beta}
{2(1-\beta)},$ the sufficient condition $\beta L_GD_{\Gamma,\tau}\leq1$ yields
\[
\beta^2D_{\Gamma,\tau}
\leq
2(1-\beta)
\]
Note that $0\leq D_{\Gamma,\tau} \leq 2.$ The proof is provided in Appendix \ref{Appendix: Bound}.

\begin{proposition}[Admissible discount factor]
\label{prop:beta-admissible-range}
Let $D_{\Gamma,\tau}\in[0,2]$.
If $D_{\Gamma,\tau}>0$, the sufficient condition $\beta^2D_{\Gamma,\tau}\leq 2(1-\beta)$
holds if and only if
\begin{equation}
\label{eq:beta-maximum}
0<\beta\leq
\beta_{\max}(D_{\Gamma,\tau})
:=
\frac{2}{1+\sqrt{1+2D_{\Gamma,\tau}}}.
\end{equation}
If $D_{\Gamma,\tau}=0$, the condition holds for every
$\beta\in(0,1)$.
\label{Prop:admissible-discount-parameter}
\end{proposition}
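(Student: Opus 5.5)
The plan is to treat the condition $\beta^2D_{\Gamma,\tau}\leq 2(1-\beta)$ as a quadratic inequality in $\beta$ and solve it on $(0,1)$. Write $D:=D_{\Gamma,\tau}$. The case $D=0$ needs no work: the left-hand side vanishes, and the right-hand side $2(1-\beta)$ is positive for every $\beta\in(0,1)$, so the condition holds throughout.

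For $D>0$ I would rewrite the condition as
\[
f(\beta):=D\beta^2+2\beta-2\leq 0 .
\]
Since $D>0$, $f$ is an upward-opening parabola with discriminant $4+8D>0$. Its two real roots are
\[
\beta_\pm=\frac{-1\pm\sqrt{1+2D}}{D}.
\]
Because $\beta_-<0$ and $f(0)=-2<0$, on $\beta>0$ we have $f(\beta)\leq 0$ if and only if $\beta\leq\beta_+$. Rationalizing the numerator gives
\[
\beta_+=\frac{(\sqrt{1+2D}-1)(\sqrt{1+2D}+1)}{D(\sqrt{1+2D}+1)}=\frac{2D}{D(1+\sqrt{1+2D})}=\frac{2}{1+\sqrt{1+2D}},
\]
which is exactly $\beta_{\max}(D)$ in \eqref{eq:beta-maximum}. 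Both implications follow at once from the sign analysis of $f$, so this proves the claimed equivalence.

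Finally, I would check that the interval is nondegenerate and compatible with $\beta<1$. For $D\in(0,2]$ we have $\sqrt{1+2D}\in(1,\sqrt5]$, so $\beta_{\max}\in[2/(1+\sqrt5),1)$; that is, $\beta_{\max}$ ranges between $1/\varphi\approx 0.618$ and $1$. Hence $(0,\beta_{\max}]\subset(0,1)$, and no intersection with the standing assumption $\beta<1$ is needed. There is no real obstacle here. The only care needed is to choose the correct root via the sign of $f(0)$ and to rationalize so that $\beta_{\max}$ appears in the stated form. I would also note that $\beta_{\max}$ decreases in $D$ and tends to $1$ as $D\to 0^+$.
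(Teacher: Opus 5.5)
Your proposal is correct and follows the paper's proof essentially step for step. Like the paper, you rewrite the condition as $D_{\Gamma,\tau}\beta^2+2\beta-2\le 0$, take the roots $\beta_\pm=(-1\pm\sqrt{1+2D_{\Gamma,\tau}})/D_{\Gamma,\tau}$, keep the positive one, rationalize it to $2/(1+\sqrt{1+2D_{\Gamma,\tau}})$, and handle $D_{\Gamma,\tau}=0$ directly. Your added check that $\beta_{\max}\in[(\sqrt5-1)/2,1)$ makes explicit why restricting to $\beta\in(0,1)$ costs nothing, a point the paper states without justification.
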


Proof is in the Appendix \ref{Proof of Proposition 2}.

Note that the estimate of $L_G$ uses
the global bound
$0\leq G_\lambda(\bm{\omega})\leq\frac{1}{1-\beta}.$
Since indexability depends only on differences of $G_\lambda$, a
sharper span or contraction estimate may be used.


\begin{remark}[Behavior of the sufficient condition as $\beta\uparrow1$]
\label{rem:indexability-beta}
\begin{enumerate}
\item  
The sufficient condition 
can equivalently be written as
$D_\Gamma
\leq
\frac{2(1-\beta)}{\beta^2}.$
Hence,  $D_{\Gamma,\tau}
=
O(1-\beta),$ $\beta\uparrow1.$

\item  If $D_{\Gamma,\tau}>0$ is fixed independently of $\beta$,
the above sufficient condition cannot hold for all discount factors
arbitrarily close to one. In the limiting case $D_{\Gamma,\tau}=0$, however,
the condition is satisfied for every $\beta\in(0,1)$.

\item 
The quantity $D_{\Gamma,\tau} =
\sup_{\bm{\omega}\in\Pi}
\sum_{s=0}^{2}\omega^s
\|\Gamma_s-\tau(\bm{\omega})\|$
measures the maximal discrepancy between the observation-conditioned
beliefs $\Gamma_s$ and the passively predicted belief
$\tau(\bm{\omega})$. Thus, a small $D_{\Gamma,\tau}$ corresponds to a model
in which the observation/update mechanism does not move the belief
too far from its passive prediction. Consequently, transition and
observation parameters yielding closely spaced posterior beliefs
make the sufficient indexability condition easier to satisfy.
\end{enumerate}
\end{remark}

\subsection{Whittle Index Computation Algorithm}

Note that closed form expressions for value functions are not available.
Hence, it is difficult to obtain a closed form expression for the index. For a fixed belief state $\bm{\omega}$, we have  the action-value
difference $\Delta(\bm{\omega},\lambda)
=
Q^{\lambda}(\bm{\omega},0)
-
Q^{\lambda}(\bm{\omega},1).$
Recall that the Whittle index $W(\bm{\omega})
=
\inf\{\lambda:\Delta(\bm{\omega},\lambda)\ge0\}.$ Therefore, $\Delta(\bm{\omega},W(\bm{\omega}))=0.$

Thus, computation of the Whittle index can be viewed as a root-finding
problem in the subsidy parameter $\lambda.$ 

For a given subsidy $\lambda_t$, the value function is first computed
using value iteration, and the corresponding action values $Q^{\lambda_t}(\bm{\omega},0),$ $Q^{\lambda_t}(\bm{\omega},1)$ 
are evaluated. and hence $\Delta(\bm{\omega},\lambda_t)
=
Q^{\lambda_t}(\bm{\omega},0)
-
Q^{\lambda_t}(\bm{\omega},1)$
If $|\Delta(\bm{\omega},\lambda_t)|\leq h$, where $h>0$ is a prescribed tolerance, then
$\lambda_t$ is taken as an approximation of $W(\bm{\omega})$.
Otherwise, the subsidy is updated according to
\[
\lambda_{t+1}
=
\lambda_t-\alpha_t \Delta(\bm{\omega},\lambda_t),
\]
where $\{\alpha_t\}$ is a sequence of step sizes such that $\alpha_t>0,$ $\sum_{t=0}^{\infty}\alpha_t=\infty$ and $\sum_{t=0}^{\infty}\alpha_t^2<\infty.$
The value iteration procedure is as follows.
\begin{eqnarray*}
     Q_{t+1}^{\lambda_t}(\bm{\omega},0 ) &=&  R(\bm{\omega},0) + \lambda_t + \beta V_t^{\lambda_t}(\tau(\bm{\omega})), 
    \\
     Q_{t+1}^{\lambda_t}(\bm{\omega},1 ) &=&  R(\bm{\omega},1) +  \beta  \sum_{s=0}^{2} \bm{\omega}^{s} V_{t}^{\lambda_t}(\Gamma_s) \\
    V_{t+1}^{\lambda_t}(\bm{\omega}) &=& \max\{ Q_{t+1}^{\lambda_t}(\bm{\omega},0 ), Q_{t+1}^{\lambda_t}(\bm{\omega},1 )\}    
\end{eqnarray*}


The value-iteration procedure used at each subsidy value also requires
an inner convergence tolerance $\epsilon_{\mathrm{VI}}$.  The value iteration may be terminated when $\left\|
V_{t+1}^{\lambda_t }-V_t^{\lambda_t}
\right\|_{\infty}
\leq \epsilon_{\mathrm{VI}}.$ Here, $\epsilon_{\mathrm{VI}}$ is chosen sufficiently
smaller than the  index tolerance $h$ so that numerical errors in
the computed $Q$-values do not dominate the root-finding criterion. 

The mathematical analysis of $\lambda_t$ update is done using two-timescale algorithm, where 
on slower timescale we update the subsidy $\lambda_t$ and on faster timescale we run value iteration $Q_{t}^{\lambda_t}(\bm{\omega},0),$ $Q_{t}^{\lambda_t}(\bm{\omega},1).$  The convergence analysis follows from stochastic approximation algorithms, see \cite[Section 9.3] {Borkar08} and \cite[Section III, B]{Borkar2018}.  
The Whittle index for belief state $\bm{\omega},$ is $\lambda = \lambda(\bm{\omega})$  such that $\vert Q_{t}^{\lambda_t = \lambda}(\bm{\omega},0) - Q_{t}^{\lambda_t = \lambda}(\bm{\omega},1)  \vert \leq h$ and $\left\|
V_{t+1}^{\lambda_t = \lambda}-V_t^{\lambda_t=\lambda}
\right\|_{\infty}
\leq \epsilon_{\mathrm{VI}}.$

We notice that the preceding algorithm of index computation is computationally expensive. It depends on the stepsize sequence $\alpha_t$ and it is slow. We next discuss another alternative for index computation bisection method which is computationally efficient compared to stochastic approximation based algorithm. 

\begin{algorithm}[h]
\caption{Whittle Index Computation Using Bisection}
\label{algo:WI-bisection}
\KwIn{Belief state $\bm{\omega}$, subsidy bounds
$\lambda_L,\lambda_U$, tolerance $h$, and value-iteration
tolerance $\epsilon_{\mathrm{VI}}$.}
\KwOut{Whittle index $W(\bm{\omega})$.}


Compute $\Delta(\bm{\omega},\lambda_L)$ and
$\Delta(\bm{\omega},\lambda_U)$ using value iteration\;

\If{$\Delta(\bm{\omega},\lambda_L)
      \Delta(\bm{\omega},\lambda_U)>0$}{
\textbf{stop}: the initial interval does not bracket an index\;
}

\While{$\lambda_U-\lambda_L>h$}{
Set
\[
\lambda_M\gets\frac{\lambda_L+\lambda_U}{2}.
\]

Compute
$Q^{\lambda_M}(\bm{\omega},0)$ and
$Q^{\lambda_M}(\bm{\omega},1)$ using value iteration with
tolerance $\epsilon_{\mathrm{VI}}$\;

Set
\[
\Delta(\bm{\omega},\lambda_M)\gets
Q^{\lambda_M}(\bm{\omega},0)
-
Q^{\lambda_M}(\bm{\omega},1).
\]

\If{$|\Delta(\bm{\omega},\lambda_M)|\leq h$}{
$W(\bm{\omega})\gets\lambda_M$\;
\Return{$W(\bm{\omega})$}\;
}

\If{$\Delta(\bm{\omega},\lambda_L)\Delta(\bm{\omega},\lambda_M)\leq0$}{
$\lambda_U\gets\lambda_M$\;
}
\Else{
$\lambda_L\gets\lambda_M$\;
}
}

\[
W(\bm{\omega})
\gets
\frac{\lambda_L+\lambda_U}{2}.
\]

\Return{$W(\bm{\omega})$}\;
\end{algorithm}

\paragraph{Bisection method for index computation.}
For a fixed belief $\bm{\omega}$, the Whittle index $W(\bm{\omega})$ is $\Delta(\bm{\omega}, W(\bm{\omega})) = 0.$ 
We start with an interval $[\lambda_L,\lambda_U]$ that satisfies $\Delta(\bm{\omega},\lambda_L)\leq 0
$ and $\Delta(\bm{\omega},\lambda_U)\geq0.$ 

At each iteration, the midpoint $\lambda_M=\frac{\lambda_L+\lambda_U}{2}$ is computed. For each $\lambda_M$, the corresponding value function
and $Q$-values are computed using value iteration. If $\Delta(\bm{\omega},\lambda_M)<0,$, choose $a=1$ and the lower endpoint is updated as 
\[
\lambda_L\leftarrow\lambda_M.
\]
If $\Delta(\bm{\omega},\lambda_M)\geq 0$, choose $a=0$ and the upper endpoint is updated as 
\[
\lambda_U\leftarrow\lambda_M.
\]
Thus, the search interval is reduced by a factor of two at each iteration. The procedure ends when $|\Delta(\bm{\omega},\lambda_M)|\leq h_Q$ or $ \lambda_U-\lambda_L\leq h_\lambda$, where $h_Q$ and $h_\lambda$ are prescribed tolerances. The resulting midpoint is then used as the numerical approximation of $W(\bm{\omega})$. This procedure is particularly suitable when
indexability ensures the required monotonicity of the action-value
difference with respect to the subsidy.


The detailed algorithm is provided in \ref{algo:WI-bisection}.

\begin{remark}[Computational complexity]
Let $C_{\mathrm{VI}}$ denote the cost of solving the single-arm
dynamic program for a fixed subsidy. Bisection computes the index
with subsidy accuracy $h_\lambda$ in
\[
O\!\left(
C_{\mathrm{VI}}
\log\frac{\lambda_U-\lambda_L}{h_\lambda}
\right)
\]
operations per belief state. In contrast, a stochastic-approximation
scheme requiring $N_{\mathrm{SA}}$ iterations has complexity
$O(N_{\mathrm{SA}}C_{\mathrm{VI}})$. Thus, under indexability and
accurately computed action values, bisection provides a predictable
logarithmic search complexity, whereas stochastic approximation is
more suitable when action values are obtained from noisy or
sample-based estimates.
\end{remark}











\section{Simulation Results}
\label{sec:sim-example}

In this section, we first present the threshold type policy and then compare the performance of the Whittle index policy, myopic policy, and round-robin policy for homogeneous arms and heterogeneous arms. All simulations are done for the three state  model.

\subsection{Demonstration of threshold-type policy} 
The passive and active transition probability matrices are
\[
P^{p}=
\begin{bmatrix}
0.5 & 0.4 & 0.1\\
0.2 & 0.5 & 0.3\\
0.1 & 0.2 & 0.7
\end{bmatrix},
\qquad
P^{a}=
\begin{bmatrix}
0.4 & 0.4 & 0.2\\
0.1 & 0.4 & 0.5\\
0.05 & 0.15 & 0.8
\end{bmatrix}.
\]
The reward vectors under active and passive actions are  $R^{a}=[0,\;1,\;2]$ and $R^{p}=[0,\;0.8,\;1.6]$, respectively.
We use a discount factor of $\beta=0.99$. We discretize the belief simplex using a grid resolution of $80$ and solve the discounted Bellman equation by value iteration until tolerance of $10^{-6}$ is reached  or a maximum of $2000$ iterations. 
In Figure~\ref{fig:policy_subsidy}, the optimal policy is displayed for  belief states.  We compute the optimal policy for subsidy values $\lambda = 0.4$ and $\lambda = 0.5$ and we visualize the resulting active and passive regions on the belief simplex.
These experiments illustrate how the switching boundary evolves with the subsidy and demonstrate the monotonic expansion of the passive region as the subsidy increases.

The blue points represent belief states belonging to the passive set $\mathcal{B}(\lambda)$ at subsidy $\lambda$, while the red points represent belief states in the active set. As the subsidy $\lambda$ increases, the passive set $\mathcal{B}(\lambda)$ expands, illustrating the indexability property. This behavior is consistent with the indexability property of the model.



\begin{figure}[t]
\begin{tabular}{cc}
    \centering
    \hspace{-0.02\textwidth}
    \begin{subfigure}{0.24\textwidth}
        \centering
        \includegraphics[width=\linewidth]{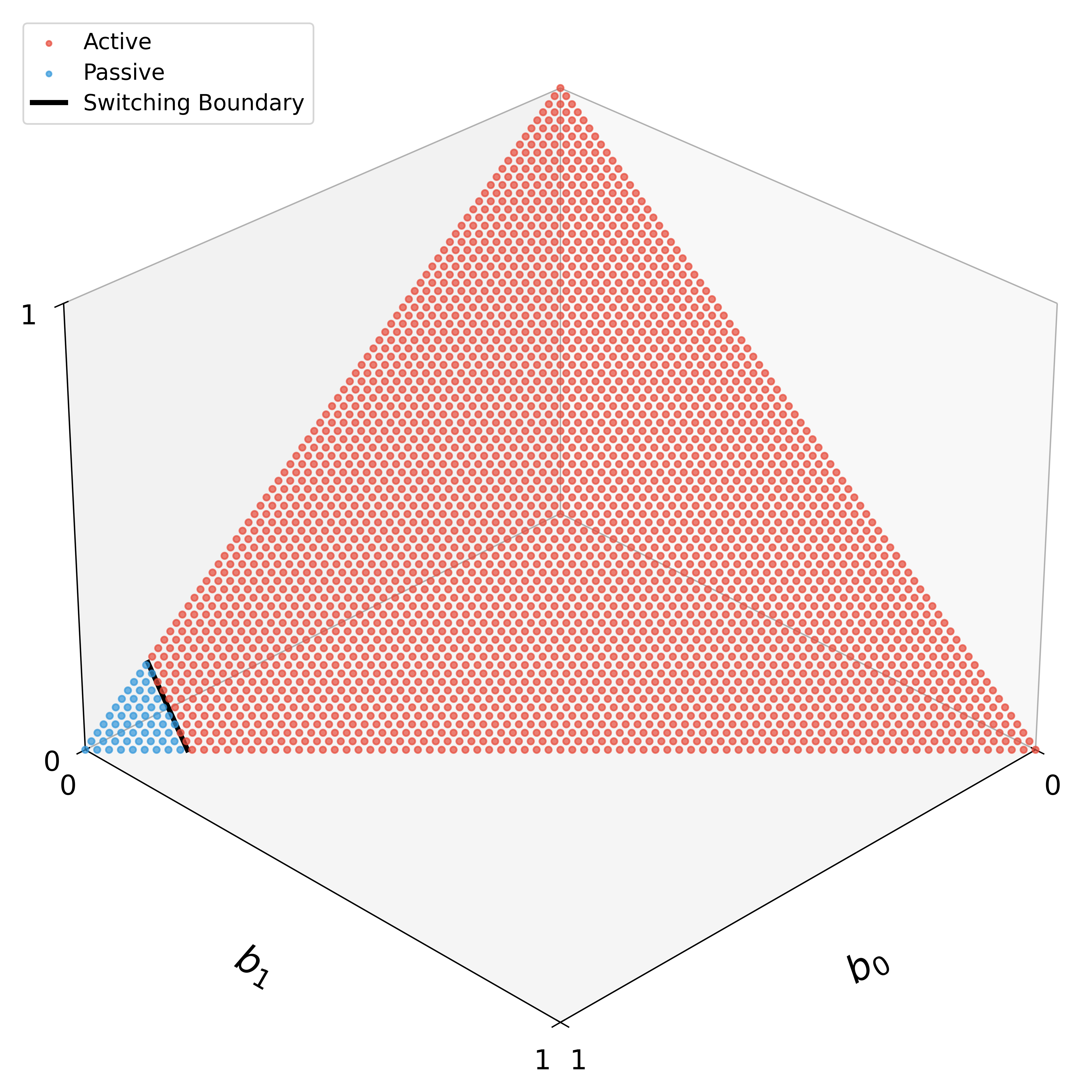}
        \caption{$\lambda=0.4$}
    \end{subfigure}
    &
    \hspace{-0.02\textwidth}
    \begin{subfigure}{0.24\textwidth}
        \centering
        \includegraphics[width=\linewidth]{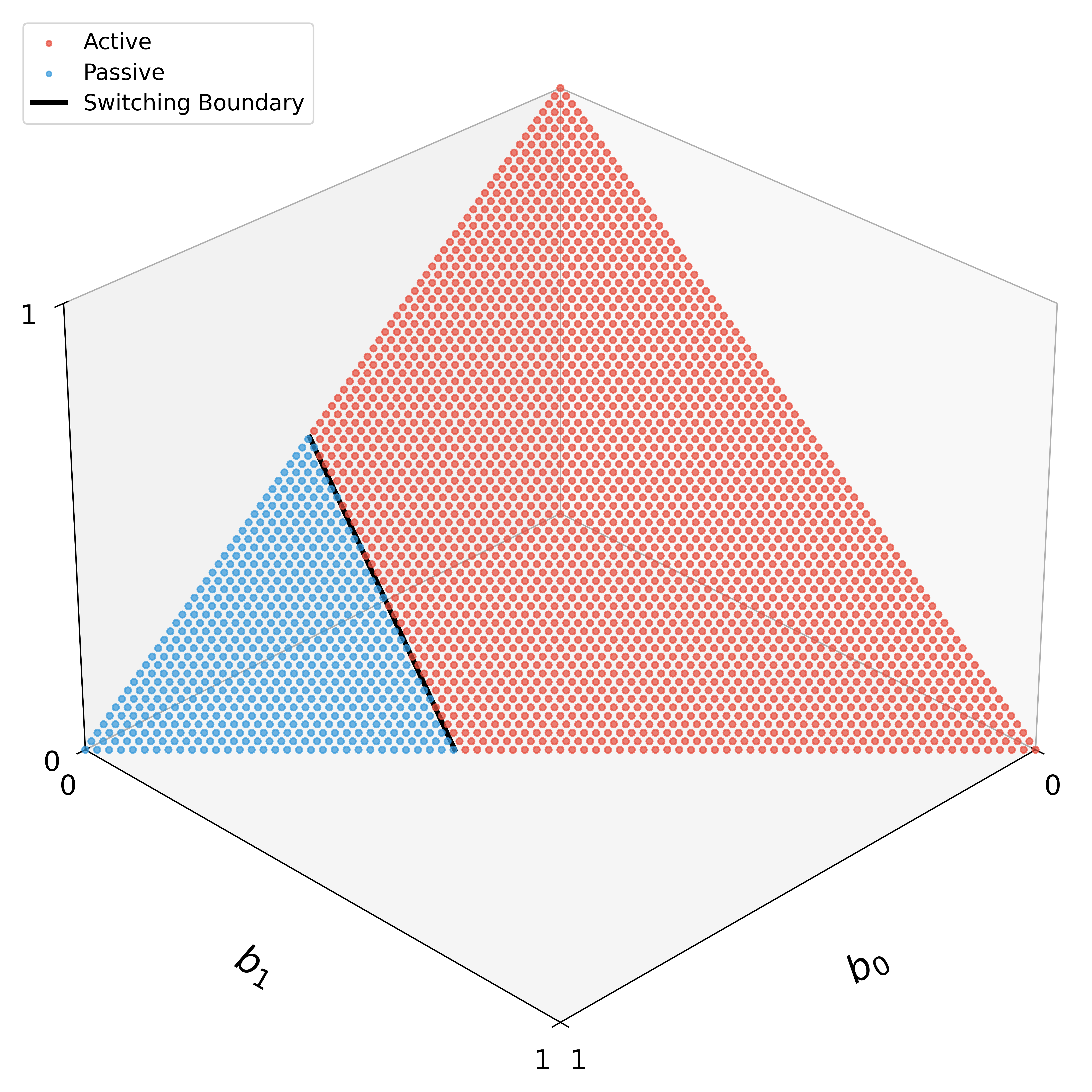}
        \caption{$\lambda=0.5$}
    \end{subfigure}
\end{tabular}
    \caption{Optimal policy regions for different subsidy values. The blue region denotes the passive action, while the red region denotes the active action. The black curve represents the switching boundary between the two actions. As the subsidy increases, the passive region expands, indicating that more belief states are optimally left passive.}
    \label{fig:policy_subsidy}
\end{figure}
\subsection{Performance evaluation of different policies}



We compare the discounted cumulative reward as function of iteration of the Whittle index policy, a myopic greedy policy, and a round-robin policy 
At each decision epoch, exactly $B$ arms are activated while the remaining arms evolve passively.

\textbf{The Whittle index policy:} Here, the index of each arm is computed using index computation algorithm.  Let $W_n(\bm{\omega}_n(t))$ denote the Whittle index of arm $n$ at belief $\bm{\omega}_n(t)$. The Whittle index policy activates the $B$ arms with the largest indices. 

\textbf{Myopic policy:} Here, index for each arm is defined using immediate expected reward activation advantage function function, it is defined as follows $G_n(t)=\omega_n(t)^\top(R_n^1-R_n^0)$ where $\omega_n(t)$ denote the belief state of arm $n$ at time $t.$ 
$B$ arms with the highest $G_n(t).$  This is also referred to as Greedy policy.

\textbf{Round robin policy:} Here, $B$  arms are activated  according to a fixed cyclic ordering, independently of their belief states.

\subsubsection{Homogeneous arms:}
In this, we consider identical and homogeneous.
In our experiment, we set
the passive and the active transition matrices as  
\[P^{p}=
\begin{bmatrix}
0.90 & 0.08 & 0.02\\
0.40 & 0.45 & 0.15\\
0.35 & 0.45 & 0.20
\end{bmatrix},\ \
P^{a}=
\begin{bmatrix}
0.05 & 0.15 & 0.80\\
0.02 & 0.13 & 0.85\\
0.01 & 0.04 & 0.95
\end{bmatrix}.
\]
 The immediate reward vectors under active and passive action are $R^{a}=[0,\;1,\;2]$ and $R^{p}=[0,\;1.1,\;1.9]$, respectively.

We solve the Whittle index using the subsidy-based value iteration algorithm with discount factor $\beta=0.99$, belief grid resolution $40$, convergence tolerance $10^{-6}.$
 We use number of arms $N=50$ statistically identical arms over a horizon of $T=500$ decision epochs, where exactly $B=10$ arms are activated at each time step. 
 

Figures \ref{subfig:Average_minmax_homogeneous_results} and \ref{subfig:Discounted_minmax_homogeneous_results} illustrates the cumulative discounted reward and the average reward, respectively.
In both figures, the Whittle index policy consistently achieves the highest performance compare to other policies. The difference in performance between the index and myopic policies can be explained with two reasons. The immediate reward for the middle state $1$ is higher for the passive action than the active action while the immediate reward for the highest state $2$ is higher for the active action. At the same time the passive action with TPM $P^p$ (stationary probability vector $[0.7951,0.1558,0.0491]^T$) ``pulls heavily'' towards lower states, in contrast to the active action TPM $P^a$ (stationary probability vector $[0.0109,0.0453,0.9438]^T$) which ``pulls heavily'' towards higher states. This model structure causes the myopic policy to perform much worse than the index policy which look ahead instead of choosing greedily.

\begin{figure}[!htp]
\hspace{-0.2cm}
\begin{tabular}{cc}
\begin{subfigure}[t]{0.23\textwidth}
    \centering
    \includegraphics[width=\linewidth]{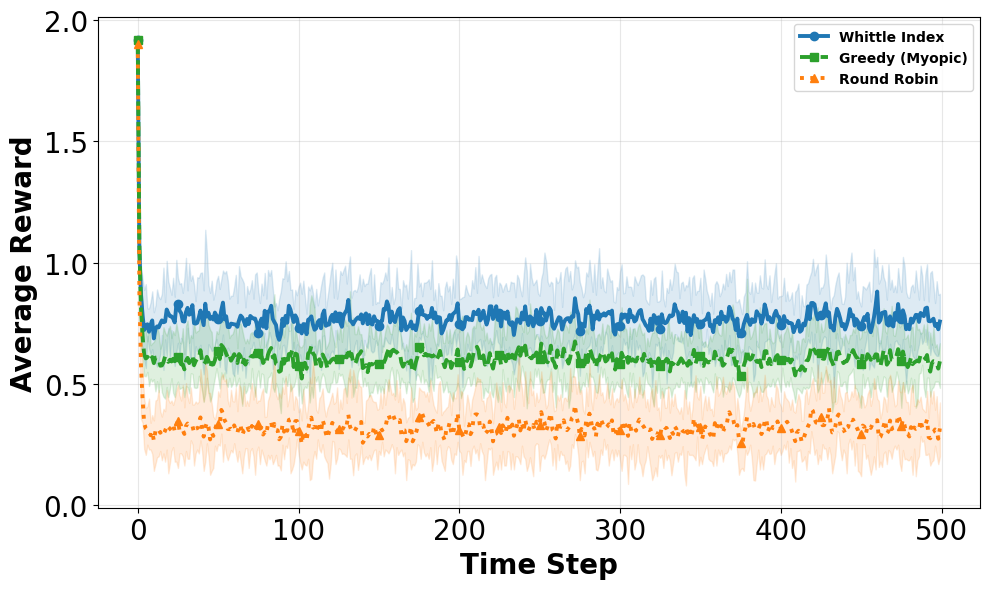}
    \caption{Average reward with min-max band}\label{subfig:Average_minmax_homogeneous_results}
\end{subfigure}
&
\hspace{-0.4cm}
\begin{subfigure}[t]{0.23\textwidth}
    \centering
    \includegraphics[width=\linewidth]{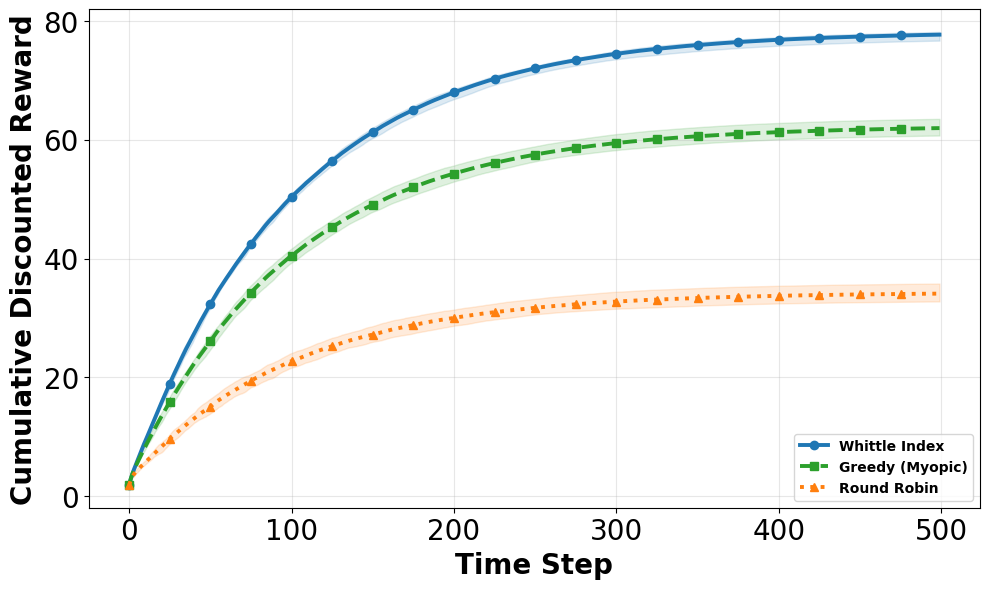}
    \caption{Discounted reward with min-max band}\label{subfig:Discounted_minmax_homogeneous_results}
\end{subfigure}
\end{tabular}
\caption{Comparison of policies for homogeneous arms: $N=50,$ $B=10$}
\label{fig:homogeneous-arms}
\end{figure}



\subsubsection{Heterogeneous arms} The system
consists of $N=50$ heterogeneous arms, where each arm is modeled as a
three-state Markov process with state space $\{0,1,2\}$. At each time step, the
scheduler selects $B=10$ arms for activation.


The passive transition matrix for each arm is structured as $P_n^p =
\begin{bmatrix}
1 & 0 & 0\\
0 & x_n & 1-x_n\\
0 & y_n & 1-y_n
\end{bmatrix},$
where
$
x_n \sim \mathcal{U}(0.6,0.9)$ and
$y_n \sim \mathcal{U}(0.6,x_n).
$

The active transition matrix is generated as $P_n^a =
\begin{bmatrix}
1-x_n' & x_n' & 0\\
0 & 1-z_{1,n} & z_{1,n}\\
0 & 1-z_{2,n} & z_{2,n}
\end{bmatrix},$ where the parameters are sampled such that the resulting transition
matrix satisfies the stochastic dominance condition with respect to the
passive transition matrix.


For each arm, a base reward parameter is sampled as
    $v_n \sim \mathcal{U}(0.5,2.0).$
The passive and active reward vectors are generated as $R_n^p =
[0,\;0.8v_n,\;1.6v_n] + \epsilon_n^p$ and $R_n^a =
[0,\;1.2v_n,\;2.4v_n] + \epsilon_n^a,$ where $\epsilon_n^p$ and $\epsilon_n^a$ denote small random perturbations
sampled independently from $\mathcal{U}(-0.1,0.1).$ 



 In Figures \ref{subfig:Average_minmax_heterogeneous_results} and \ref{subfig:Discounted_minmax_heterogeneous_results}, we observe that the Whittle index policy achieves performance close to the greedy policy. 
 
 Although transition probabilities differ between arms, the variations are structured and do not create substantial differences in their long-term activation values. Meanwhile, the heterogeneity of the rewards is more pronounced, making instantaneous rewards a strong indicator of arm quality. As a result, greedy policy often selects arms similar to those selected by the Whittle index policy. 
 


    

\begin{figure}[!htp]
\centering
 \begin{tabular}{cc}
\hspace{-.2cm}
\begin{subfigure}[t]{0.23\textwidth}
    \centering
    \includegraphics[width=\linewidth]{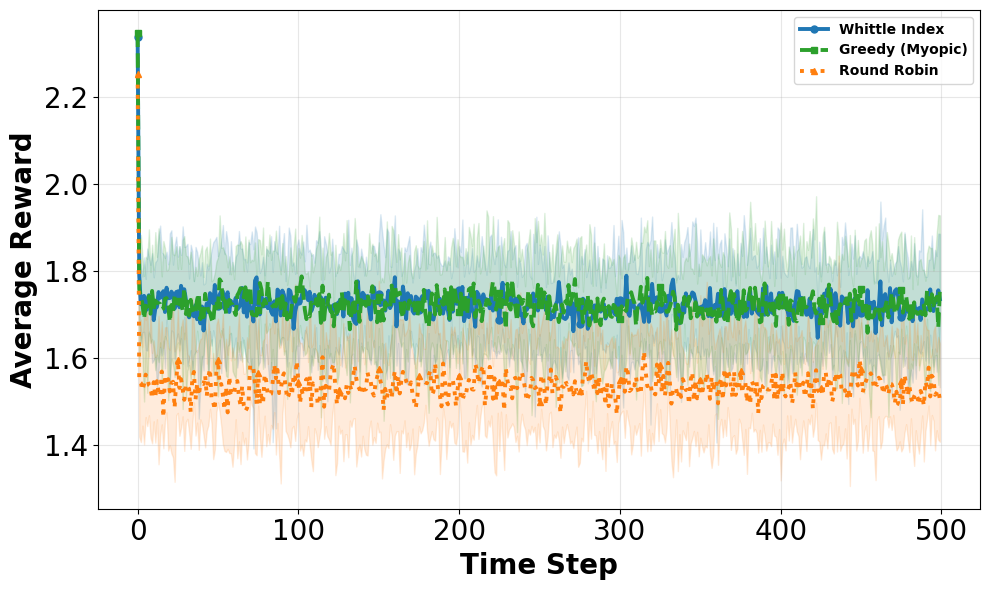}
    \caption{Average reward with min-max band}\label{subfig:Average_minmax_heterogeneous_results}
\end{subfigure}
&
\hspace{-.4cm}
\begin{subfigure}[t]{0.23\textwidth}
    \centering
    \includegraphics[width=\linewidth]{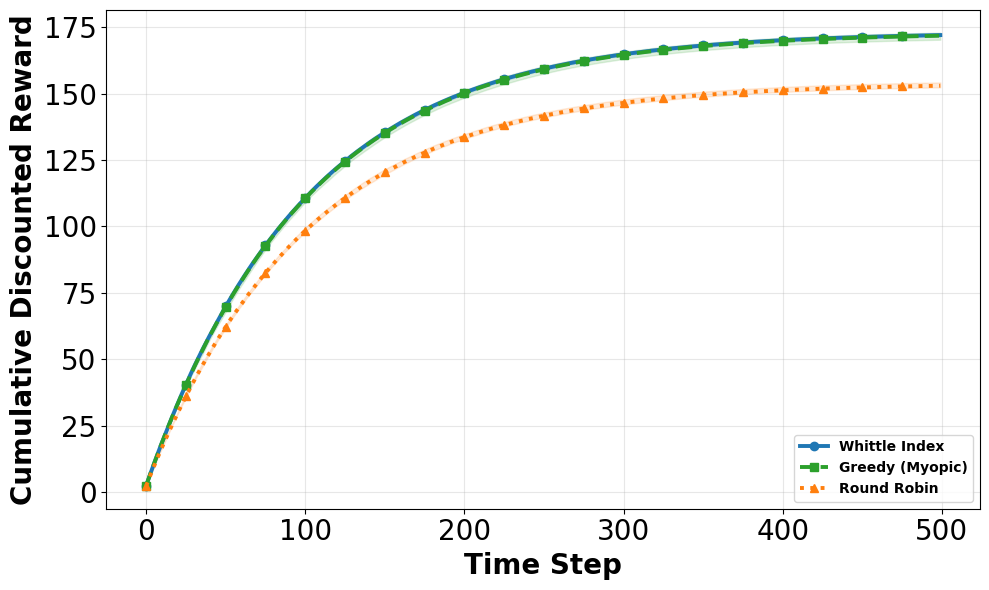}
    \caption{Discounted reward with min-max band}\label{subfig:Discounted_minmax_heterogeneous_results}
\end{subfigure}
\end{tabular}
\caption{Comparison of policies for heterogeneous arms. $N=50.$}
\label{fig:homogeneous_results}
\end{figure}
\subsubsection{Impact of reward structure on performance with heterogeneous arms }


We next study the effect of the reward structure, we retain the same heterogeneous transition-matrix discussed earlier but use a common reward structure across all arms. In the first example, we use $R^{\mathrm{a}} = [0,1,2],$ and $R^{\mathrm{p}} = [0,0.8,1.8].$  This is illustrated in Figure~\ref{fig:reward-structure-example1}, we observe that Whittle index policy and greedy policy is very close and maximizing the immediate reward aligns closely with maximizing the long-term reward.

\begin{figure}[!htp]
\hspace{-0.2cm}
\begin{tabular}{cc}
\begin{subfigure}[t]{0.23\textwidth}
    \centering
    \includegraphics[width=\linewidth]{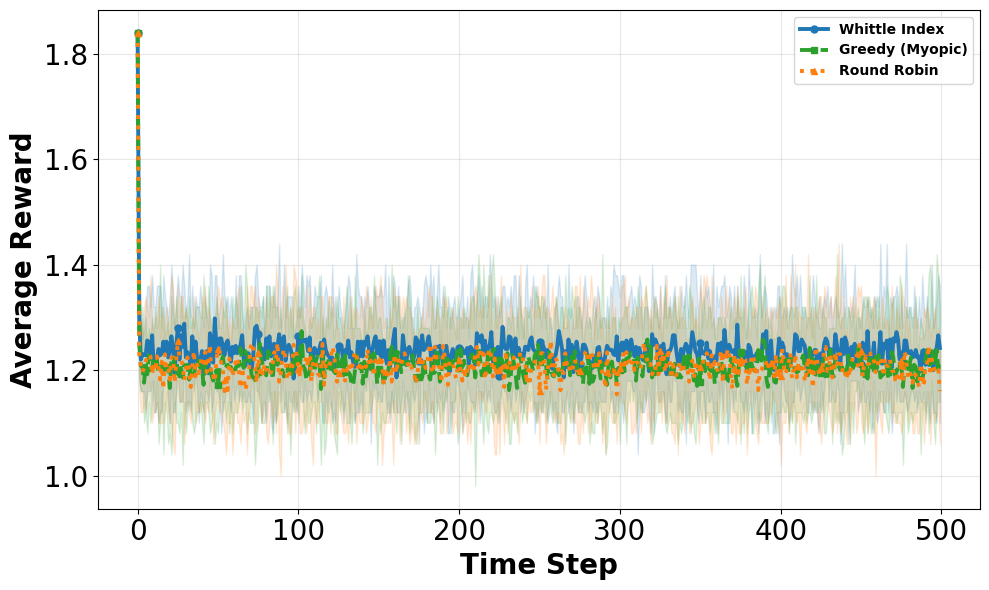}
    \caption{Average reward with min-max band}\label{subfig:Rewards2_Average_minmax_homogeneous_results}
\end{subfigure}
&
\hspace{-0.4cm}
\begin{subfigure}[t]{0.23\textwidth}
    \centering
    \includegraphics[width=\linewidth]{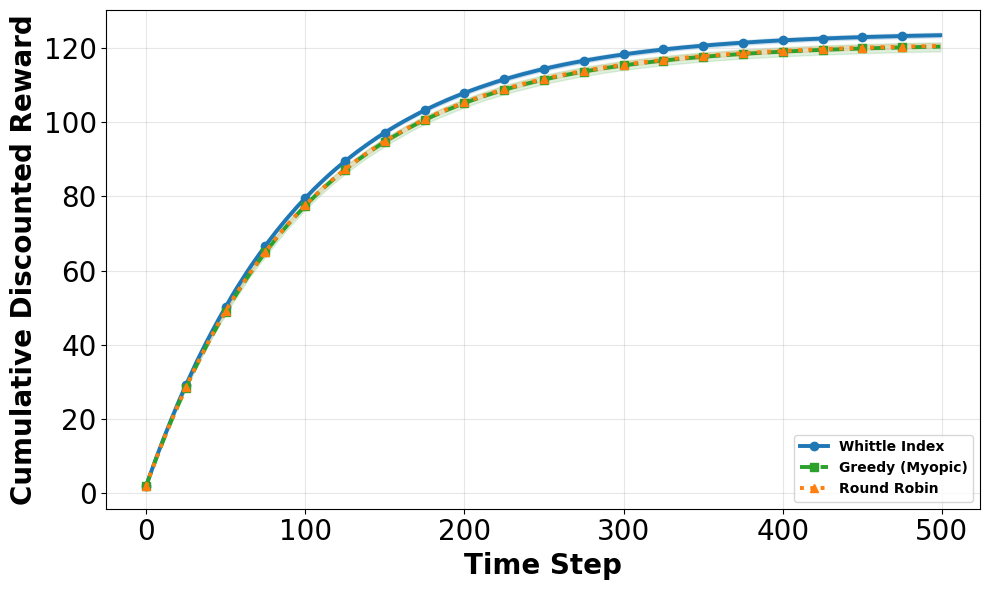}
    \caption{Discounted reward with min-max band}\label{subfig:Rewards2_Discounted_minmax_homogeneous_results}
\end{subfigure}
\end{tabular}
\caption{Heterogenous transition probability but identical rewards $R^{\mathrm{a}} = [0,1,2],$ and $R^{\mathrm{p}} = [0,0.8,1.8].$}
\label{fig:reward-structure-example1}
\end{figure}

In the next example, we consider 
rewards as follows.$R^{\mathrm{a}} = [0,1,9]$ and 
$R^{\mathrm{p}} = [0,0.8,8.8].$
The performance comparison is shown in Figure~\ref{fig:reward-structure-exampl2}


Our experiments reveal that the reward structure significantly influences the relative performance of the scheduling policies. 
The Whittle index policy consistently improves its performance relative to both the Greedy (myopic) and Round Robin policies.

\begin{figure}[!htp]
\hspace{-0.2cm}
\begin{tabular}{cc}
\begin{subfigure}[t]{0.23\textwidth}
    \centering
    \includegraphics[width=\linewidth]{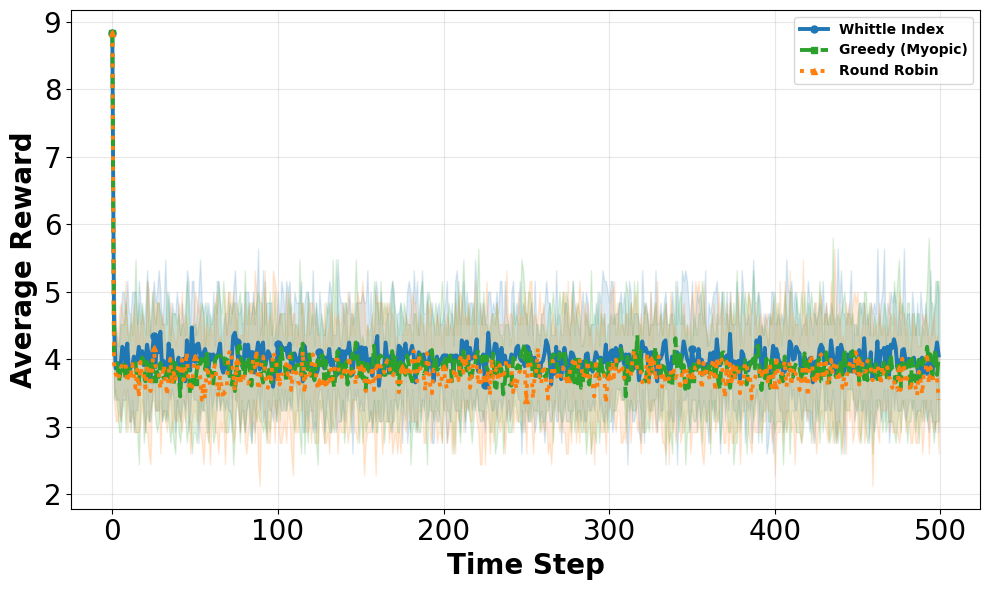}
    \caption{Average reward with min-max band}\label{subfig:Rewards9_Average_minmax_heterogeneous_results}
\end{subfigure}
&
\hspace{-.4cm}
\begin{subfigure}[t]{0.23\textwidth}
    \centering
    \includegraphics[width=\linewidth]{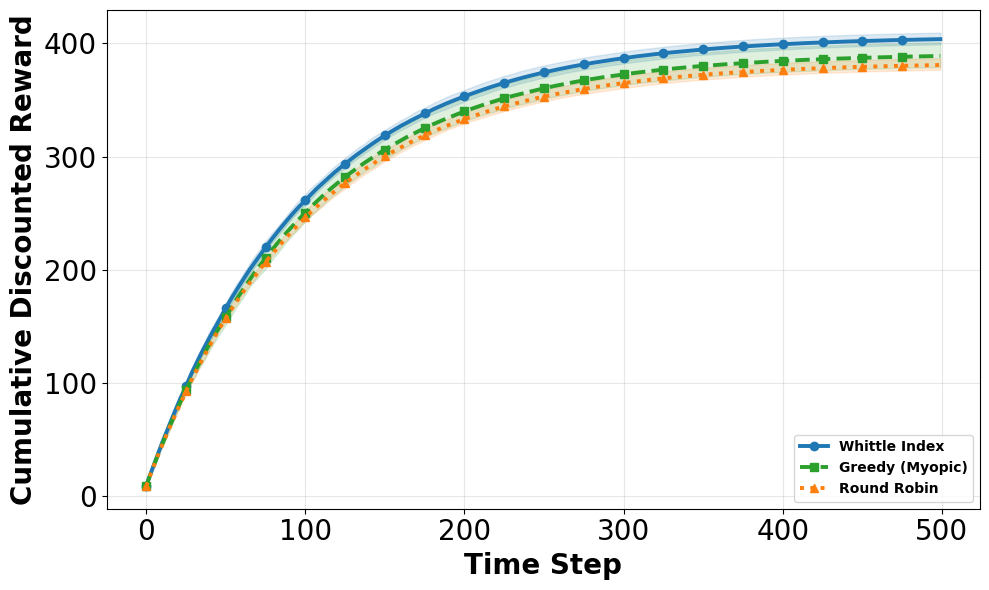}
    \caption{Discounted reward with min-max band}\label{subfig:Rewards9_Discounted_minmax_heterogeneous_results}
\end{subfigure}
\end{tabular}
\caption{Comparison of policies for heterogeneous transition probability matrices but identical rewards: $R^{\mathrm{a}} = [0,1,9]$ and 
$R^{\mathrm{p}} = [0,0.8,8.8].$ }
\label{fig:reward-structure-exampl2}
\end{figure}



\section{Conclusion}
\label{sec:conclusion}
We have studied three-state partially observable restless multi-armed bandits under the condition that the certainty about the state of an arm collapses when the arm is activated. It is referred to as three-state collapsing bandits. We showed that the optimal policy has a threshold type structure in the belief space  and present sufficiency condition using Lipschtiz bound on the value function.
We also proved the indexability for each bandit and provide sufficiency conditions under weaker model assumptions. We studied an index computation algorithms. Our simulation experiments demonstrate that the index policy is  better than the myopic and round-robin policies under some conditions and at least as good as the myopic policy in others examples. 

Future work will investigate generalized multi-state partially observable RMABs and their applications to public health intervention planning. We also plan to extend the analysis to average-reward criteria and to multi-state, multi-action partially observable RMABs.

\section{Acknowledgment}

Authors have used AI models in coding, editing  and proofreading of the paper. Authors take all responsibility for any error remains in the paper.




\bibliographystyle{IEEE}
\bibliography{restlessbandits}
\appendices

\section{Proof of Lemma~\ref{lemma:val-fun-decouple}}
\label{Proof-lemma:val-fun-decouple-appendix}

Denote the expressions on the right hand side of \eqref{eqn:DP-1} and  \eqref{eqn:DP-2} as $\mathcal{E}_1$ and $\mathcal{E}_2,$  respectively. We need to show that 
$\mathcal{E}_1(\mathcal{E}_2) = \mathcal{E}_2.$ That means, it suffices to show that the following expression $\mathcal{E}_1(\mathcal{E}_2)- \mathcal{E}_2 = 0.$ 

 Rewriting the Bellman operator,

\begin{align}
    \mathcal E_1(V)(\Omega)
=&
\max_{\bm a\in\{0,1\}^N}
\Biggl\{
\sum_{n=1}^N R_n(\bm\omega_n,a_n)
+\lambda\left(B-\sum_{n=1}^N a_n\right)\nonumber\\
&+\beta
\sum_{\bm o\in\mathcal O}
V(\tau(\bm\omega,\bm o,\bm a))
\prod_{n=1}^N
\Pr(o_n\mid\bm\omega_n,a_n)
\Biggr\}.\label{eq: E1}
\end{align}
Also,
\begin{align}
    \mathcal E_2(\Omega)
=
\sum_{n=1}^N V_n^\lambda(\bm\omega_n)
+
\frac{(B-N)\lambda}{1-\beta}.\label{eq: E2}
\end{align}

Substitute \ref{eq: E2} into \ref{eq: E1},

\begin{align}
\mathcal E_1(\mathcal E_2)(\Omega)
&=
\max_{\bm a}
\Bigg\{
\sum_{n=1}^N R_n(\bm\omega_n,a_n)
+\lambda\left(B-\sum_{n=1}^N a_n\right)
\nonumber\\
&+
\beta
\sum_{\bm o}
\left[
\sum_{n=1}^N
V_n^\lambda
\bigl(\tau_n(\bm\omega_n,o_n,a_n)\bigr)
+
\frac{(B-N)\lambda}{1-\beta}
\right]\nonumber\\
&\prod_{n=1}^N
\Pr(o_n\mid\bm\omega_n,a_n)
\Bigg\}. \label{Eq: E1E2}
\end{align}

Since $\sum_{\bm o\in\mathcal O}
\prod_{n=1}^N
\Pr(o_n\mid\bm\omega_n,a_n)=1,$
the constant term becomes
$\beta\frac{B\lambda}{1-\beta}.$ 
\remove{Therefore \ref{Eq: E1E2} becomes,
\begin{align}
\mathcal E_1&(\mathcal E_2)(\Omega)
=
\max_{\bm a}
\Bigg\{
\sum_{n=1}^N R_n(\bm\omega_n,a_n)
+\lambda B
-\lambda\sum_{n=1}^N a_n
\nonumber\\
&+
\beta\sum_{\bm o}
\sum_{n=1}^N
V_n^\lambda
(\tau_n(\bm\omega_n,o_n,a_n))
\prod_{n=1}^N
\Pr(o_n\mid\bm\omega_n,a_n)
\nonumber\\
&+
\frac{\beta\lambda(B-N)}{1-\beta}
\Bigg\}.\label{Eq: E1E21}
\end{align}
}
Since the observations are conditionally independent across arms,  \ref{Eq: E1E2} becomes,
\remove{
\begin{align*}
    &\sum_{\bm o} \sum_{n=1}^N V_n^\lambda (\tau_n(\bm\omega_n,o_n,a_n)) \prod_{m=1}^N \Pr(o_m\mid\bm\omega_m,a_m) \nonumber\\ &= \sum_{n=1}^N \sum_{o_n\in\mathcal O_n} V_n^\lambda (\tau_n(\bm\omega_n,o_n,a_n)) \Pr(o_n\mid\bm\omega_n,a_n). 
\end{align*}
}

 \begin{align} 
 \mathcal E_1(\mathcal E_2)(\Omega) &= \max_{\bm a} \Bigg\{ \sum_{n=1}^N \left[ R_n(\bm\omega_n,a_n)-\lambda a_n \right] \nonumber\\ &+ \beta\sum_{n=1}^N \sum_{o_n} V_n^\lambda (\tau_n(\bm\omega_n,o_n,a_n)) \Pr(o_n\mid\bm\omega_n,a_n) \nonumber\\ &+ \frac{\beta \lambda(B-\beta N)}{1-\beta} \Bigg\}\nonumber\\
  &=\frac{\lambda(B-\beta N)}{1-\beta} + \max_{\bm a} \sum_{n=1}^N \Bigg\{ R_n(\bm\omega_n,a_n) -\lambda a_n \nonumber\\ &+ \beta\sum_{o_n} V_n^\lambda (\tau_n(\bm\omega_n,o_n,a_n)) \Pr(o_n\mid\bm\omega_n,a_n) \Bigg\}.  \label{Eq: E1E23}
 \end{align} 

Single-arm Bellman equation is

\begin{align}
    V_n^\lambda(\bm\omega_n)& = \max_{a_n} \Biggl\{ R_n(\bm\omega_n,a_n) +\lambda(1-a_n) \nonumber\\&+ \beta\sum_{o_n} V_n^\lambda(\tau_n(\bm\omega_n,o_n,a_n)) \Pr(o_n\mid\bm\omega_n,a_n) \Biggr\}.
\end{align}
Hence
\begin{align}
  \mathcal E_1(\mathcal E_2)(\Omega) &= \frac{\lambda(B- \beta N)}{1-\beta} + \sum_{n=1}^N \left[ V_n^\lambda(\bm\omega_n)-\lambda \right]\nonumber\\
  &=
\sum_{n=1}^N V_n^\lambda(\bm\omega_n)
+
\frac{(B-N)\lambda}{1-\beta}.\label{Eq: E1E24}
\end{align}

Subtracting \ref{eq: E2} from \ref{Eq: E1E24},

 \begin{align*} \mathcal E_1(\mathcal E_2)&-\mathcal E_2 = 0. \end{align*} 
 \qed 

\section{Proof of Proposition 1}
\label{proof of Proposition 1}
 For $a=0$, $\tau_0(\bm{\omega})
=
(P^0)^\top\bm{\omega}.$
Therefore, $c_0
=
\delta(P^0),$
where $\delta(P^0)
=
\frac12
\max_{i,j}
\sum_{k=0}^{2}
|p_{ik}^0-p_{jk}^0|$ is the Dobrushin contraction coefficient.

For $a=1$, in the collapsing-bandit model, the continuation
value is $C_1(\bm{\omega})
=
\sum_{i=0}^{2}
\omega_i
V^\lambda(\Gamma_i).$

The effective contraction coefficient associated with this affine
continuation operator is $c_1
=
\frac12
\max_{i,j}
\|
\Gamma_i-\Gamma_j
\|_1.$

Thus, $c
=
\max_{a\in\{0,1\}}c_a.$ 
\begin{equation*}
\label{eq:c-collapsing}
c
=
\max
\left\{
\delta(P^0),
\frac12
\max_{i,j}
\|
\Gamma_i-\Gamma_j
\|_1
\right\}.
\end{equation*} 

Both terms are at most one $0\le c\le1.$

\qed 

\section{Proof of Lemma $4$ (Lipschitz bound on value function)} \label{Appendix: Lemma 4}

We need to show that the value function is Lipschitz in the belief, 
$|V^\lambda(\omega)-V^\lambda(\omega')|
\le
\frac{L_R}{1-\beta c}\|\omega-\omega'\|_1,$ where
$L_R=\frac12\max_{a\in\{0,1\}}\operatorname{span}(r^a)$ and $c=\max\left\{\delta(P^0),\frac12D_\Gamma\right\},
$ $
D_\Gamma=\max_{i,j}\|\Gamma_i-\Gamma_j\|_1.$ We prove the result by mathematical induction.

For $a \in \{0,1\}$, the expected one-stage reward is linear in the belief: $R(\omega,a)=\sum_i\omega_i r(i,a).$
For any two probability vectors $\omega,\omega',$ the standard total-variation bound gives $|R(\omega,a)-R(\omega',a)|
\le
\frac12\operatorname{span}(r^a)\|\omega-\omega'\|_1
\le
L_R\|\omega-\omega'\|_1.$

The subsidy $\lambda$ does not affect this bound because it is a constant with respect to the belief. Now, initialize $V_0^\lambda=0,$ then $|V_1^{\lambda}(\bm{\omega})- V_1^{\lambda}(\bm{\omega}')|\leq L_R\|\omega-\omega'\|_1.$ Then $V_{n+1}^\lambda$ can be obtained from $V_n^\lambda$ through the Bellman operator. Suppose that $V_n^\lambda$ is $L_n$-Lipschitz, so that $|V_n^\lambda(\omega)-V_n^\lambda(\omega')|
\le
L_n\|\omega-\omega'\|_1.$

For $a=0$, belief transition is $\tau(\omega)=(P^0)^\top\omega.$ By the Dobrushin contraction inequality, $\|\tau(\omega)-\tau(\omega')\|_1
\le
\delta(P^0)\|\omega-\omega'\|_1.$ Therefore, $|V_n^\lambda(\tau(\omega))
-
V_n^\lambda(\tau(\omega'))|
\le
L_n\delta(P^0)\|\omega-\omega'\|_1.$ Thus, the passive continuation term has Lipschitz modulus $\delta(P^0)L_n.$ 

For the active action, the state is revealed. Hence its continuation value is $C_1V_n^\lambda(\omega)
=
\sum_i\omega_iV_n^\lambda(\Gamma_i).$ The difference between this quantity at two beliefs is
$\sum_i(\omega_i-\omega_i')V_n^\lambda(\Gamma_i).$

Applying the total-variation bound to the three numbers
 $V_n^\lambda(\Gamma_0),V_n^\lambda(\Gamma_1),V_n^\lambda(\Gamma_2),$ we obtain $|C_1V_n^\lambda(\omega)-C_1V_n^\lambda(\omega')|
\le
\frac12
\operatorname{span}\{V_n^\lambda(\Gamma_0),
V_n^\lambda(\Gamma_1),
V_n^\lambda(\Gamma_2)\}
\|\omega-\omega'\|_1.$

By the induction hypothesis, $|V_n^\lambda(\Gamma_i)-V_n^\lambda(\Gamma_j)|
\le
L_n\|\Gamma_i-\Gamma_j\|_1.$ 

Consequently, $\operatorname{span}\{V_n^\lambda(\Gamma_0),
V_n^\lambda(\Gamma_1),
V_n^\lambda(\Gamma_2)\}
\le
L_nD_\Gamma.$

Hence, $|C_1V_n^\lambda(\omega)-C_1V_n^\lambda(\omega')|
\le
\frac12L_nD_\Gamma
\|\omega-\omega'\|_1.$ Thus, the active continuation operator has Lipschitz modulus $D_\Gamma/2.$

Define, $c_0=\delta(P^0)$, $c_1=\frac12D_\Gamma,$ and let $c=\max\{c_0,c_1\}.$
For either action, the difference between the corresponding Q-values is therefore bounded by the difference in immediate rewards plus the difference in continuation values. Hence,

$|Q_{n+1}^\lambda(\omega,a)
-
Q_{n+1}^\lambda(\omega',a)|
\le
(L_R+\beta c_aL_n)\|\omega-\omega'\|_1.$

The Bellman value is the maximum of the two Q-values. Since the maximum operator is nonexpansive,
\begin{align*}
    |V_{n+1}^\lambda(\omega)-V_{n+1}^\lambda(\omega')|
\le
\max_a
|Q_{n+1}^\lambda(\omega,a)
-
Q_{n+1}^\lambda(\omega',a)|.
\end{align*}

Therefore, 
\begin{align*}
    |V_{n+1}^\lambda(\omega)-V_{n+1}^\lambda(\omega')|
\le
(L_R+\beta cL_n)\|\omega-\omega'\|_1.
\end{align*}

Thus a valid Lipschitz modulus satisfies $L_{n+1}\le L_R+\beta cL_n.$ Since $L_0=0$, repeated substitution gives $L_n
\le
L_R\sum_{k=0}^{n-1}(\beta c)^k
=
L_R\frac{1-(\beta c)^n}{1-\beta c}.$

Finally, $ \delta(P^0)\le1$, while $\|\Gamma_i-\Gamma_j\|_1\le2 $ because $\Gamma_i $and $\Gamma_j $ are probability distributions. Therefore $ 0\le c\le1$. Since $0<\beta<1$, we have $\beta c<1.$

The discounted Bellman operator is a contraction, so $V_n^\lambda$ converges uniformly to $V^\lambda.$ Taking the limit as $n \rightarrow \infty $ gives $|V^\lambda(\omega)-V^\lambda(\omega')|
\le
\frac{L_R}{1-\beta c}
\|\omega-\omega'\|_1.$

\qed



\begin{proposition}[Bound on active continuation term]
For $\bm{\omega}, \bm{\omega}' \in \Pi,$ we obtain 
\begin{equation*}
\label{eq:active-bound}
\left|
\sum_{i=0}^{2}
(\omega^i-\omega^{i,'})
V^\lambda(\Gamma_i)
\right|
\le
\frac12
\frac{L_R}{1-\beta c}
D_\Gamma
\|
\bm{\omega}
-
\bm{\omega}'
\|_1.
\end{equation*}
\end{proposition}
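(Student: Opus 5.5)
The plan is to combine two ingredients already available: the total-variation (oscillation) inequality for a signed vector with zero sum, and the Lipschitz bound on $V^\lambda$ from Lemma~4.

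\textbf{Step 1 (zero-sum reduction).} Set $x_i:=\omega^i-\omega^{i,'}$. Because $\sum_i x_i=0$, we may subtract any constant $\kappa$ from the values, so
\[
\sum_{i=0}^{2}x_iV^\lambda(\Gamma_i)=\sum_{i=0}^{2}x_i\bigl(V^\lambda(\Gamma_i)-\kappa\bigr).
\]
Take $\kappa=\tfrac12\bigl(\max_iV^\lambda(\Gamma_i)+\min_iV^\lambda(\Gamma_i)\bigr)$. Then each $|V^\lambda(\Gamma_i)-\kappa|$ is at most half the span of $\{V^\lambda(\Gamma_0),V^\lambda(\Gamma_1),V^\lambda(\Gamma_2)\}$. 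Applying H\"older's inequality in the form $|\langle x,v\rangle|\le\|x\|_1\|v\|_\infty$ gives
\[
\Bigl|\sum_i x_iV^\lambda(\Gamma_i)\Bigr|\le\tfrac12\operatorname{span}\{V^\lambda(\Gamma_i)\}\,\|\bm\omega-\bm\omega'\|_1 .
\]
This is exactly the total-variation bound already used in the induction step of the proof of Lemma~4.

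\textbf{Step 2 (bound the span).} The span equals $\max_{i,j}|V^\lambda(\Gamma_i)-V^\lambda(\Gamma_j)|$. Each $\Gamma_i$ lies in $\Pi$, so Lemma~4 applies and gives
\[
|V^\lambda(\Gamma_i)-V^\lambda(\Gamma_j)|\le\frac{L_R}{1-\beta c}\|\Gamma_i-\Gamma_j\|_1\le\frac{L_R}{1-\beta c}D_\Gamma .
\]
Substituting this into Step 1 yields the claimed inequality.

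\textbf{Main obstacle.} The only delicate point is obtaining the factor $\tfrac12$ rather than $1$, which requires the centering in Step 1. Finiteness of the Lipschitz constant is not an issue, since $\beta c<1$ was shown in Lemma~4.
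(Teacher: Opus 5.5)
Your proof is correct and follows the paper's argument step for step: the paper also bounds the sum by $\tfrac12\operatorname{span}\{V^\lambda(\Gamma_i)\}\,\|\bm{\omega}-\bm{\omega}'\|_1$ and then bounds the span by $\frac{L_R}{1-\beta c}D_\Gamma$ using the Lipschitz estimate of Lemma~4. The only difference is that you justify the total-variation inequality by midpoint centering. The paper uses that inequality here without proof, and gives the same centering argument later, in the fixed-policy lemma for $H_\phi$.
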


\textbf{Proof:} Consider the active continuation increment
$C_1(\bm{\omega})
=
\sum_{i=0}^{2}
\omega_i
V^\lambda(\Gamma_i).$
Then, $C(\bm{\omega})
-
C(\bm{\omega}')
=
\sum_{i=0}^{2}
(\omega^i-\omega^{i,'})
V^\lambda(\Gamma_i).$

Next we have 
\begin{align*}
&\left|
\sum_{i=0}^{2}
(\omega^i-\omega^{i,'})
V^\lambda(\Gamma_i)
\right|\\
&\le
\frac12
\operatorname{span}
\left(
V^\lambda(\Gamma_0),
V^\lambda(\Gamma_1),
V^\lambda(\Gamma_2)
\right)
\|
\bm{\omega}
-
\bm{\omega}'
\|_1.
\end{align*}
Furthermore, $|V^\lambda(\Gamma_i)-V^\lambda(\Gamma_j)|
\le
\frac{L_R}{1-\beta c}
\|
\Gamma_i-\Gamma_j
\|_1.$ Here, $\operatorname{span}
\left(
V^\lambda(\Gamma_0),
V^\lambda(\Gamma_1),
V^\lambda(\Gamma_2)
\right)
\le
\frac{L_R}{1-\beta c}
D_\Gamma,$
where $D_\Gamma
:=
\max_{i,j}
\|
\Gamma_i-\Gamma_j
\|_1.$

Therefore, $\left|
\sum_{i=0}^{2}
(\omega_i-\omega_i')
V^\lambda(\Gamma_i)
\right|
\le
\frac12
\frac{L_R}{1-\beta c}
D_\Gamma
\|
\bm{\omega}
-
\bm{\omega}'
\|_1.$

\qed

\begin{proposition}[Lower bound on the difference of passive advantages]
\label{prop:advantage-difference-lower-bound}

Assume that \(V^\lambda\) is Lipschitz continuous in the
\(\ell_1\)-norm with  Lipschitz constant $L_V=\frac{L_R}{1-\beta c}$. $V^\lambda$ is non-decreasing with respect to the MLR order and that the passive belief update $\tau$ preserves the MLR order.

 Then, for every pair of beliefs
\(\bm{\omega},\bm{\omega}'\),
\begin{equation*}
\label{eq:advantage-difference-final-bound}
\Delta(\bm{\omega},\lambda)
-
\Delta(\bm{\omega}',\lambda)
\geq
d(\bm{\omega})
-
d(\bm{\omega}')
-
\frac{\beta L_VD_\Gamma}{2}
\left\|
\bm{\omega}
-
\bm{\omega}'
\right\|_1.
\end{equation*}

\end{proposition}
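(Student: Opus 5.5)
The plan is to start from the exact decomposition \eqref{eq:single-crossing-condition} of $\Delta(\bm{\omega},\lambda)-\Delta(\bm{\omega}',\lambda)$ into three pieces and to bound each piece from below separately. Expanding $Q^\lambda(\cdot,0)$ and $Q^\lambda(\cdot,1)$ makes the subsidy $\lambda$ cancel. The immediate rewards combine into $d(\bm{\omega})-d(\bm{\omega}')$, with $d$ as in Lemma~\ref{lem:d-mlr}. What remains is the passive continuation difference $\beta[V^\lambda(\tau(\bm{\omega}))-V^\lambda(\tau(\bm{\omega}'))]$ and the active continuation difference $-\beta\sum_i(\omega^i-\omega^{i,\prime})V^\lambda(\Gamma_i)$. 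The first piece is kept exactly as it is.

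For the passive term I would use the two structural hypotheses, namely that $\tau$ preserves the MLR order (a consequence of $P^0$ being TP2) and that $V^\lambda$ is MLR nondecreasing (Lemma~\ref{lemma:monotonicity}). For $\bm{\omega}\succeq_{\mathrm{MLR}}\bm{\omega}'$ these give $\tau(\bm{\omega})\succeq_{\mathrm{MLR}}\tau(\bm{\omega}')$, hence $\mathcal{C}_0(\bm{\omega})\geq\mathcal{C}_0(\bm{\omega}')$ as in the Remark following Lemma~\ref{lemma:monotonicity}. So the passive term is nonnegative and can simply be dropped from the lower bound.

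For the active term I would invoke the preceding Proposition (bound on the active continuation term). It gives $\bigl|\sum_i(\omega^i-\omega^{i,\prime})V^\lambda(\Gamma_i)\bigr|\le \tfrac12 L_V D_\Gamma\|\bm{\omega}-\bm{\omega}'\|_1$. This follows from the total-variation inequality $|\sum_i(\omega^i-\omega^{i,\prime})x_i|\le\tfrac12\operatorname{span}(x)\|\bm{\omega}-\bm{\omega}'\|_1$, which is valid because $\sum_i(\omega^i-\omega^{i,\prime})=0$, combined with Lemma~4, which yields $\operatorname{span}_i V^\lambda(\Gamma_i)\le L_V D_\Gamma$. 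Multiplying by $-\beta$ gives a lower bound of $-\tfrac{\beta L_V D_\Gamma}{2}\|\bm{\omega}-\bm{\omega}'\|_1$. Adding the three pieces then yields exactly the claimed inequality.

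The main obstacle is the scope of the statement. The nonnegativity of the passive term genuinely requires $\bm{\omega}\succeq_{\mathrm{MLR}}\bm{\omega}'$, which is the only case used in Theorem~\ref{thm:threshold-condition}, so I would state and prove the inequality for MLR-ordered pairs. For an arbitrary, non-comparable pair, the fallback is to bound the passive term by Lemma~4 and the Dobrushin inequality, giving $\ge -\beta L_V\delta(P^0)\|\bm{\omega}-\bm{\omega}'\|_1$. This produces a weaker constant, so the stated form should be read as holding for ordered pairs.
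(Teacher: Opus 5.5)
Your proposal is correct and matches the paper's proof step for step. Both use the three-term decomposition with $\lambda$ cancelling. Both drop the passive continuation term using MLR-monotonicity of $V^\lambda$ and order preservation of $\tau$. Both bound the active term by the total-variation/span inequality together with $\operatorname{span}_i V^\lambda(\Gamma_i)\le L_V D_\Gamma$.

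Your caveat about scope is also right. The paper's argument uses $\bm{\omega}\succeq_{\mathrm{MLR}}\bm{\omega}'$ in the passive step without saying so, and the inequality can fail for reversed pairs (e.g.\ when $D_\Gamma=0$). So the statement should be read for MLR-ordered pairs, which is all Theorem~\ref{thm:threshold-condition} needs.
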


\textbf{Proof:}

Proof of proposition is given here. 

Subtracting the two passive advantages, the subsidy terms cancel and we
obtain
\begin{align*}
\Delta(\bm{\omega},\lambda)
-
\Delta(\bm{\omega}',\lambda)
&=
d(\bm{\omega})
-
d(\bm{\omega}')\\
&+
\beta
\left[
V^\lambda(\tau(\bm{\omega}))
-
V^\lambda(\tau(\bm{\omega}'))
\right]\\
&-
\beta
\sum_{i=0}^{2}
(\bm{\omega}^i-\bm{\omega}^{i,'})
V^\lambda(\Gamma_i).
\label{eq:advantage-difference-expansion}
\end{align*}
\remove{
By the inequality \(x\geq-|x|\), the Lipschitz continuity of
\(V^\lambda\), and the contraction property of \(\tau\),
\begin{align*}
V^\lambda(\tau(\bm{\omega}))
-
V^\lambda(\tau(\bm{\omega}'))
& \geq
-
\left|
V^\lambda(\tau(\bm{\omega}))
-
V^\lambda(\tau(\bm{\omega}'))
\right|\\
&\geq
-
L_V
\left\|
\tau(\bm{\omega})
-
\tau(\bm{\omega}')
\right\|_1\\
&\geq
-
L_Vc_0
\left\|
\bm{\omega}
-
\bm{\omega}'
\right\|_1.
\end{align*}
}
Next, we define $\bm v_\Gamma^\lambda
=
\left(
V^\lambda(\Gamma_0),
V^\lambda(\Gamma_1),
V^\lambda(\Gamma_2)
\right)^\top.$

We also have 
\begin{align*}
\left|
\sum_{i=0}^{2}
(\omega^i-\omega^{i,'})
V^\lambda(\Gamma_i)
\right|
\leq
\frac{1}{2}
\operatorname{span}
\left(
\bm v_\Gamma^\lambda
\right)
\left\|
\bm{\omega}
-
\bm{\omega}'
\right\|_1.
\end{align*}

Moreover, we obtain 
\begin{align*}
\operatorname{span}
\left(
\bm v_\Gamma^\lambda
\right)
&=
\max_{i,j}
\left|
V^\lambda(\Gamma_i)
-
V^\lambda(\Gamma_j)
\right|
\nonumber
\leq
L_V
\max_{i,j}
\|\Gamma_i-\Gamma_j\|_1
\nonumber\\
&=
L_VD_\Gamma.
\end{align*}
Hence, $-
\sum_{i=0}^{2}
(\omega_i-\omega_i')
V^\lambda(\Gamma_i)
\geq
-
\frac{L_VD_\Gamma}{2}
\left\|
\bm{\omega}
-
\bm{\omega}'
\right\|_1.$

Since $\tau(\omega)$ preserve MLR order, by lemma \ref{lemma:monotonicity},
$V^\lambda(\tau(\bm{\omega}))
-
V^\lambda(\tau(\bm{\omega}'))\geq0.$

Therefore,
\begin{align*}
    \Delta(\bm{\omega},\lambda)
-
\Delta(\bm{\omega}',\lambda)
\geq
d(\bm{\omega})
-
d(\bm{\omega}')
-
\frac{\beta L_VD_\Gamma}{2}
\left\|
\bm{\omega}
-
\bm{\omega}'
\right\|_1.
\end{align*}
\qed 

\begin{proposition}[Reward-difference bound under MLR ordering]
Define the statewise immediate passive action  advantage by $d_i
=
r(i,0)- r(i,1),
\quad
i\in\{0,1,2\}.$ By assumption $1$,  we have  $d_0
\leq
d_1
\leq
d_2.$ 
Define the increasing reward-difference margin
$m_d 
:=
\min
\{
d_1-d_0,
d_2-d_1\}$
Assume that $\bm{\omega}
\succeq_{\mathrm{MLR}}
\bm{\omega}'.
$
Then, $d(\bm{\omega})
-
d(\bm{\omega}')
\geq
\frac{m_d}{2}
\left\|
\bm{\omega}
-
\bm{\omega}'
\right\|_1.$
\label{prop:monotone-diff-reward-bound}
\end{proposition}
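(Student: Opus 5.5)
The plan is to write the difference $d(\bm{\omega})-d(\bm{\omega}')$ in terms of tail sums of $\bm{\delta}:=\bm{\omega}-\bm{\omega}'$, via summation by parts, and then use first-order stochastic dominance. Note that $\sum_i\delta^i=0$. Define the tail sums
\[
T_1:=\delta^1+\delta^2=-\delta^0,\qquad T_2:=\delta^2 .
\]
Since $\sum_i\delta^i=0$, Abel summation gives
\[
d(\bm{\omega})-d(\bm{\omega}')=\sum_{i=0}^{2}d_i\delta^i=(d_1-d_0)\,T_1+(d_2-d_1)\,T_2 .
\]
By Assumption~\ref{assumption1}(3), both coefficients are nonnegative and each is at least $m_d$.

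The first key step is to show that $\bm{\omega}\succeq_{\mathrm{MLR}}\bm{\omega}'$ implies $T_1\ge0$ and $T_2\ge0$, which is first-order stochastic dominance. For three states I would argue by contradiction.

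For $T_2$, suppose $\omega^2<\omega'^2$. The MLR inequalities $\omega^2\omega'^i\ge\omega^i\omega'^2$ for $i=0,1$ then force $\omega^i\le(\omega^2/\omega'^2)\,\omega'^i<\omega'^i$ whenever $\omega'^i>0$. If $\omega'^i=0$, the MLR condition between $i$ and the states where $\omega'$ has mass forces $\omega^i=0$ in the relevant case. Hence $\sum_i\omega^i<1$, a contradiction, and so $\delta^2\ge0$.

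For $T_1$, the symmetric argument on state $0$ gives $\delta^0\le0$, that is, $T_1\ge0$.

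The boundary cases with zero entries are the main technical nuisance of this step. I expect them to be handled by the same ratio argument, applied only on the supports.

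With the signs of $T_1,T_2$ in hand, we get $d(\bm{\omega})-d(\bm{\omega}')\ge m_d\,(T_1+T_2)$. It remains to compare $T_1+T_2$ with the $\ell_1$ norm. Using $\delta^0=-T_1$, $\delta^2=T_2$ and $\delta^1=T_1-T_2$, we have
\[
\|\bm{\omega}-\bm{\omega}'\|_1=T_1+T_2+|T_1-T_2|=2\max\{T_1,T_2\}\le 2\,(T_1+T_2).
\]
Therefore
\[
d(\bm{\omega})-d(\bm{\omega}')\ge m_d\,(T_1+T_2)\ge \tfrac{m_d}{2}\,\|\bm{\omega}-\bm{\omega}'\|_1 ,
\]
which is the claim. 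In fact, using $\max\{T_1,T_2\}\le T_1+T_2$ shows there is some slack in this last step.
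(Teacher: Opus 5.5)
Your proof is correct and follows essentially the same route as the paper: the Abel-summation rewrite $d(\bm{\omega})-d(\bm{\omega}')=(d_1-d_0)T_1+(d_2-d_1)T_2$, the sign information $T_1,T_2\ge 0$ from first-order stochastic dominance, and the bound $\|\bm{\omega}-\bm{\omega}'\|_1\le 2(T_1+T_2)$, which the paper phrases as $|x_1|\le -x_0+x_2$. One difference is that you actually verify MLR $\Rightarrow$ FOSD, which the paper only asserts; your worry about boundary cases is moot, since $\omega^i\le(\omega^2/\omega'^2)\,\omega'^i$ already gives $\omega^i\le\omega'^i$ for $i=0,1$, and the strict inequality needed for the contradiction comes from $\omega^2<\omega'^2$ alone.
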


\textbf{Proof:} Consider $\bm{\omega}
\succeq_{\mathrm{MLR}}
\bm{\omega}',$
and the state space is linearly ordered as $0<1<2,$ 
MLR dominance implies first-order stochastic dominance: $\bm{\omega}
\succeq_{\mathrm{FOSD}}
\bm{\omega}'.$
For three states, this means $\bm{\omega}^0
\leq
\bm{\omega}^{0,'}, $
and 
$\bm{\omega}^0+\bm{\omega}^1
\leq
\bm{\omega}^{0,'}+\bm{\omega}^{1,'}.$
Equivalently, $\bm{\omega}^1+\bm{\omega}^2
\geq
\bm{\omega}^{1,'}+\bm{\omega}^{2,'},$
 and $\bm{\omega}^2
\geq
\bm{\omega}^{2,'}.$

We can rewrite  
\begin{align*}
d(\bm{\omega})
&=
\bm{\omega}^0 d_0
+
\bm{\omega}^1 d_1
+
\bm{\omega}^2 d_2
\nonumber\\
&=
d_0
+
(d_1-d_0)
(\bm{\omega}^1+\bm{\omega}^2)
+
(d_2-d_1)\bm{\omega}^2.
\end{align*}

Similarly,
\begin{align*}
d(\bm{\omega}')
&=
d_0
+
(d_1-d_0)
(\bm{\omega}^{1'}+\bm{\omega}^{2'})
+
(d_2-d_1)\bm{\omega}^{2'}.
\end{align*}

Subtracting gives
\begin{align*}
d(\bm{\omega})
-
d(\bm{\omega}')
&=
(d_1-d_0)
\left[
(\bm{\omega}^1+\bm{\omega}^2)
-
(\bm{\omega}^{1'}+\bm{\omega}^{2'})
\right]\\
&+
(d_2-d_1)
(\bm{\omega}^2-\bm{\omega}^{2'}).
\end{align*}
Note that both terms in the preceding equation are nonnegative. Since $d_1-d_0\geq m_d,$ and $d_2-d_1\geq m_d,$
we obtain
\begin{align*}
d(\bm{\omega})
-
d(\bm{\omega}')
\geq
m_d
\Big[
(\bm{\omega}^1-(\bm{\omega}^{1'})+
+
2(\bm{\omega}^2-\bm{\omega}^{2'})
\Big].
\end{align*}

Define $x_i
=
\bm{\omega}^i-\bm{\omega}^{i,'}$ for $
i\in\{0,1,2\}.$

Since both beliefs are probability vectors $x_0+x_1+x_2=0.$
The FOSD inequalities imply $x_0\leq0,$ and $x_2\geq0.$
Moreover,
\begin{align*}
(\bm{\omega}^1+\bm{\omega}^2)
-
(\bm{\omega}^{1'}+\bm{\omega}^{2'})
+
\bm{\omega}^2-\bm{\omega}^{2'}
=
-x_0+x_2.
\end{align*}

Also, $x_1=-x_0-x_2,$
we have $|x_1|
\leq
-x_0+x_2.$

Therefore,
\begin{align*}
\left\|
\bm{\omega}
-
\bm{\omega}'
\right\|_1
&=
|x_0|+|x_1|+|x_2|
=
-x_0+|x_1|+x_2
\nonumber\\
&\leq
2(-x_0+x_2).
\end{align*}

Hence, $-x_0+x_2
\geq
\frac12
\left\|
\bm{\omega}
-
\bm{\omega}'
\right\|_1.$

Thus, we obtain $d(\bm{\omega})
-
d(\bm{\omega}')
\geq
\frac{m_d}{2}
\left\|
\bm{\omega}
-
\bm{\omega}'
\right\|_1.$

\section{Proof of Lemma 5}
\label{Appendix:Lemma 5}
Since $d_0\leq d_1\leq d_2,$
the function $i\mapsto d_i$ is nondecreasing in the state $i$.
Moreover, MLR dominance implies first-order stochastic
dominance. Therefore, if $\bm{\omega}\geq_r\bm{\omega}',$
then the expectation of every nondecreasing function of the
state is larger under $\bm{\omega}$ than under
$\bm{\omega}'$. In particular,
\[
\sum_{i=0}^{2}\omega^i d_i
\geq
\sum_{i=0}^{2}{\omega'}^i d_i.
\]
Hence, $d(\bm{\omega})\geq d(\bm{\omega}').$


\section{Theorem 1 (Sufficiency conditions for threshold policy)} \label{Appendix: Theorem 1}

We provide the proof of theorem in two steps:  
\begin{enumerate}
\item We show that if  $d(\bm{\omega})
-
d(\bm{\omega}')
\ge
\frac{\beta L_R D_\Gamma}{2(1-\beta c)}
\|
\bm{\omega}
-
\bm{\omega}'
\|_1,$ then $\Delta(\bm{\omega},\lambda)
\ge
\Delta(\bm{\omega}',\lambda).$ This follows from Proposition~\ref{prop:advantage-difference-lower-bound}.
This is further improved in next step. 
\item   If $m_d
\geq
\frac{\beta L_R D_\Gamma}
{1-\beta c_1},$ then passive action advantage
$\Delta(\bm{\omega}, \lambda )
=
Q(\bm{\omega},0)-Q(\bm{\omega},1)$
is increasing with belief $\bm{\omega}.$

From Proposition~\ref{prop:monotone-diff-reward-bound}, we have
\begin{equation*}
d(\bm{\omega})
-
d(\bm{\omega}')
\geq
\frac{m_d}{2}
\left\|
\bm{\omega}
-
\bm{\omega}'
\right\|_1.
\end{equation*}
Thus 
\begin{eqnarray*}
    \Delta(\bm{\omega},\lambda)
-
\Delta(\bm{\omega}',\lambda)
\geq \frac{m_d}{2}
\left\|
\bm{\omega}
-
\bm{\omega}'
\right\|_1  - 
\frac{\beta L_VD_\Gamma}{2}
\left\|
\bm{\omega}
-
\bm{\omega}'
\right\|_1.
\end{eqnarray*}
This can be written as follows. 
\begin{eqnarray*}\Delta(\bm{\omega},\lambda)
-
\Delta(\bm{\omega}',\lambda)
\geq \left[\frac{m_d}{2}
  - 
\frac{\beta L_VD_\Gamma}{2}
\right]
\left\|
\bm{\omega}
-
\bm{\omega}'
\right\|_1.
\end{eqnarray*}
We want  $\left[\frac{m_d}{2}
  - 
\frac{\beta L_VD_\Gamma}{2}
\right] \geq 0$ then $\Delta(\bm{\omega},\lambda)
-
\Delta(\bm{\omega}',\lambda) \geq 0.$
In other words,  
\begin{eqnarray*}
    m_d \geq \frac{\beta L_R D_{\Gamma}}{(1-\beta c)}. 
\end{eqnarray*}
Thus passive action advantage $\Delta(\bm{\omega},\lambda) $ is increasing in belief $\bm{\omega}$  for fixed $\lambda.$
This completes the proof. 
\end{enumerate}

\qed

\section{Proof of Theorem 2}
\label{Proof of Theorem 2}
Let $P^0$ and $P^1$ be the TP2 transition matrices of the
three-state birth-death model. We verify the sufficient condition
of the general threshold-policy theorem.

For $a\in\{0,1\}$, write the rows of $P^a$ as
$r_0^a,r_1^a,r_2^a.$ The Dobrushin coefficient is
$\delta(P^a)
=
1-\min_{i,j\in\{0,1,2\}}
\sum_{k=0}^2\min\{P^a_{ik},P^a_{jk}\}.$ 

Since $p_a+q_a\leq 1$, the overlap between rows $0$ and
$1$ is at least $q_a$, the overlap between rows $1$ and
$2$ is at least $p_a$, and the overlap between rows $0$
and $2$ is $\min\{p_a,q_a\}$.  Consequently, $\delta(P^a)\leq 1-\min\{p_a,q_a\}
\leq 1-m_P.$ In particular, $\delta(P^0)\leq 1-m_P.$ Next, for the active-action update, the rows of $P^1 $give
\begin{align*}
    \|r_0^1-r_2^1\|_1
&=(1-p_1)+|p_1-q_1|+(1-q_1)\\
&=2\bigl(1-\min\{p_1,q_1\}\bigr).
\end{align*}

The distances between rows $0$ and $1$, and between rows
$1$ and $2$, are no larger than this quantity. Hence
\begin{align*}
    D_\Gamma
:=\max_{i,j\in\{0,1,2\}}
\|\Gamma_i-\Gamma_j\|_1
=2\bigl(1-\min\{p_1,q_1\}\bigr)
\end{align*}
where \(\Gamma_i\) denotes the active-action update associated
with state \(i\).
Using the contraction parameter in the general theorem,
\[
c=\max\left\{\delta(P^0),\frac{D_\Gamma}{2}\right\},
\]
we obtain
\[
c
\leq
\max\left\{1-m_P,\,
1-\min\{p_1,q_1\}\right\}
=1-m_P,
\]
because \(\min\{p_1,q_1\}\geq m_P\).

The value-function Lipschitz bound therefore satisfies
\[
L_V=\frac{L_R}{1-\beta c}
\leq
\frac{L_R}{1-\beta(1-m_P)}.
\]
The general threshold theorem guarantees a threshold-type
optimal policy whenever
\[
m_d\geq
\frac{\beta L_R D_\Gamma}{1-\beta c}.
\]
Therefore,  $m_d
\geq
\frac{2 \beta\bigl(1-\min\{p_1,q_1\}\bigr)}
     {1-\beta(1-m_P)}\, L_R,$.

Therefore, \ref{thm:threshold-condition} implies the sufficient condition for an MLR-monotone optimal policy.

\section{Proof of Lemma~\ref{Lemma:Uniform-Lip-sensitivity}}
\label{Proof:Lemma-Uniform-Lip-sensitivity}


We study  sufficient conditions under which the subsidy
sensitivity $G_\lambda(\bm{\omega})
:=
\frac{\partial V^\lambda(\bm{\omega})}{\partial\lambda}$
is uniformly Lipschitz continuous with respect to the belief
state. We want show that there exists
 a finite constant $L_G\geq0$, independent of the subsidy
$\lambda$, such that
\begin{equation}
\label{eq:G-lipschitz}
\left|
G_\lambda(\bm{\omega})
-
G_\lambda(\bm{\omega}')
\right|
\leq
L_G\|\bm{\omega}-\bm{\omega}'\|_1,
\end{equation}
$\forall \bm{\omega},\bm{\omega}'\in\Pi$,$\forall
\lambda\in\mathbb R$ and where $L_G = \frac{\beta}{2(1-\beta)}.$

We derive the result using the differentiated Bellman equations,
the contraction of passive belief dynamics, and regularity
at optimal-action switching boundaries.

Recall the Bellman equation
\begin{eqnarray*}
    V^\lambda(\bm{\omega}) &=&
\max\left\{
Q^\lambda(\bm{\omega},0),
Q^\lambda(\bm{\omega},1)
\right\}, \\
Q^\lambda(\bm{\omega},0)
&=&
R(\bm{\omega},0)+\lambda
+\beta V^\lambda(\tau(\bm{\omega})),
\label{eq:Q0-G-lipschitz}\\
Q^\lambda(\bm{\omega},1)
&=&
R(\bm{\omega},1)
+\beta\sum_{s=0}^{2}
\omega_sV^\lambda(\Gamma_s).
\end{eqnarray*}
The passive belief transition is $\tau(\bm{\omega})=(P^0)^\top\bm{\omega}.$
Let $c_0:=\delta(P^0)
=
\frac12\max_{i,j}
\sum_{s=0}^{2}|P^0_{is}-P^0_{js}|$
denote the Dobrushin contraction coefficient.
Then
\begin{equation*}
\label{eq:passive-contraction-G}
\|\tau(\bm{\omega})-\tau(\bm{\omega}')\|_1
\leq
c_0\|\bm{\omega}-\bm{\omega}'\|_1.
\end{equation*}
Since $P^0$ is stochastic, we have$0\leq c_0\leq1,$ and $\beta c_0<1.$
Define the passive and active regions:
\[
\Pi_\lambda^0
=
\left\{
\bm{\omega}\in\Pi:
Q^\lambda(\bm{\omega},0)>
Q^\lambda(\bm{\omega},1)
\right\},
\]
and
\[
\Pi_\lambda^1
=
\left\{
\bm{\omega}\in\Pi:
Q^\lambda(\bm{\omega},1)>
Q^\lambda(\bm{\omega},0)
\right\}.
\]
The switching boundary is
\[
\mathcal T_\lambda
=
\left\{
\bm{\omega}\in\Pi:
Q^\lambda(\bm{\omega},0)
=
Q^\lambda(\bm{\omega},1)
\right\}.
\]
Suppose the subsidy derivative exists at the relevant
continuation beliefs.
On the passive region, differentiating
$V^\lambda(\bm{\omega})$ we get 
\begin{equation*}
\label{eq:G-passive-branch}
G_\lambda(\bm{\omega})
=
1+
\beta G_\lambda(\tau(\bm{\omega})).
\end{equation*}
On the active region, differentiating
$V^\lambda(\bm{\omega})$ we get 
\begin{equation}
\label{eq:G-active-branch}
G_\lambda(\bm{\omega})
=
\beta
\sum_{s=0}^{2}
\omega_sG_\lambda(\Gamma_s).
\end{equation}
These equations characterize the subsidy sensitivity
separately in the two optimal-action regions.

For an admissible policy $\phi$, define
\[
H_\phi(\bm{\omega})
=
\mathbb E_{\phi,\bm{\omega}}
\left[
\sum_{t=0}^{\infty}
\beta^t\mathbf 1_{\{A_t=0\}}
\right].
\]
Since $\mathbf 1_{\{A_t=0\}}\le1.$, we obtain
\[
0
\leq
H_\phi(\bm{\omega})
\leq
\sum_{t=0}^{\infty}\beta^t
=
\frac{1}{1-\beta}.
\]
Whenever the optimal value function is differentiable
with respect to $\lambda$, its derivative is the expected
discounted passive occupation time under a subsidy-optimal
policy:
\[
G_\lambda(\bm{\omega})
=
H_{\phi_\lambda^*}(\bm{\omega}).
\]
Therefore,
\begin{equation*}
\label{eq:G-uniform-bounded}
0
\leq
G_\lambda(\bm{\omega})
\leq
\frac{1}{1-\beta}.
\end{equation*}
This bound is uniform over the belief space and all subsidies.

\textbf{(a) Lipschitz Bound in the Active Region:}
Let $\bm{\omega},\bm{\omega}'\in\Pi.$
From \ref{eq:G-active-branch},
\begin{align}
G_\lambda(\bm{\omega})
-
G_\lambda(\bm{\omega}')
=
\beta
\sum_{s=0}^{2}
(\bm{\omega}^s-\bm{\omega}^{s'})G_\lambda(\Gamma_s).
\end{align}
Since $\bm{\omega}$ and $\bm{\omega}'$ are probability
distributions, $\sum_{s=0}^{2}(\bm{\omega}^s-\bm{\omega}^{s'})=0.$
For any real numbers $g_0,g_1,g_2$ and probability
vectors $\bm{\omega},\bm{\omega}'$, the total-variation
inequality gives
\[
\left|
\sum_{s=0}^{2}
(\bm{\omega}^s-\bm{\omega}^{s'})g_s
\right|
\leq
\frac12
\left(
\max_s g_s-\min_s g_s
\right)
\|\bm{\bm{\omega}}-\bm{\omega}'\|_1.
\]
Applying this inequality with
$g_s=G_\lambda(\Gamma_s)$ yields
\begin{align}
|G_\lambda(\bm{\omega})
-G_\lambda(\bm{\omega}')|
\leq
\frac{\beta}{2}
\operatorname{span}
\{G_\lambda(\Gamma_s)\}_{s=0}^{2}
\|\bm{\omega}-\bm{\omega}'\|_1.
\end{align}
Note that we have $\operatorname{span}
\{G_\lambda(\Gamma_s)\}_{s=0}^{2}
\leq
\frac{1}{1-\beta}.$
Hence 
\begin{equation}
\label{eq:G-active-Lipschitz}
|G_\lambda(\bm{\omega})
-G_\lambda(\bm{\omega'})|
\leq
\frac{\beta}{2(1-\beta)}
\|\bm{\omega}-\bm{\omega'}\|_1.
\end{equation}
Thus, the active branch has the uniform Lipschitz
constant
$L_G:=\frac{\beta}{2(1-\beta)}.$

\textbf{(b)Lipschitz bound in Passive Region:}
This is not very obvious. Here. we consider few cases.

\textbf{A) Repeated Passive-Action Recursion}
Suppose that two beliefs $\bm{\omega}$ and
$\bm{\omega}'$ remain in the passive region
for the first $n$ decision epochs: $\tau^k(\bm{\omega}),
\tau^k(\bm{\omega}')
\in\Pi_\lambda^0,$  $\forall k\in \{0,\ldots,n-1\}.$
Repeated application of the passive
sensitivity equation yields
\begin{eqnarray*}
G_\lambda(\bm{\omega})
=
\sum_{k=0}^{n-1}\beta^k
+
\beta^nG_\lambda(\tau^n(\bm{\omega})), \\
G_\lambda(\bm{\omega}')
=
\sum_{k=0}^{n-1}\beta^k
+
\beta^nG_\lambda(\tau^n(\bm{\omega}')).
\end{eqnarray*}
Next
\begin{eqnarray*}
|G_\lambda(\bm{\omega})
-G_\lambda(\bm{\omega}')|=
\beta^n
|G_\lambda(\tau^n(\bm{\omega}))
-G_\lambda(\tau^n(\bm{\omega}'))|.
\end{eqnarray*}
Since subsidy sensitivity represents
discounted passive occupation time, $0\le G_\lambda(\bm{\omega})
\le\frac{1}{1-\beta}.$
Thus,
\begin{equation*}
\label{eq:G-passive-n-step-bound}
|G_\lambda(\bm{\omega})
-G_\lambda(\bm{\omega}')|
\le
\frac{\beta^n}{1-\beta}.
\end{equation*}
This estimate requires no prior
Lipschitz assumption.
If both beliefs remain passive forever,
letting $n\to\infty$ gives
\[
G_\lambda(\bm{\omega})
=
G_\lambda(\bm{\omega}')
=
\frac{1}{1-\beta}.
\]
Thus, the subsidy sensitivity is constant
on any set of beliefs whose optimal
trajectories remain passive indefinitely.

\textbf{B) First Active-Action Time}

Consider an optimal policy that selects
the passive action until its first
active-action time.
Define
\[
T_\lambda(\bm{\omega})
=
\inf\{t\ge0:A_t=1\}.
\]
Suppose that
$T_\lambda(\bm{\omega})=n<\infty.$
Since the first $n$ actions are passive,
the accumulated discounted passive
occupation time is
\[
\sum_{k=0}^{n-1}\beta^k
=
\frac{1-\beta^n}{1-\beta}.
\]
At time $n$, the active action reveals
the hidden state, and the subsequent
belief is $\Gamma_s$ with probability
$[\tau^n(\bm{\omega})]_s$.
Assuming differentiability of the
continuation values and occupation-time
consistency of the optimal policy,
we obtain
\begin{equation}
\label{eq:G-first-active}
\begin{aligned}
G_\lambda(\bm{\omega})
&=
\frac{1-\beta^n}{1-\beta}+
\beta^n
\sum_{s=0}^{2}
[\tau^n(\bm{\omega})]_s
G_\lambda(\Gamma_s).
\end{aligned}
\end{equation}
This representation expresses subsidy
sensitivity directly through the first
active-action time.

\textbf{C) Lipschitz Estimate on a
Fixed First-Active-Time Region}

Let $C_n\subseteq\Pi$ be a convex
set on which an optimal policy has the same first active-action time $n$.
For any $\bm{\omega},\bm{\omega}'\in C_n$, then using \ref{eq:G-first-active},  we get 
\begin{align*}
G_\lambda(\bm{\omega})
-G_\lambda(\bm{\omega}')
=
\beta^n
\sum_{s=0}^{2}
[\tau^n(\bm{\omega})
-\tau^n(\bm{\omega}')]_s
G_\lambda(\Gamma_s).
\end{align*}
Using the total-variation inequality,
\begin{align*}
&|G_\lambda(\bm{\omega})
-G_\lambda(\bm{\omega}')|\\
&\quad\le
\frac{\beta^n}{2}
\operatorname{span}
\{G_\lambda(\Gamma_s)\}_{s=0}^{2}
\|\tau^n(\bm{\omega})
-\tau^n(\bm{\omega}')\|_1.
\end{align*}
Since $0\le G_\lambda(\Gamma_s)
\le\frac{1}{1-\beta},$
we have $\operatorname{span}
\{G_\lambda(\Gamma_s)\}_{s=0}^{2}
\le\frac{1}{1-\beta}.$
Therefore,
\begin{align*}
|G_\lambda(\bm{\omega})
-G_\lambda(\bm{\omega})|
\le
\frac{\beta^n}{2(1-\beta)}
\|\tau^n(\bm{\omega})
-\tau^n(\bm{\omega})\|_1.
\end{align*}
By repeated application of the
Dobrushin contraction inequality, $\|\tau^n(\bm{\omega})
-\tau^n(\bm{\omega}')\|_1
\le
c_0^n\|\bm{\omega}-\bm{\omega}'\|_1.$
Hence,
\begin{equation*}
|G_\lambda(\bm{\omega})
-G_\lambda(\bm{\omega}')|
\le
\frac{(\beta c_0)^n}{2(1-\beta)}
\|\bm{\omega}-\bm{\omega}\|_1.
\end{equation*}

Further, for any $n \geq 1,$
\begin{equation*}
\label{eq:G-fixed-exit-Lipschitz}
|G_\lambda(\bm{\omega})
-G_\lambda(\bm{\omega}')|
\le
\frac{\beta }{2(1-\beta)}
\|\bm{\omega}-\bm{\omega}\|_1.
\end{equation*}
Hence $L_G = \frac{\beta}{2 (1-\beta)}$ is valid.

\section*{Proof of Discrepancy  Bound $0 \le D_{\Gamma, \tau} \le 2$} \label{Appendix: Bound}

Define the maximum row discrepancy
$d_{01}:=\max_{i,j\in\{0,1,2\}}
\left\|(p^1_{i,\cdot})^\top-(p^0_{j,\cdot})^\top\right\|_1$
Since $\tau(\bm{\omega})
=(P^0)^\top\bm{\omega}
=\sum_{j=0}^{2}\omega_j(p^0_{j,\cdot})^\top$, and $\bm{\Gamma}_s=(p^1_{s,\cdot})^\top$,

\begin{align}
\|\bm{\Gamma}_s-\tau(\bm{\omega})\|_1
&=
\left\|
\sum_{j=0}^{2}\omega_j
\left[(p^1_{s,\cdot})^\top
-(p^0_{j,\cdot})^\top\right]
\right\|_1\nonumber\\
&\leq
\sum_{j=0}^{2}\omega_j
\left\|(p^1_{s,\cdot})^\top
-(p^0_{j,\cdot})^\top\right\|_1.
\end{align}
Multiplying by \(\omega_s\) and summing over \(s\),
 \begin{align} \sum_s\omega_s \|\Gamma_s-\tau(\bm\omega)\|_1 &\le \sum_{s,j}\omega_s\omega_j \|P^1_{s,\cdot}-P^0_{j,\cdot}\|_1\nonumber\\ &\le d_{01}.
 \end{align} 
Therefore, $D_{\Gamma,\tau} \le d_{01}\le2.$

\section{Proof of Proposition~\ref{Prop:admissible-discount-parameter}: Admissible range of the discount factor: }
\label{Proof of Proposition 2}
The inequality $\beta^2D_{\Gamma,\tau}\leq 2(1-\beta)$
is equivalent to $D\beta^2+2\beta-2\leq 0.$
For $D_{\Gamma,\tau}>0$, the roots of the corresponding
quadratic equation are
\[
\beta_{\pm}
=
\frac{-2\pm\sqrt{4+8D_{\Gamma,\tau}}}{2D_{\Gamma,\tau}}.
\]
Simplifying,
\[
\beta_{\pm}
=
\frac{-1\pm\sqrt{1+2D_{\Gamma,\tau}}}{D_{\Gamma,\tau}}.
\]
Since $D_{\Gamma,\tau}>0$,
\[
\beta_-<0,
\qquad
\beta_+>0.
\]
The quadratic polynomial $f(\beta)=D_{\Gamma,\tau}\beta^2+2\beta-2$
is convex in $\beta$.
Therefore, $f(\beta)\leq 0$
for $\beta\in[\beta_-,\beta_+].$
Restricting to $\beta\in(0,1)$ gives
\[
0<\beta\leq
\frac{\sqrt{1+2D_{\Gamma,\tau}}-1}{D_{\Gamma,\tau}}.
\]
Rationalizing the numerator,
\begin{align*}
\frac{\sqrt{1+2D_{\Gamma,\tau}}-1}{D_{\Gamma,\tau}}
&=
\frac{(1+2D_{\Gamma,\tau})-1}
{D_{\Gamma,\tau}(\sqrt{1+2D_{\Gamma,\tau}}+1)}\\
&=
\frac{2}{1+\sqrt{1+2D_{\Gamma,\tau}}}.
\end{align*}
Thus,
\[
\beta_{\max}(D_{\Gamma,\tau})
=
\frac{2}{1+\sqrt{1+2D_{\Gamma,\tau}}}.
\]
If $D_{\Gamma,\tau}=0$, the original condition reduces to
\[
0\leq 2(1-\beta),
\]
which holds for every $\beta\in(0,1)$.

\section{Joint continuity of the value function}
\begin{lemma}[Convexity and joint continuity of the value function]
\label{lem:value-continuity-convexity}
For every belief $\bm{\omega}\in\Pi$,
\begin{enumerate}
    \item the optimal value function $V^\lambda(\bm{\omega})$
is convex and nondecreasing in $\lambda.$
\item  moreover,
\[
|V^{\lambda_1}(\bm{\omega})
-V^{\lambda_2}(\bm{\omega})|
\le
\frac{|\lambda_1-\lambda_2|}{1-\beta}.
\]
\item  For every fixed $\lambda$,
\[
|V^\lambda(\bm{\omega})
-V^\lambda(\bm{\omega}')|
\le
\frac{r_{\max}+|\lambda|}{1-\beta}
\|\bm{\omega}-\bm{\omega}'\|_1,
\]
where, $|r(s,a)|\le r_{\max}.$
\item  $V^\lambda(\bm{\omega})$ is jointly
continuous in $(\bm{\omega},\lambda).$
\end{enumerate}
\end{lemma}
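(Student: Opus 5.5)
The plan is to handle parts 1 and 2 through a policy representation of $V^\lambda$, part 3 by reducing it to the Lipschitz bound on the value function (Lemma~4), and part 4 by combining the two Lipschitz estimates.

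\textbf{Parts 1 and 2.} Fix a (history-dependent) policy $\phi$ for the single arm and split its discounted return into reward and passive occupation:
\[
J_\phi^\lambda(\bm{\omega})=\mathcal{R}_\phi(\bm{\omega})+\lambda H_\phi(\bm{\omega}),
\]
where $\mathcal{R}_\phi(\bm{\omega})=\mathbb{E}_{\phi,\bm{\omega}}\big[\sum_t\beta^tR(\bm{\omega}(t),A_t)\big]$ and $H_\phi$ is the discounted passive occupation time already used in Appendix~\ref{Proof:Lemma-Uniform-Lip-sensitivity}. Neither $\mathcal{R}_\phi$ nor $H_\phi$ depends on $\lambda$, and $0\le H_\phi\le \frac{1}{1-\beta}$. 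The standard discounted-MDP fact gives $V^\lambda(\bm{\omega})=\sup_\phi J^\lambda_\phi(\bm{\omega})$, with the supremum attained by a stationary policy because the single-arm Bellman operator in \eqref{eqn:val-func-lagrangge-SARB} is a $\beta$-contraction.

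For each $\phi$ the map $\lambda\mapsto J^\lambda_\phi(\bm{\omega})$ is affine with slope in $[0,\frac1{1-\beta}]$. A pointwise supremum of such maps is convex and nondecreasing in $\lambda$, which gives part 1. For part 2, the elementary inequality $|\sup_\phi f_\phi-\sup_\phi g_\phi|\le\sup_\phi|f_\phi-g_\phi|$ yields
\[
|V^{\lambda_1}(\bm{\omega})-V^{\lambda_2}(\bm{\omega})|\le \sup_\phi H_\phi(\bm{\omega})\,|\lambda_1-\lambda_2|\le\frac{|\lambda_1-\lambda_2|}{1-\beta}.
\]
As a cross-check that avoids the policy-space argument, the same three properties can be obtained by induction on value iterates. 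Set $V_0^\lambda=0$. Each $Q^\lambda_{n+1}(\bm{\omega},a)$ is then convex and nondecreasing in $\lambda$, since it is $\lambda(1-a)$ plus $\beta$ times a nonnegative combination of convex nondecreasing functions. Its $\lambda$-Lipschitz modulus satisfies $M_{n+1}\le 1+\beta M_n$. All three properties are preserved by the maximum and pass to the uniform limit.

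\textbf{Part 3.} Use Lemma~4 directly. It gives the Lipschitz constant $\frac{L_R}{1-\beta c}$. By definition $L_R=\frac12\max_a\operatorname{span}(r^a)\le r_{\max}$, and $c\le 1$ by Proposition~1, so
\[
\frac{L_R}{1-\beta c}\le \frac{r_{\max}}{1-\beta}\le\frac{r_{\max}+|\lambda|}{1-\beta}.
\]
If a self-contained argument is preferred, the value-iteration induction of Lemma~4 can be run with the crude bound $\operatorname{span}\{V^\lambda_n(\Gamma_s)\}\le 2\|V_n^\lambda\|_\infty$, where $\|V_n^\lambda\|_\infty\le\frac{r_{\max}+|\lambda|}{1-\beta}$. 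This route explains the form of the stated constant.

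\textbf{Part 4.} For $(\bm{\omega},\lambda)\to(\bm{\omega}_0,\lambda_0)$, apply the triangle inequality:
\[
|V^\lambda(\bm{\omega})-V^{\lambda_0}(\bm{\omega}_0)|\le\frac{r_{\max}+|\lambda|}{1-\beta}\|\bm{\omega}-\bm{\omega}_0\|_1+\frac{|\lambda-\lambda_0|}{1-\beta}.
\]
The belief-Lipschitz constant is locally bounded in $\lambda$, so the right-hand side tends to $0$, and $V^\lambda(\bm{\omega})$ is jointly continuous.

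The main obstacle is justifying $V^\lambda=\sup_\phi J_\phi^\lambda$ over a class of policies that does not depend on $\lambda$, so that the affine-supremum argument applies. I expect this to be routine for a discounted belief MDP with bounded rewards. The value-iteration induction above is the fallback if it is not.
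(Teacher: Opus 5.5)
Your proof is correct. For Parts 1, 2 and 4 it follows the paper's route: the paper also writes $V^\lambda_\phi(\bm{\omega})=J_\phi(\bm{\omega})+\lambda H_\phi(\bm{\omega})$ with $0\le H_\phi\le\frac{1}{1-\beta}$, takes the supremum over policies for the subsidy-Lipschitz bound, proves convexity by the value-iteration induction you give as a cross-check, and gets joint continuity from the same triangle inequality.

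The genuine difference is Part 3. The paper conditions on the initial hidden state and writes $V^\lambda_\phi(\bm{\omega})=\sum_s\omega_s v^\lambda_\phi(s)$, with $|v^\lambda_\phi(s)|\le\frac{r_{\max}+|\lambda|}{1-\beta}$ for each fixed policy, then takes the supremum. This argument is model-free, and it is where the $|\lambda|$ in the constant comes from. However, it tacitly needs $\phi$ to range over rules on observation histories that do not depend on the initial belief, because the value of a belief-feedback policy is not linear in $\bm{\omega}$.

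You instead invoke Lemma 4 with $L_R\le r_{\max}$ and $c\le 1$. This uses the collapsing structure ($\delta(P^0)$ and $D_\Gamma$) and gives the sharper, $\lambda$-independent constant $\frac{L_R}{1-\beta c}\le\frac{r_{\max}}{1-\beta}$. That buys you two things:
\begin{enumerate}
\item Your Parts 1--2 can safely use stationary belief-feedback policies, since at a fixed $\bm{\omega}$ no linearity in the belief is needed.
\item Combined with Part 2, your bound makes $V^\lambda(\bm{\omega})$ globally jointly Lipschitz on $\Pi\times\mathbb{R}$. The stated constant only yields local joint Lipschitz continuity.
\end{enumerate}

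Your self-contained crude-span variant also closes. With $K=\frac{r_{\max}+|\lambda|}{1-\beta}$ and $\|V^\lambda_n\|_\infty\le K$, the recursion gives $L_{n+1}\le r_{\max}+\beta K\le K$.
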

\begin{proof}
\begin{enumerate}
    \item Proof convexity of value function in $\lambda$ is using the mathematical induction. 
The sketch is as follows. 
The pointwise maximum of affine function is a convex function. For detailed steps see \cite[Appendix B, Proof of Lemma $2$ Part $2$]{Meshram18}.
\item Let $\Phi$ denote the common class of admissible
history-dependent policies, independent of the
initial belief and subsidy parameter.
For each $\phi\in\Phi$, we write
\[
V_\phi^\lambda(\bm{\omega})
=
J_\phi(\bm{\omega})
+\lambda H_\phi(\bm{\omega}),
\]
For any $\pi\in\Pi$, define the expected
discounted reward without subsidy as
\[
J_\phi(\bm{\omega})
:=
\mathbb E_{\phi,\bm{\omega}}
\left[
\sum_{t=0}^{\infty}
\beta^t r(S_t,A_t)
\right],
\]
and the expected discounted passive
occupation time as
\[
H_\phi(\bm{\omega})
:=
\mathbb E_{\phi,\bm{\omega}}
\left[
\sum_{t=0}^{\infty}
\beta^t\mathbf{1}\{A_t=0\}
\right].
\]
Hence $0\le H_{\phi}(\bm{\omega})
\le\frac{1}{1-\beta}.$

 $V_{\phi}^{\lambda}(\bm{\omega})$ is convex in $\lambda$ and there exists $\lambda_L,\lambda_H \in \mathbb{R}$ such that it is convex on $(\lambda_L,\lambda_H).$  $V_{\phi}^{\lambda}(\bm{\omega})$ has right and left derivatives at each $\lambda \in (\lambda_L,\lambda_H).$ Further $V_{\phi}^{\lambda}(\bm{\omega})$ is Lipschitz and therefore absolutely continuous on each closed, bounded subinterval of $(\lambda_L,\lambda_H).$ Also, $V_{\phi}^{\lambda}(\bm{\omega})$ is differentiable except at a countable number of points. These results are well studied for convex function in \cite[Chapter $6,$ Section $6.6$]{Royden2010}.

For any  $\lambda,\lambda'$,
\[
|V_\phi^\lambda(\bm{\omega})
-V_\phi^{\lambda'}(\bm{\omega})|
\le
\frac{|\lambda-\lambda'|}{1-\beta}.
\]
Taking the supremum over policies yields
\begin{equation*}
\label{eq:subsidy-value-lipschitz}
|V^\lambda(\bm{\omega})
-V^{\lambda'}(\bm{\omega})|
\le
\frac{|\lambda-\lambda'|}{1-\beta}.
\end{equation*}
\item For each fixed policy $\phi$, conditioning
on the initial hidden state gives
\[
V_\phi^\lambda(\bm{\omega})
=
\sum_{s\in\mathcal S}
\omega_s v_\phi^\lambda(s),
\]
where $v_\phi^\lambda(s)$ denotes the
discounted value conditional on $S_0=s$.
By bounded rewards and geometric discounting, we have 
\[
|v_\phi^\lambda(s)|
\le
\frac{r_{\max}+|\lambda|}{1-\beta}.
\]
Further, as value function is also convex and Lipschitz in $\bm{\omega}$  we have 
\begin{align*}
|V_\phi^\lambda(\bm{\omega})
-V_\phi^\lambda(\bm{\omega}')|
\le
\frac{r_{\max}+|\lambda|}{1-\beta}
\|\bm{\omega}-\bm{\omega}'\|_1.
\end{align*}

Using
\[
\left|\sup_\phi f_\phi-\sup_\phi g_\phi\right|
\le
\sup_\phi|f_\phi-g_\phi|,
\]
we obtain
\begin{equation*}
\label{eq:belief-value-lipschitz}
|V^\lambda(\bm{\omega})
-V^\lambda(\bm{\omega}')|
\le
\frac{r_{\max}+|\lambda|}{1-\beta}
\|\bm{\omega}-\bm{\omega}'\|_1.
\end{equation*}

\item  Finally, the triangle inequality gives
\begin{align*}
|V^\lambda(\bm{\omega})
-V^{\lambda'}(\bm{\omega}')|
\le
|V^\lambda(\bm{\omega})
-V^{\lambda'}(\bm{\omega})| + \\ 
|V^{\lambda'}(\bm{\omega})
-V^{\lambda'}(\bm{\omega}')|.
\end{align*}
From preceding two steps, $V^\lambda(\bm{\omega})$
is joint Lipschitz and  
joint continuity follows immediately.

\end{enumerate}
\end{proof}

\section{Non-differentiability of Value function in $\lambda$ and Lipschitz constant}

We provide more general results when value function is not differentiable in $\lambda,$ but using subdifferential provide desired Lipschitz constant. 

First, we present background results.

\textbf{Ordinary Lipschitz continuity}
A map $f:\mathbb R^n\to\mathbb R$ is Lipschitz continuous on $X\subseteq\mathbb R^n$ if there exists $L\geq0$ such that
\begin{equation*}\label{eq:ordinary}
 |f(x)-f(y)|\leq L\|x-y\|,\qquad x,y\in X.
\end{equation*}

For convex nonsmooth functions, $\partial f(x)$ is generally a set. Thus a distance between sets is needed to study the continuity of $x\mapsto\partial f(x)$.

\textbf{Hausdorff distance} We provide definition and it will be used subsequently for our analysis. 
\begin{definition}[Distance from a point to a set]
For $b\in\mathbb R^m$ and nonempty $C\subseteq\mathbb R^m$, define
\[
\operatorname{dist}(b,C)=\inf_{z\in C}\|b-z\|.
\]
\end{definition}
\begin{definition}[Hausdorff distance]
For nonempty compact $C,D\subseteq\mathbb R^m$, the Hausdorff distance is
\begin{equation*}
\label{eq:hausdorff}
 d_H(C,D)=\max\left\{\sup_{a\in C}\operatorname{dist}(a,D),\;\sup_{b\in D}\operatorname{dist}(b,C)\right\}.
\end{equation*}
\end{definition}
This measures the greatest distance required to move an element of either set to the other set.

\begin{lemma}[Intervals]
If $C=[a_-,a_+]$ and $D=[b_-,b_+]$ are nonempty compact intervals in $\mathbb R$, then
\begin{equation*}
\label{eq:interval}
 d_H(C,D)=\max\{|a_--b_-|,\;|a_+-b_+|\}.
\end{equation*}
\end{lemma}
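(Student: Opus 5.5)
We prove the identity by computing the two one-sided suprema in the definition of $d_H$ separately and showing that each is attained at an endpoint. The first step is an explicit formula for the distance from a point $x\in\mathbb R$ to an interval $D=[b_-,b_+]$:
\[
\operatorname{dist}(x,D)=\max\{b_--x,\;0,\;x-b_+\}.
\]
This holds because the nearest point of $D$ to $x$ is the projection $\min\{\max\{x,b_-\},b_+\}$. The function $x\mapsto\operatorname{dist}(x,D)$ is therefore convex in $x$, being a maximum of affine functions. A convex function on the compact interval $C=[a_-,a_+]$ attains its supremum at an endpoint, so
\[
\sup_{a\in C}\operatorname{dist}(a,D)=\max\{\operatorname{dist}(a_-,D),\operatorname{dist}(a_+,D)\}.
\]
By symmetry the same holds with the roles of $C$ and $D$ exchanged.

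The second step shows that the resulting maximum of four endpoint distances equals $\max\{|a_--b_-|,|a_+-b_+|\}$. For the upper bound, we use $a_-\le a_+$ and $b_-\le b_+$ to get
\[
\operatorname{dist}(a_-,D)=\max\{b_--a_-,0,a_--b_+\}\le\max\{|a_--b_-|,\,a_+-b_+\}.
\]
This uses $a_--b_+\le a_+-b_+$ and $b_--a_-\le |a_--b_-|$. Similarly,
\[
\operatorname{dist}(a_+,D)\le\max\{b_--a_-,\,|a_+-b_+|\},
\]
and the symmetric inequalities hold for the endpoints of $D$. Hence every term is bounded by $M:=\max\{|a_--b_-|,|a_+-b_+|\}$.

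For the lower bound, suppose $M=|a_--b_-|$ and say $a_-<b_-$. Then $\operatorname{dist}(a_-,D)=b_--a_-=M$, since $a_-$ lies to the left of $D$. If instead $b_-<a_-$, the same argument gives $\operatorname{dist}(b_-,C)=M$. The case $M=|a_+-b_+|$ is handled the same way using the right endpoints. This gives $d_H(C,D)\ge M$, and combined with the upper bound, equality.

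The only mildly delicate point is the bookkeeping in the upper bound, where one must verify that terms like $a_--b_+$ never exceed $M$. The orderings $a_-\le a_+$ and $b_-\le b_+$ handle this. Degenerate intervals with $a_-=a_+$ or $b_-=b_+$ are covered by the same formulas, so no separate case is needed.
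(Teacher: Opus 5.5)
Your proof is correct and follows essentially the paper's route: the formula $\operatorname{dist}(x,D)=\max\{b_--x,0,x-b_+\}$, the fact that each directed deviation is attained at an endpoint, and then combining the two directed deviations. The paper's detailed version (in the proof of the Hausdorff--Lipschitz theorem) avoids your four-term bookkeeping by exchanging sup and max to get $\sup_{x\in C}\operatorname{dist}(x,D)=\max\{(b_--a_-)_+,(a_+-b_+)_+\}$ directly; your lower bound also leaves the case $a_-=b_-$ (or $a_+=b_+$) implicit, which is trivial since then $M=0$ or the other endpoint term carries the maximum.
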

\begin{proof}
For intervals, the largest directed deviation of either set from the other is attained at an endpoint. Taking the maximum of the two directed deviations yields desired result.
\end{proof}

\textbf{Hausdorff Lipschitz continuity} This is defined for set valued function and it is with respect to Hausdorff distance.
\begin{definition}
Let $F:X\rightarrow\mathbb R^m$ have nonempty compact values. The multifunction $F$ is Hausdorff-Lipschitz continuous on $X$ if some $L\geq0$ satisfies
\begin{equation}\label{eq:HL}
 d_H(F(x),F(y))\leq L\|x-y\|,\qquad x,y\in X.
\end{equation}
\end{definition}
The property requires the entire set $F(x)$ to vary at a controlled rate, not merely one selection from it.

\textbf{Subgradients and the convex sum rule}
For a proper convex function $f:\mathbb R^n\to(-\infty,+\infty]$, a vector $u$ belongs to $\partial f(x)$ precisely when
\[
 f(y)\ge f(x)+\langle u,y-x\rangle\quad\text{for every }y.
\]
In particular, $0\in\partial f(x)$ if and only if $x$ minimizes $f$ globally. 
If $f$ and $g$ are proper convex functions, then
\[
\partial f(x)+\partial g(x)\subseteq\partial(f+g)(x).
\]
\begin{proof}
For $u\in\partial f(x)$ and $v\in\partial g(x)$, add the two supporting-hyperplane inequalities to obtain
$(f+g)(y)\ge(f+g)(x)+\langle u+v,y-x\rangle$.
\end{proof}
Equality holds, for example, when $f$ and $g$ are proper lower-semicontinuous convex functions with intersecting relative interiors of their domains; in particular, it holds for finite convex functions on all of $\mathbb R^n.$


We have 
\begin{equation*}
\label{eq:sup}
V_\lambda(\bm\omega)=\sup_{\phi\in\Phi}\bigl\{J_{\phi}(\bm\omega)+\lambda H_{\phi}(\bm\omega)\bigr\}.
\end{equation*}
For compact subsets $C,D\subset\mathbb R$, their Hausdorff distance is
\begin{equation}
d_H(C,D)=\max\left\{\sup_{c\in C}\inf_{d\in D}|c-d|,\ \sup_{d\in D}\inf_{c\in C}|d-c|\right\}.
\end{equation}
In particular, $d_H([a,b],[c,d])=\max\{|a-c|,|b-d|\}$.

\textbf{Subdifferential representation} We now obtain the bound on subdifferential. 
\begin{proposition}[Universal subsidy sensitivity]\label{prop:basic}
For each $\bm\omega$, $\lambda\mapsto V_\lambda(\bm\omega)$ is finite, convex, nondecreasing, and $1/(1-\beta)$ Lipschitz. Moreover,
\begin{equation*}
\label{eq:basic}
\partial_\lambda V_\lambda(\bm\omega)\subseteq\left[0,\frac1{1-\beta}\right].
\end{equation*}
\end{proposition}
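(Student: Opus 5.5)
The plan is to work directly from the representation $V_\lambda(\bm\omega)=\sup_{\phi\in\Phi}\{J_\phi(\bm\omega)+\lambda H_\phi(\bm\omega)\}$, where $J_\phi$ and $H_\phi$ are as in the proof of Lemma~\ref{lem:value-continuity-convexity}. For fixed $\bm\omega$ and $\phi$, the map $\lambda\mapsto J_\phi(\bm\omega)+\lambda H_\phi(\bm\omega)$ is affine with slope $H_\phi(\bm\omega)\in[0,\frac{1}{1-\beta}]$.

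\textbf{Finiteness.} Since $|r|\le r_{\max}$, we have $|J_\phi(\bm\omega)|\le r_{\max}/(1-\beta)$. Hence each affine function is bounded above by $\frac{r_{\max}+|\lambda|}{1-\beta}$, uniformly in $\phi$, so the supremum is finite for every $\lambda$.

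\textbf{Convexity, monotonicity, Lipschitz property.} A pointwise supremum of affine functions is convex. Each affine function has nonnegative slope, so the supremum is nondecreasing. The $1/(1-\beta)$-Lipschitz bound is part~2 of Lemma~\ref{lem:value-continuity-convexity}, which follows from $\sup_\phi f_\phi-\sup_\phi g_\phi\le\sup_\phi|f_\phi-g_\phi|$ together with $0\le H_\phi\le\frac{1}{1-\beta}$.

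\textbf{Subdifferential inclusion.} This is the only step needing care. Because $V_\lambda(\bm\omega)$ is a finite convex function of $\lambda$ on all of $\mathbb R$, its subdifferential at $\lambda$ is the compact interval $[V'_-(\lambda),V'_+(\lambda)]$ of one-sided derivatives, which exist by the convex-function facts cited from \cite{Royden2010}. It therefore suffices to show $0\le V'_-(\lambda)\le V'_+(\lambda)\le\frac1{1-\beta}$. This can be done in either of two ways.

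\emph{Via difference quotients.} Monotonicity gives nonnegative difference quotients, and the Lipschitz bound gives $|V_{\lambda+h}-V_\lambda|/|h|\le\frac1{1-\beta}$. Passing to the limits $h\downarrow0$ and $h\uparrow0$ bounds both one-sided derivatives.

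\emph{Via the subgradient inequality.} Take $u\in\partial_\lambda V_\lambda(\bm\omega)$, so that $V_\mu\ge V_\lambda+u(\mu-\lambda)$ for all $\mu$.
\begin{itemize}
\item If $u<0$, let $\mu\to-\infty$. The right-hand side grows like $|u|\,|\mu|$, while $V_\mu\le V_\lambda$ by monotonicity. This is a contradiction, so $u\ge 0$.
\item If $u>\frac1{1-\beta}$, let $\mu\to+\infty$. Then $V_\mu\le V_\lambda+\frac{\mu-\lambda}{1-\beta}$ by the Lipschitz bound, which again contradicts the inequality, so $u\le\frac1{1-\beta}$.
\end{itemize}

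The second route avoids one-sided derivatives entirely, and I would present it as the main argument.

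\textbf{Main obstacle.} The only technical point is justifying the sup representation over the common policy class $\Phi$ independent of $\lambda$. This holds because $\lambda$ enters only the reward, and it was already used in Lemma~\ref{lem:value-continuity-convexity}. Once that is granted, the remaining steps are elementary.
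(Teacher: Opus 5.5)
Your proof is correct and matches the paper's argument: the sup-of-affine representation with slopes $H_\phi(\bm\omega)\in[0,\frac{1}{1-\beta}]$ gives finiteness, convexity, monotonicity and the Lipschitz bound, and your difference-quotient route to the subdifferential inclusion is exactly the paper's left/right-derivative argument. Your alternative subgradient-inequality route is also valid, and no limits are needed there: a single $\mu<\lambda$ (with monotonicity) and a single $\mu>\lambda$ (with the Lipschitz bound) already give the two contradictions.
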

\begin{proof}
Each summand in value function is affine in $\lambda$ with slope $H_\phi(\bm\omega)\in[0,(1-\beta)^{-1}]$. The supremum is convex and nondecreasing. For any $\lambda,\lambda'$,
\[
|V_\lambda(\bm\omega)-V_{\lambda'}(\bm\omega)|\le\sup_\phi H^\phi(\bm\omega)|\lambda-\lambda'|\le\frac{|\lambda-\lambda'|}{1-\beta}.
\]
Every convex subgradient lies between the left and right derivatives, which are in the same slope interval. Hence $\partial_\lambda V_\lambda(\bm\omega)\subseteq[0,\frac{1}{1-\beta}]$ \cite[Section 6.6]{Royden2010}.
\end{proof}

\begin{assum}
\label{ass:attain}
For each $(\bm\omega,\lambda)$ of interest, the supremum in value function is attained and the envelope representation holds:
\begin{equation}\label{eq:envelope}
\partial_\lambda V_\lambda(\bm\omega)=\mathrm{conv} \{H_{\phi}(\bm\omega):\phi\in\Phi_\lambda^\star(\bm\omega)\},
\end{equation}
where $\Phi_\lambda^\star(\bm\omega)$ is the set of maximizing policies, and the set on the right is closed. This holds,  if the relevant attainable reward-occupation pairs form a compact set. 
\end{assum}
By Assumption we have,
\begin{equation}\label{eq:endpoints}
\partial_\lambda V_\lambda(\bm\omega)=[H^-_\lambda(\bm\omega),H^+_\lambda(\bm\omega)],
\end{equation}
where $H^-_\lambda=\min_{\phi\in\Phi_\lambda^\star}H_\phi$ and $H^+_\lambda=\max_{\phi\in\Phi_\lambda^\star}H_\phi.$

\begin{proposition}
For finite convex function $V_\lambda(\bm\omega)$, we have 
\begin{equation}\label{eq:subinterval}
 \partial_\lambda V_\lambda(\bm\omega)
 =[H^-_\lambda(\bm\omega),H^+_\lambda(\bm\omega)].
\end{equation}
Consequently,
\begin{align}\label{eq:endpointdistance}
 &d_H\bigl(\partial_\lambda V_\lambda(\bm\omega),
 \partial_\lambda V_\lambda(\bm\omega')\bigr)\\
 &\quad=\max\bigl\{|H^-_\lambda(\bm\omega)-H^-_\lambda(\bm\omega')|,
 |H^+_\lambda(\bm\omega)-H^+_\lambda(\bm\omega')|\bigr\}.\notag
\end{align}
\end{proposition}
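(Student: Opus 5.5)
The plan is to reduce the statement to two facts: the subdifferential of a finite convex function of one real variable is a compact interval, and Assumption~\ref{ass:attain} identifies that interval with the occupation times of the maximizing policies. The Hausdorff formula then follows from the interval lemma.

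\textbf{Step 1: the subdifferential is a compact interval.} By Proposition~\ref{prop:basic}, $\lambda\mapsto V_\lambda(\bm\omega)$ is finite, convex and $1/(1-\beta)$-Lipschitz on $\mathbb R$. Hence the left and right derivatives $V'_-(\lambda)\le V'_+(\lambda)$ exist \cite[Section 6.6]{Royden2010}. For a convex function of one real variable, the subdifferential is exactly $[V'_-(\lambda),V'_+(\lambda)]$.

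This rests on two inclusions:
\begin{itemize}
\item[(i)] For $u$ in this interval, monotonicity of difference quotients gives $V_{\lambda'}\ge V_\lambda+u(\lambda'-\lambda)$ for $\lambda'>\lambda$ and for $\lambda'<\lambda$. So $u$ is a subgradient.
\item[(ii)] Conversely, a subgradient $u$ satisfies the one-sided quotient bounds. Letting $\lambda'\to\lambda^\pm$ gives $V'_-\le u\le V'_+$.
\end{itemize}
Thus $\partial_\lambda V_\lambda(\bm\omega)$ is a nonempty compact interval, contained in $[0,1/(1-\beta)]$.

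\textbf{Step 2: identify the endpoints.} By Assumption~\ref{ass:attain}, $\partial_\lambda V_\lambda(\bm\omega)=\mathrm{conv}\{H_\phi(\bm\omega):\phi\in\Phi^\star_\lambda(\bm\omega)\}$, and this set is closed. It is also bounded, since each $H_\phi(\bm\omega)\in[0,1/(1-\beta)]$. The closed convex hull of a bounded set of reals is the interval from its infimum to its supremum, and closedness makes both attained.

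So the endpoints equal $H^-_\lambda(\bm\omega)=\min_{\phi\in\Phi^\star_\lambda}H_\phi(\bm\omega)$ and $H^+_\lambda(\bm\omega)=\max_{\phi\in\Phi^\star_\lambda}H_\phi(\bm\omega)$. This proves \eqref{eq:subinterval}, and as a byproduct $H^\mp_\lambda=V'_\mp(\lambda)$. One can double-check this identification directly. For a maximizing $\phi$, $V_{\lambda'}\ge J_\phi+\lambda' H_\phi=V_\lambda+H_\phi(\lambda'-\lambda)$, so every $H_\phi$ with $\phi\in\Phi^\star_\lambda$ is a subgradient. That gives $[H^-,H^+]\subseteq[V'_-,V'_+]$, and the reverse inclusion is what the assumption supplies.

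\textbf{Step 3: the Hausdorff distance.} Apply the interval lemma with $C=[H^-_\lambda(\bm\omega),H^+_\lambda(\bm\omega)]$ and $D=[H^-_\lambda(\bm\omega'),H^+_\lambda(\bm\omega')]$; this gives \eqref{eq:endpointdistance} directly. If one wants the interval lemma verified, the check is short. The function $a\mapsto \operatorname{dist}(a,D)$ is convex, so its supremum over $C$ is attained at an endpoint $a_\pm$. Moreover, $\operatorname{dist}(a_-,D)\le|a_--b_-|$ and $\operatorname{dist}(a_+,D)\le|a_+-b_+|$, with the symmetric bounds for $D$, so $d_H\le\max\{|a_--b_-|,|a_+-b_+|\}$. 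For the reverse bound, suppose without loss of generality $a_-<b_-$. Then $\operatorname{dist}(a_-,D)=b_--a_-$, and the $+$ case is symmetric.

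\textbf{Main obstacle.} The only delicate point is the reverse inclusion in Step 2: that every subgradient is a convex combination of $H_\phi$ over maximizing policies, and that the extreme ones are attained. This is not derivable from convexity alone and is exactly what Assumption~\ref{ass:attain} provides. Everything else is elementary one-dimensional convex analysis.
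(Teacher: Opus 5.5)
Your proposal is correct and follows essentially the same route as the paper. You identify $\partial_\lambda V_\lambda(\bm\omega)$ with the interval $[V'_-(\lambda),V'_+(\lambda)]$ of one-sided derivatives, use the envelope representation of Assumption~\ref{ass:attain} to name its endpoints $H^\mp_\lambda(\bm\omega)$, and then apply the interval formula for the Hausdorff distance. You supply the details the paper leaves to citations, and you rightly state that the endpoint identification comes from Assumption~\ref{ass:attain} rather than from convexity alone; the paper's one-line proof, and the proposition's wording, leave this dependence implicit.
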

\begin{proof}
The subdifferential of a finite one-dimensional convex function is the interval between its left and right derivatives \cite[Theorem 27.4]{munkres2000topology}. Apply Lemma 2.3, the interval formula \eqref{eq:interval}, to these two intervals.
\end{proof}

\textbf{Four deterministic switching policies}
Let $n\ge1$ be a deterministic switching time, independent of the initial state and observation history. Direct summation gives Table~\ref{tab:policies}.
\begin{table}[ht]
\centering\small
\caption{Discounted passive occupation for deterministic schedules}\label{tab:policies}
\begin{tabular}{@{}p{0.29\linewidth}p{0.32\linewidth}p{0.28\linewidth}@{}}
\toprule
Policy & Actions & $H_\phi(\bm{\omega})$\\\midrule
Always passive & $A_t=0$ for all $t$ & $1/(1-\beta)$\\
Always active & $A_t=1$ for all $t$ & $0$\\
Passive then active & $A_t=0$ for $t<n$; $A_t=1$ for $t\ge n$ & $(1-\beta^n)/(1-\beta)$\\
Active then passive & $A_t=1$ for $t<n$; $A_t=0$ for $t\ge n$ & $\beta^n/(1-\beta)$\\
\bottomrule
\end{tabular}
\end{table}
For any such fixed policy,
\begin{equation}
\partial_\lambda V^\pi_\lambda(\bm{\omega})=\{H_\phi\},\quad d_H\bigl(\partial_\lambda V^\pi_\lambda(\bm{\omega}),\partial_\lambda V^\pi_\lambda(\bm{\omega}')\bigr)=0.
\end{equation}

\textbf{Fixed-policy Lipschitz bound}
\begin{lemma}\label{lem:fixed}
Let $\phi$ be a fixed history-dependent policy whose time-zero action is the same deterministic action $a_0\in\{0,1\}$ under every initial state. Then, for all $\bm{\omega},\bm{\omega}'\in\Pi$,
\begin{equation}\label{eq:fixed}
|H_\phi(\bm{\omega})-H_\phi(\bm{\omega}')|\le L_G\|\bm{\omega}-\bm{\omega}'\|_1,\qquad L_G:=\frac{\beta}{2(1-\beta)}.
\end{equation}
\end{lemma}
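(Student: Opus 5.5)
The plan is to exploit that, for a \emph{fixed} policy $\phi$, the passive occupation $H_\phi(\bm{\omega})$ is affine in the initial belief. Then, after factoring out the common time-zero term, apply the total-variation inequality already used in the proof of Lemma~\ref{Lemma:Uniform-Lip-sensitivity}. The key point is that $\phi$ maps observation/action histories to actions and does not depend on $\bm{\omega}$. So, exactly as in part 3 of Lemma~\ref{lem:value-continuity-convexity}, conditioning on the hidden initial state gives
\[
H_\phi(\bm{\omega})=\sum_{s=0}^{2}\omega^s h_\phi(s),\qquad
h_\phi(s):=\mathbb E_{\phi}\Bigl[\sum_{t=0}^{\infty}\beta^t\mathbf 1\{A_t=0\}\,\Big|\,S_0=s\Bigr].
\]
I would state this first and justify it by noting that the joint law of $(S_t,A_t,O_t)_{t\ge0}$ under $\phi$ started from $\bm{\omega}$ is the $\bm{\omega}$-mixture of the laws started from the point masses at $s$.

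Next, split off the $t=0$ term. Because the time-zero action $a_0$ is deterministic and the same for every initial state, $\mathbf 1\{A_0=0\}=\mathbf 1\{a_0=0\}$ is a constant. Hence
\[
h_\phi(s)=\mathbf 1\{a_0=0\}+\beta g_\phi(s),\qquad
g_\phi(s):=\mathbb E_{\phi}\Bigl[\sum_{t=1}^{\infty}\beta^{t-1}\mathbf 1\{A_t=0\}\,\Big|\,S_0=s\Bigr].
\]
Since the indicators lie in $\{0,1\}$, we have $0\le g_\phi(s)\le \frac{1}{1-\beta}$, and therefore $\operatorname{span}\{g_\phi(s)\}_{s}\le\frac{1}{1-\beta}$.

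Then, using $\sum_s(\omega^s-\omega^{s,\prime})=0$, the constant cancels and
\[
H_\phi(\bm{\omega})-H_\phi(\bm{\omega}')=\beta\sum_{s=0}^{2}(\omega^s-\omega^{s,\prime})g_\phi(s).
\]
Applying the total-variation inequality $\bigl|\sum_s(\omega^s-\omega^{s,\prime})g_s\bigr|\le\frac12\operatorname{span}(g)\,\|\bm{\omega}-\bm{\omega}'\|_1$ gives the claim with $L_G=\frac{\beta}{2(1-\beta)}$.

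The main obstacle I expect is the first step, the affine representation. One must make sure that $\phi$, although history dependent, is not allowed to depend on the prior $\bm{\omega}$ itself; the common admissible class $\Phi$ introduced in part 2 of Lemma~\ref{lem:value-continuity-convexity} provides exactly this. One must also make sure that the same randomization (if any) is used for all initial states, so that mixing over $S_0$ commutes with the expectation. The hypothesis that $a_0$ is state-independent is precisely what removes the undiscounted $t=0$ contribution. Without it the span of $h_\phi$ could be as large as $\frac{1}{1-\beta}$, and only the weaker constant $\frac{1}{2(1-\beta)}$ would follow.
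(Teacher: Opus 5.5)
Your proposal is correct and follows essentially the same route as the paper. You condition on $S_0$ to get the affine representation, peel off the constant time-zero term using the state-independent $a_0$, and apply the span (total-variation) inequality to the $\beta$-scaled tail occupations, which lie in $[0,\frac{1}{1-\beta}]$. The only cosmetic difference is that the paper re-derives that inequality via the midpoint-centering trick instead of citing it.
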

\begin{proof}
Conditioning on the initial hidden state $S_0=i$, define
\[
h_{\phi,i}=\mathbb{E}_{\phi,i}\!\left[\sum_{t=1}^\infty\beta^{t-1}\mathbf1_{\{A_t=0\}}\right],\quad 0\le h_{\phi,i}\le\frac1{1-\beta}.
\]
Because the policy is the same history-based rule, total expectation yields
\[
H_\phi(\bm{\omega})=\mathbf1_{\{a_0=0\}}+\beta\sum_i\bm{\omega}^i h_{\phi,i}.
\]
Set $d_i=\bm\omega^i-\bm\omega^{i,'},$ then $\sum_i d_i=0$.
Since both beliefs are probability vectors,
\[
\sum_{i=0}^{2}d_i
=
\sum_{i=0}^{2}\bm\omega^i
-
\sum_{i=0}^{2}\bm\omega^{i,'}
=0.
\]

For any real vector $h$, define $h_{\max}=\max_i h_i,$  $h_{\min}=\min_i h_i,$
and $\operatorname{span}(h)=h_{\max}-h_{\min}.$
Choose the midpoint $\hat{c}=\frac{h_{\max}+h_{\min}}{2}.$ 
Since $\sum_i d_i=0$, we have
\begin{align*}
\sum_i d_i h_i
&=\sum_i d_i h_i- \hat{c}\sum_i d_i\\
&=\sum_i d_i(h_i-\hat{c}).
\end{align*}

Applying the triangle inequality gives
\begin{align*}
\left|\sum_i d_i h_i\right|
&=\left|\sum_i d_i(h_i-\hat{c})\right|\\
&\le \sum_i |d_i|\,|h_i-\hat{c}|.
\end{align*}

Since $h_{\min}\le h_i\le h_{\max}$,
\[
|h_i-\hat{c}|
\le
\frac{h_{\max}-h_{\min}}{2}
=
\frac{\operatorname{span}(h)}{2}.
\]
Then 
\begin{align*}
\left|\sum_i d_i h_i\right|
&\le
\sum_i |d_i|
\frac{\operatorname{span}(h)}{2}\\
&=
\frac{\operatorname{span}(h)}{2}
\sum_i |d_i|\\
&=
\frac{\operatorname{span}(h)}{2}
\|\bm{\omega}-\bm{\omega}'\|_1.
\end{align*}

Thus
\begin{align*}
|H_\phi(\bm\omega)-H_\phi(\bm\omega')|\le\frac{\beta}{2}\mathrm{span}(h_\phi)\|\bm\omega-\bm\omega'\|_1 \\
\le
\frac{\beta}{2(1-\beta)}\|\bm\omega-\bm\omega'\|_1.
\end{align*}

\end{proof}

\textbf{Hausdorff-Lipschitz theorem for the optimal subdifferential}
\begin{assum}[Common optimal policy family]\label{ass:stable}
Fix a subsidy $\lambda$ and a subset $U\subseteq\Pi$. There exists a nonempty collection $\Phi^\star$ of \emph{fixed history-based policy rules} such that $\Phi_\lambda^\star(\bm\omega)=\Phi^\star$ for every $\bm\omega\in U$. Each $\phi\in\Phi^\star$ has a deterministic initial action independent of the initial state. The initial action may differ between policies.
\end{assum}
\begin{theorem}[Hausdorff-Lipschitz continuity]\label{thm:main}
Under Assumptions~\ref{ass:attain} and \ref{ass:stable}, for every $\bm\omega,\bm\omega'\in U$,
\begin{equation}\label{eq:main}
d_H\!\left(\partial_\lambda V_\lambda(\bm\omega),\partial_\lambda V_\lambda(\bm\omega')\right)\le\frac{\beta}{2(1-\beta)}\|\bm\omega-\bm\omega'\|_1.
\end{equation}
\end{theorem}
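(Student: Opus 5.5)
By Assumption~\ref{ass:attain} and the interval representation \eqref{eq:subinterval}, for each $\bm\omega\in U$ the subdifferential is the compact interval $[H^-_\lambda(\bm\omega),H^+_\lambda(\bm\omega)]$. Its endpoints are the minimum and maximum of $H_\phi(\bm\omega)$ over the optimal set $\Phi_\lambda^\star(\bm\omega)$. Assumption~\ref{ass:stable} lets us replace this set by the common family $\Phi^\star$ for every $\bm\omega\in U$. Hence
\[
H^-_\lambda(\bm\omega)=\min_{\phi\in\Phi^\star}H_\phi(\bm\omega),\qquad H^+_\lambda(\bm\omega)=\max_{\phi\in\Phi^\star}H_\phi(\bm\omega),
\]
with the same index set at $\bm\omega$ and at $\bm\omega'$. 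By the endpoint formula \eqref{eq:endpointdistance}, it then suffices to show that each endpoint map is $L_G$-Lipschitz on $U$, with $L_G=\beta/(2(1-\beta))$.

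The key step uses Lemma~\ref{lem:fixed}. Every $\phi\in\Phi^\star$ is a fixed history-based rule whose time-zero action is deterministic. So for each such $\phi$ and all $\bm\omega,\bm\omega'\in\Pi$,
\[
|H_\phi(\bm\omega)-H_\phi(\bm\omega')|\le L_G\|\bm\omega-\bm\omega'\|_1 .
\]
The constant $L_G$ does not depend on $\phi$; the policies may differ in their initial action, but that only shifts each $H_\phi$ by a belief-independent constant. The family $\{H_\phi\}_{\phi\in\Phi^\star}$ is therefore uniformly $L_G$-Lipschitz.

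Next I apply the elementary inequality
\[
\Bigl|\sup_\phi f_\phi-\sup_\phi g_\phi\Bigr|\le\sup_\phi|f_\phi-g_\phi|,
\]
already used in the proof of Lemma~\ref{lem:value-continuity-convexity}, together with its analogue for infima. This gives
\[
|H^\pm_\lambda(\bm\omega)-H^\pm_\lambda(\bm\omega')|\le\sup_{\phi\in\Phi^\star}|H_\phi(\bm\omega)-H_\phi(\bm\omega')|\le L_G\|\bm\omega-\bm\omega'\|_1 .
\]
Attainment of the min and max follows from the closedness part of Assumption~\ref{ass:attain}. Even without attainment the sup/inf argument goes through directly.

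Substituting into \eqref{eq:endpointdistance} yields \eqref{eq:main}. The main point needing care is the identification of the endpoints with extrema over the \emph{same} set $\Phi^\star$ at both beliefs. Without Assumption~\ref{ass:stable}, the optimal sets could change between $\bm\omega$ and $\bm\omega'$, and the sup-difference inequality would not apply. I will state this reduction explicitly and also note that the convex hull in \eqref{eq:envelope} does not alter the extreme values.
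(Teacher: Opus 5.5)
Your proposal is correct and follows essentially the same route as the paper. Both arguments identify $H^\pm_\lambda$ as the infimum and supremum of $H_\phi$ over the common family $\Phi^\star$, apply Lemma~\ref{lem:fixed} uniformly in $\phi\in\Phi^\star$, transfer the bound to the endpoints via $|\inf_\phi a_\phi-\inf_\phi b_\phi|\le\sup_\phi|a_\phi-b_\phi|$ (and its supremum analogue), and finish with the endpoint formula for the Hausdorff distance between compact intervals. The differences are cosmetic: the paper re-derives the interval Hausdorff identity inside the proof rather than citing it, and your remarks on attainment via closedness and on the convex hull not changing the extreme values make explicit what the paper leaves implicit.
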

\begin{proof}
By the envelope representation,
\[
H^-_\lambda(\omega)=\inf_{\phi\in\Phi^\star}H_\phi(\bm\omega),\qquad H^+_\lambda(\bm\omega)=\sup_{\phi\in\Phi^\star}H_\phi(\bm\omega).
\]
For two bounded indexed families $(a_\phi)$ and $(b_\phi)$ over the same index set,
\[
|\inf_\phi a_\phi-\inf_\phi b_\phi|\le\sup_\phi|a_\phi-b_\phi|,
\]
with the analogous inequality for suprema.

Lemma~\ref{lem:fixed} applies uniformly to every $\phi\in\Phi^\star$, so
\begin{align*}
|H^-_\lambda(\bm\omega)-H^-_\lambda(\bm\omega')|&\le L_G\|\bm\omega-\bm\omega'\|_1,\\
|H^+_\lambda(\omega)-H^+_\lambda(\omega')|&\le L_G\|\omega-\omega'\|_1.
\end{align*}

By the envelope representation,
\begin{align*}
\partial_\lambda V_\lambda(\bm\omega)
&=
[H^-_\lambda(\bm\omega),
 H^+_\lambda(\bm\omega)],\\
\partial_\lambda V_\lambda(\bm\omega')
&=
[H^-_\lambda(\bm\omega'),
 H^+_\lambda(\bm\omega')].
\end{align*}

Let $A=[a,b]$ and $B=[c,d]$ be two
nonempty compact intervals in $\mathbb R$.
By definition,
\begin{align*}
d_H(A,B)
=\max\Big\{
\sup_{x\in A}\operatorname{dist}(x,B),\sup_{y\in B}\operatorname{dist}(y,A)
\Big\},
\end{align*}
where
\[
\operatorname{dist}(x,B)
=\inf_{y\in B}|x-y|.
\]

For $x\in[a,b]$, we have
\[
\operatorname{dist}(x,B)
=
\max\{c-x,0,x-d\}.
\]
Consequently,
\[
\sup_{x\in A}\operatorname{dist}(x,B)
=
\max\{(c-a)_+,(b-d)_+\},
\]
where $(z)_+=\max\{z,0\}$.

Similarly,
\[
\sup_{y\in B}\operatorname{dist}(y,A)
=
\max\{(a-c)_+,(d-b)_+\}.
\]

It follows that
\begin{align*}
d_H(A,B)
&=
\max\{(c-a)_+,(a-c)_+,\\
&\qquad (b-d)_+,(d-b)_+\}\\
&=\max\{|a-c|,|b-d|\}.
\end{align*}

Applying this identity to the subdifferential
intervals gives
\begin{align*}
&d_H\bigl(
\partial_\lambda V_\lambda(\bm\omega),
\partial_\lambda V_\lambda(\bm\omega')
\bigr)\\
&=
\max\Big\{
|H^-_\lambda(\bm\omega)
-H^-_\lambda(\bm\omega')|,\\
&\qquad
|H^+_\lambda(\bm\omega)
-H^+_\lambda(\bm\omega')|
\Big\}.
\end{align*}

By Step 3, each endpoint difference is bounded
above by
\[
L_G\|\bm\omega-\bm\omega'\|_1,
\qquad
L_G=\frac{\beta}{2(1-\beta)}.
\]

Therefore,
\begin{align*}
&d_H\bigl(
\partial_\lambda V_\lambda(\bm\omega),
\partial_\lambda V_\lambda(\bm\omega')
\bigr)\\
&\le
\max\Big\{
L_G\|\bm\omega-\bm\omega'\|_1,\\
&\qquad
L_G\|\bm\omega-\bm\omega'\|_1
\Big\}\\
&=
L_G\|\bm\omega-\bm\omega'\|_1\\
&=
\frac{\beta}{2(1-\beta)}
\|\bm\omega-\bm\omega'\|_1.
\end{align*}

This establishes \eqref{eq:main}
and completes the proof.

\end{proof}


\begin{remark}
Assumption~\ref{ass:stable} is a sufficient policy-stability condition, not a primitive condition on rewards and transition matrices. The theorem is local to $U$ and is not automatically global on the simplex. Policy changes at boundaries require a separate compatibility argument.
\end{remark}


\begin{proposition}\label{prop:boundary}
Suppose two distinct fixed switching policies are uniquely optimal on opposite sides of a boundary point $\bm\omega^\star$, and their occupation measures have continuous limits there. If the optimal-value subdifferential is Hausdorff-continuous at $\bm\omega^\star$, then the two limiting occupation measures must agree.
\end{proposition}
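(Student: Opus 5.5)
The plan is to argue by contradiction, using the envelope representation under Assumption~\ref{ass:attain} together with the interval form \eqref{eq:subinterval} of the subdifferential. Denote the two fixed switching policies by $\phi_1$ and $\phi_2$. Let $\phi_1$ be uniquely optimal on one side of $\bm\omega^\star$, say on a set $U_1$ with $\bm\omega^\star\in\overline{U_1}$. Let $\phi_2$ be uniquely optimal on the opposite side $U_2$, with $\bm\omega^\star\in\overline{U_2}$. Uniqueness gives $\Phi_\lambda^\star(\bm\omega)=\{\phi_k\}$ for $\bm\omega\in U_k$, so by \eqref{eq:envelope} the subdifferential is a singleton there:
\[
\partial_\lambda V_\lambda(\bm\omega)=\{H_{\phi_k}(\bm\omega)\},\qquad \bm\omega\in U_k,\ k=1,2.
\]
By hypothesis the occupation measures have continuous limits at the boundary. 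Set $h_k:=\lim_{\bm\omega\to\bm\omega^\star,\,\bm\omega\in U_k}H_{\phi_k}(\bm\omega)$. If desired, one can note that these limits exist automatically, since Lemma~\ref{lem:fixed} shows each $H_{\phi_k}$ is $L_G$-Lipschitz on all of $\Pi$ and hence continuous.

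Next I use Hausdorff continuity of $\bm\omega\mapsto\partial_\lambda V_\lambda(\bm\omega)$ at $\bm\omega^\star$. Take sequences $\bm\omega_m\to\bm\omega^\star$ in $U_1$ and $\bm\omega'_m\to\bm\omega^\star$ in $U_2$. Continuity gives
\[
d_H\bigl(\{H_{\phi_1}(\bm\omega_m)\},\partial_\lambda V_\lambda(\bm\omega^\star)\bigr)\to 0,\qquad d_H\bigl(\{H_{\phi_2}(\bm\omega'_m)\},\partial_\lambda V_\lambda(\bm\omega^\star)\bigr)\to0 .
\]
Write $\partial_\lambda V_\lambda(\bm\omega^\star)=[H^-,H^+]$. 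For a singleton $\{x\}$, $d_H(\{x\},[H^-,H^+])=\max\{|x-H^-|,|x-H^+|\}$. This follows from the interval lemma applied with $C=[x,x]$. Passing to the limit then yields $h_1=H^-=H^+$ and $h_2=H^-=H^+$.

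Hence the subdifferential at $\bm\omega^\star$ collapses to a point, and $h_1=h_2$, which is the claimed agreement of the limiting occupation measures. Equivalently, $d_H(\{h_1\},\{h_2\})=|h_1-h_2|$ is bounded by the triangle inequality for $d_H$ through $\partial_\lambda V_\lambda(\bm\omega^\star)$, and both terms vanish.

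The delicate step is the passage from Hausdorff convergence to the endpoint identities. One must be careful that Hausdorff continuity at $\bm\omega^\star$ is genuinely two-sided in the metric sense. Upper semicontinuity alone would only give $h_k\in[H^-,H^+]$, which does not force $h_1=h_2$; it is the lower-semicontinuity half of $d_H$ that shrinks the interval to a point. I will therefore spell out both directed deviations explicitly rather than rely on the interval formula as a black box. A secondary point is that the value at $\bm\omega^\star$ itself need not come from either policy alone. This does not matter, since the argument only uses the limits from each side.
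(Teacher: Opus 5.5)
Your proof is correct and takes the same route as the paper. You get singleton subdifferentials on each side from uniqueness of the optimal policy, and Hausdorff continuity at $\bm\omega^\star$ then forces both singleton limits to equal $\partial_\lambda V_\lambda(\bm\omega^\star)$, which collapses to a single point. You make explicit what the paper leaves implicit: the reliance on the envelope representation of Assumption~\ref{ass:attain}, and the computation $d_H(\{x\},[H^-,H^+])=\max\{|x-H^-|,|x-H^+|\}$ showing why the limit set must be a single point.
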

\begin{proof}
Approach $\bm\omega^\star$ from either side. The subdifferentials at uniquely optimal points are singletons. Hausdorff continuity forces both singleton limits to equal the subdifferential at $\bm\omega^\star$, so their elements must coincide.
\end{proof}

\textbf{Conditional implication for a collapsing bandit}
 Write the passive-minus-active action-value difference as
\begin{equation}\label{eq:delta}
\Delta(\bm\omega,\lambda)=d(\bm\omega)+\lambda+\beta V_\lambda(\tau(\bm\omega))-\beta\sum_{i=0}^{2}\bm\omega^i V_\lambda(\Gamma_i).
\end{equation}
The function $\Delta$ is a difference of convex functions. Thus a convex subdifferential sum rule must \emph{not} be applied to \eqref{eq:delta}.

For a finite convex function, denote the left and right derivatives by $V'_{\lambda,-}$ and $V'_{\lambda,+}$. By \eqref{eq:endpoints}, these are $H^-_\lambda$ and $H^+_\lambda$, respectively. Taking one-sided derivatives in \eqref{eq:delta} gives
\begin{equation}\label{eq:derivative}
\Delta'_{\pm}(\bm\omega,\lambda)=1+\beta H^\pm_\lambda(\tau(\bm\omega))-\beta\sum_i\bm\omega^i H^\pm_\lambda(\Gamma_i).
\end{equation}

\begin{corollary}\label{cor:mono}
Assume the Hausdorff bound \eqref{eq:main} holds, with the same constant $L_G$, between $\tau(\bm\omega)$ and each $\Gamma_i$ for every subsidy in an interval $I$. Then
\begin{equation}\label{eq:lower}
\Delta'_{\pm}(\bm\omega,\lambda)\ge1-\beta L_G\sum_i\bm\omega_i\|\tau(\bm\omega)-\Gamma_i\|_1.
\end{equation}
In particular, if
\begin{equation}\label{eq:condition}
\frac{\beta^2}{2(1-\beta)}\sup_{
\bm\omega\in\Pi}\sum_i\bm\omega^i\|\tau(\bm\omega)-\Gamma_i\|_1<1,
\end{equation}
and the assumed Hausdorff bound applies to every pair of successor beliefs under consideration, then $\Delta(\bm\omega,\lambda)$ is strictly increasing in $\lambda\in I$ for every $\bm\omega$.
\end{corollary}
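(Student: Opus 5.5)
Starting from the one-sided derivative formula \eqref{eq:derivative}, I write the continuation part as a difference of endpoint slopes. Since $\sum_i\bm\omega^i=1$,
\[
\Delta'_{\pm}(\bm\omega,\lambda)=1-\beta\sum_{i=0}^{2}\bm\omega^i\bigl[H^\pm_\lambda(\Gamma_i)-H^\pm_\lambda(\tau(\bm\omega))\bigr].
\]
So it suffices to bound each bracket from above. Here I would use the interval description \eqref{eq:endpoints} of $\partial_\lambda V_\lambda$ together with the interval formula for the Hausdorff distance, \eqref{eq:endpointdistance}. These give, for each sign,
\[
|H^\pm_\lambda(\Gamma_i)-H^\pm_\lambda(\tau(\bm\omega))|\le d_H\bigl(\partial_\lambda V_\lambda(\Gamma_i),\partial_\lambda V_\lambda(\tau(\bm\omega))\bigr).
\]
The right-hand side is at most $L_G\|\tau(\bm\omega)-\Gamma_i\|_1$ by the assumed bound \eqref{eq:main}, applied between $\tau(\bm\omega)$ and each $\Gamma_i$. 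Substituting and summing with weights $\bm\omega^i\ge 0$ yields \eqref{eq:lower}. This first part is purely algebraic once the endpoint identification is in place.

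For the ``in particular'' statement, I insert $L_G=\beta/(2(1-\beta))$, so that $\beta L_G=\beta^2/(2(1-\beta))$. Then \eqref{eq:condition} implies that for every $\bm\omega$,
\[
\Delta'_{\pm}(\bm\omega,\lambda)\ge 1-\frac{\beta^2}{2(1-\beta)}D_{\Gamma,\tau}=:\kappa>0
\]
uniformly for $\lambda\in I$. The key point is that the sum is bounded by its supremum over $\Pi$, which is $D_{\Gamma,\tau}$ in the $\ell_1$ norm.

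It remains to pass from positive one-sided derivatives to strict monotonicity. By Lemma~\ref{lem:value-continuity-convexity}, $\lambda\mapsto V_\lambda(\bm\omega)$ is convex and $1/(1-\beta)$-Lipschitz for each fixed belief. Hence $\Delta(\bm\omega,\cdot)$ is Lipschitz, and in particular absolutely continuous on compact subintervals of $I$. It is differentiable except at countably many points, since it is a finite combination of convex functions of $\lambda$, with $\bm\omega$ held fixed. At every point of differentiability the derivative equals the common value $\Delta'_+=\Delta'_-\ge\kappa$. For $\lambda_1<\lambda_2$ in $I$, the fundamental theorem of calculus for absolutely continuous functions then gives
\[
\Delta(\bm\omega,\lambda_2)-\Delta(\bm\omega,\lambda_1)=\int_{\lambda_1}^{\lambda_2}\Delta'\ge\kappa(\lambda_2-\lambda_1)>0.
\]
Alternatively, I would cite the elementary fact that a continuous function whose right derivative is positive everywhere on an interval is strictly increasing.

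The main obstacle is the one-sided derivative step behind \eqref{eq:derivative}. $\Delta$ is a difference of convex functions, so I must not add subdifferentials. Instead I take right (or left) difference quotients term by term, which is legitimate because each term is a finite convex function of $\lambda$ and therefore has one-sided derivatives. Under Assumption~\ref{ass:attain} these one-sided derivatives equal $H^\pm_\lambda$ at the beliefs $\tau(\bm\omega)$ and $\Gamma_i$. The minus sign in front of the active term is harmless: one-sided derivatives are linear under scalar multiplication and addition even though subdifferentials are not. So the right derivative of $-\beta\sum_i\bm\omega^iV_\lambda(\Gamma_i)$ is exactly $-\beta\sum_i\bm\omega^iH^+_\lambda(\Gamma_i)$.
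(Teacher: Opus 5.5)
Your proposal is correct and follows the paper's proof essentially step for step. You rewrite the sensitivity difference in \eqref{eq:derivative} as an $\bm\omega$-weighted sum of endpoint differences, bound each one by the Hausdorff distance between the interval subdifferentials, and then get strict increase from the Lipschitz (hence absolute) continuity of $\Delta(\bm\omega,\cdot)$ by integrating its almost-everywhere derivative. Your added remark that one-sided derivatives, unlike subdifferentials, are additive across the difference of convex terms is the justification the paper leaves implicit behind \eqref{eq:derivative}.
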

\begin{proof}
For either sign, write the difference of sensitivities in \eqref{eq:derivative} as
\[
H^\pm_\lambda(\tau(\bm\omega))-\sum_i\bm\omega^iH^\pm_\lambda(\Gamma_i)
=\sum_i\bm\omega_i\bigl[H^\pm_\lambda(\tau(\bm\omega))-H^\pm_\lambda(\Gamma_i)\bigr].
\]
The Hausdorff bound between intervals bounds each endpoint difference by $L_G\|\tau(\omega)-\Gamma_i\|_1$. This yields \eqref{eq:lower}. Under \eqref{eq:condition}, both one-sided derivatives have a uniform positive lower bound. Since $\Delta$ is locally Lipschitz in $\lambda$, it is absolutely continuous on compact intervals \cite[Section 6.4]{Royden2010}; integrating its almost-everywhere derivative proves strict increase.
\end{proof}

\begin{remark}
Strict increase of $\Delta$ in the subsidy provides the single-crossing direction for passive optimality. 
\end{remark}

\subsection{Various bounds on $D_{\Gamma, \tau}$ under different model assumptions}

We next investigate the sufficient condition on $D_{\Gamma,\tau}$ and illustrate the trade-off between the discount parameter and the discrepancy parameter $D_{\Gamma,\tau}.$ We note that the sufficiency condition for indexability is $\beta L_G D_{\Gamma,\tau}\leq 1.$  Furhter for $L_G = \frac{\beta}{2 (1-\beta)}$ we obtain $\beta^2D_{\Gamma,\tau}
\leq 2(1-\beta).$
This inequality characterizes a trade-off between the
discount factor and the discrepancy between the
posterior and predicted beliefs.
A smaller $D_{\Gamma,\tau}$ implies a larger discount
factor, whereas a larger discount factor requires a
smaller posterior-prediction discrepancy.
Since $\Gamma_s$ and $\tau(\bm{\omega})$ are
probability vectors, $\left\|
\Gamma_s-\tau(\bm{\omega})
\right\|_1
\leq 2.$ Hence $0\leq D_{\Gamma,\tau}\leq 2.$








\textbf{Universal Discount-Factor Threshold}

Since $D_{\Gamma,\tau}\leq 2,$ 
a sufficient condition valid for every
three-state transition model is $2\beta^2\leq 2(1-\beta).$ This is equivalent to solving 
\[
\beta^2+\beta-1\leq 0.
\]

The positive root is $\beta_*=\frac{\sqrt{5}-1}{2}.$
Therefore, $0<\beta\leq
\frac{\sqrt{5}-1}{2}
\approx 0.6180$
guarantees that $\beta^2D_{\Gamma,\tau}
\leq 2(1-\beta)$ for every  $D_{\Gamma,\tau}\in[0,2].$ Thus $\beta_*=\frac{\sqrt{5}-1}{2}$ is independent of the transition probability matrices. 
Hence, for $\beta\leq\beta_*$,
the numerical sufficient condition  holds
without imposing additional restrictions
on the transition matrices.

\textbf{Maximum allowable posterior-prediction discrepancy $D_{\Gamma,\tau}$}
For a fixed discount factor $\beta$,
the sufficient condition 
can be written as follows
\begin{equation*}
\label{eq:D-max-beta}
D_{\Gamma,\tau}
\leq
D_{\max}(\beta)
:=
\frac{2(1-\beta)}{\beta^2}.
\end{equation*}
This provides an explicit restriction on
the transition matrices for any chosen
discount factor.

\textbf{Special case: Identical transition matrices}

Suppose $P^0=P^1=P.$
Then $c_0=c_1=\delta(P).$
Therefore, $D_{\Gamma,\tau}
\leq
\frac43\delta(P).$
A sufficient condition  is
\[
\beta^2\frac43\delta(P)
\leq2(1-\beta).
\]
Rearranging, $\delta(P)
\leq
\frac{3(1-\beta)}{2\beta^2}.$

\textbf{Asymptotic interpretation as $\beta\to1$}

The sufficient condition requires $D_{\Gamma,\tau}
\leq
\frac{2(1-\beta)}{\beta^2}.$
As $\beta\to1$, $\frac{1}{\beta^2}\to1.$
Therefore, $\frac{2(1-\beta)}{\beta^2}
\sim
2(1-\beta),$ as $\beta\to1.$
This implies $D_{\Gamma,\tau}
=
O(1-\beta).$
is sufficient to satisfy the required
scaling, provided its leading constant
is small enough.
This means that the posterior-prediction discrepancy must decrease approximately linearly with $1-\beta$ as the discount
factor approaches one.



\textbf{For long-horizon}
For $\beta=0.95$, the condition requires $D_{\Gamma,\tau}\leq0.1108.$
This may be satisfied when the transition
matrices are strongly mixing and their
corresponding rows are sufficiently close.

For $\beta=0.99$,  the condition is $D_{\Gamma,\tau}\leq0.0204.$
Thus, for  decision makers with long horizon length (patients decision maker),
the sufficient condition becomes more restrictive.

\end{document}